%% file: main_arxiv.tex
\documentclass[12pt]{article}
\usepackage{amsmath, amssymb}
\usepackage{graphicx}
\usepackage{enumerate}
\usepackage{natbib}
\usepackage{overpic}
\usepackage{subfiles}
\usepackage{hyperref}
\usepackage{cleveref}

\newcommand{\blind}{1}

\usepackage{enumitem}
\usepackage{mathrsfs, booktabs, amsthm, amsfonts, bbold}
\usepackage{accents}

\usepackage{xcolor}
\protected\def\[#1\]{\begin{equation}\begin{aligned}#1\end{aligned}\end{equation}}
\protected\def\(#1\){\begin{equation*}\begin{aligned}#1\end{aligned}\end{equation*}}

\newtheorem{theorem}{Theorem}

\theoremstyle{remark}
\newtheorem{remark}{Remark}

\theoremstyle{definition}

\newtheorem{example}{Example}

\usepackage{subcaption}
\usepackage{float}

\usepackage{comment}

\theoremstyle{definition}

\begin{document}

\def\spacingset#1{\renewcommand{\baselinestretch}
{#1}\small\normalsize} 
\spacingset{1}

\if1\blind
{
  \title{\bf Bayesian Distance-to-Set Models: \\ from Latent Variable to Latent Projection}
  \author{
	Leo Duan \thanks{\href{email:li.duan@ufl.edu}{li.duan@ufl.edu}}\\
	Department of Statistics, University of Florida\\
  AND\\
Yuexi Wang \thanks{\href{email:yxwang99@illinois.edu
}{yxwang99@illinois.edu}}
\\
Department of Statistics, University of Illinois Urbana-Champaign\\
AND\\
  Jason Xu\thanks{\href{email:jqxu@g.ucla.edu}{jqxu@g.ucla.edu}}\\
  Department of Biostatistics, University of California, Los Angeles
    }
    \date{}
  \maketitle
  \vspace{-0.5in}
} \fi

\if0\blind
{
  \bigskip
  \bigskip
  \bigskip
  \begin{center}
    {\LARGE\bf }
\end{center}
  \medskip
} \fi

\bigskip

\begin{abstract}
    Statistical models often assume that data are generated near a structured, smooth,
    or low-dimensional set. A common approach is to use Bayesian latent variable models,
    in which each observation is associated with a latent coordinate on the set, and the
    observed data are modeled as noisy deviations from these coordinates. The deviation
    is typically characterized by a location-scale distribution, such as Gaussian.
    Despite their intuitive appeal and popularity, latent variable models often present
    practical challenges in posterior computation. In particular, Markov chain Monte Carlo
    samplers may suffer from slow mixing, especially when the sample size is large and there
    is no closed form for integrating out the latent coordinates. In this article, we propose
    an alternative approach that replaces the deviation-from-coordinate with a distance-to-set.
    Specifically, the distance-to-set is defined as the distance between a data point and its
    projection onto the set, where the projection can be rapidly computed by optimization and
    replaces the latent coordinate in the likelihood. This change substantially reduces the
    dimensionality of the parameter $\times$ latent variable space, leading to efficient
    posterior computation. We establish several important statistical properties for the
    distance-to-set models, such as the independence between the normal-cone noise and fixed-effect
    parameters, posterior consistency, and an Occam's razor effect that automatically penalizes
    overfitting. We demonstrate the effectiveness of our approach through simulation studies, applications to
    multi-environment study and Bayesian transfer learning.
\\
The implementation source code for the models in this article can be found at \\
\url{https://github.com/leoduan/distance-to-set-models}.
\end{abstract}
\noindent
{\it Keywords:} Conditionally Dependent Latent Variables, Optimization-based Likelihood, Projection Duality, 
 Uncertainty in Transfer Learning.
\vfill

\addtolength{\textheight}{-0.8in}
\addtolength{\topmargin}{0.5in}

\newpage

\spacingset{1.8}

\subfile{main_text}

\end{document}

%% file: main_text.tex
\section{Introduction}

Latent variable models are a central tool in statistics and machine learning, with canonical examples including finite mixture models, random-effects models, network models and factor models \citep{fraley2002model,hoff2002latent,dunson2006bayesian}. These models introduce latent variables to capture certain structures in the data generating mechanism. Formally, let
$\mathcal Y$ denote the observation space and
$\mathcal Z$ denote the latent variable space. We denote by $y=(y_1,\ldots,y_n)\in \mathcal Y^n$ the observed data, by $\theta$ the parameter of interest, and by $z = (z_1,\dots,z_n) \in \mathcal Z^n$ the latent variables, where both $y_i$ and $z_i$ may be multivariate.
The marginal likelihood for $y$, under a latent variable model, is typically given by
\[\label{eq:canonical_augmented_likelihood}
\mathcal L(y\mid \theta) = \int \mathcal L(y\mid \theta, z) \pi_{\mathcal L}(z\mid \theta) \text{d}z.
\]
where $\pi_{\mathcal L}$ is the latent variable distribution, and the corresponding augmented likelihood (or complete-data likelihood) is $\mathcal L(y, z\mid \theta) := \mathcal L(y\mid \theta, z) \pi_{\mathcal L}(z\mid \theta)$. Note that we do not consider $\pi_{\mathcal L}(z\mid \theta)$ as a prior distribution for $z$, but as part of the likelihood that describes the unobserved data $z$.

The framework in \eqref{eq:canonical_augmented_likelihood} admits a two-layer generative process interpretation: latent variables are first drawn from $\pi_{\mathcal L}(z\mid \theta)$, and observations are subsequently generated from $\mathcal L(y\mid \theta, z)$. This viewpoint is often useful for model specification and interpretation.
In particular, when both $z_i$ and $y_i$ are continuous, which is also the main focus of this work, one may think of $\mathcal Z$ as a structured, smooth or low-dimensional set, and each $z_i$ as a coordinate on that set. The corresponding $y_i$ is then modeled as a stochastic deviation from $z_i$, often through a location-scale model for $\mathcal L(y\mid \theta, z)$, such as $y_i \stackrel{\text{indep}}{\sim} \text{N}( g(z_i), I\sigma)$ with $g$ a deterministic function and $\sigma>0$. Such a formulation appears in a wide range of applications, including
hierarchical latent variable models \citep{blei2014build},  Gaussian process latent variable models \citep{lawrence2005probabilistic,titsias2010bayesian} and locally linear  models \citep{park2015bayesian}, probabilistic principal surface models \citep{chang2001principal}, latent Gaussian models \citep{rue2009approximate}, functional manifold models \citep{chen2012nonlinear,lila2016smooth}, and probabilistic  factorization models \citep{mnih2008probabilistic}, to name a few.

For statistical inference, quantifying the uncertainty of the parameter $\theta$ is a key objective. In the Bayesian setting, this is typically done by sampling from the posterior distribution $\Pi(\theta\mid y)$. In most cases, however, the integral in \eqref{eq:canonical_augmented_likelihood} does not have a closed form, and one needs to resort to data augmented Markov chain Monte Carlo (MCMC) that targets the joint posterior $\Pi(\theta, z \mid y)$ \citep{tanner1987calculation,van2001art}. A chief challenge is that the joint posterior becomes supported in $\Theta \times \mathcal Z^n$, whose dimension grows with the sample size $n$ even when $\Theta$ itself is low-dimensional.
The high dimensionality introduced by the augmented space often leads to slow mixing of MCMC algorithms and affects the quality of the posterior estimate.

Several remedies have been proposed to address these computational issues. First, for specific classes of latent variable models, one may design customized MCMC algorithms with substantially improved mixing performance \citep{papaspiliopoulos2007general,duan2018scaling,rajaratnam2019uncertainty,zens2024ultimate}. Second, when the primary interest lies in  $\pi(z\mid y)$ and the latent variable distribution $\Pi_{\mathcal L}$ is Gaussian, one may approximate the posterior $\Pi(\theta\mid y)$ for instance by Laplace approximation, leading to tractable approximate marginal $\Pi(z\mid y)$ via integrated nested Laplace approximation (INLA) \citep{rue2009approximate}. Third, when inference focuses on $\Pi(z\mid y, \hat\theta)$ at some value of $\hat\theta$, amortized variational inference \citep{kingma2013auto} provides a scalable alternative by employing a flexible parameterization (such as a neural network) $\tilde\Pi(z\mid y, \hat\theta, \gamma)$ and optimizing $(\hat\theta, \hat\gamma)$ to minimize a statistical divergence between the latent and variational distributions.

While effective in many applications, these approaches are often either model-specific, requiring bespoke approaches, or rely on approximations that may lead to spurious inference. This motivates an alternative formulation that preserves the geometric structure of latent variable models while yielding a marginal $\mathcal L(y\mid \theta)$ that is tractable (or close-to-tractable).  The proposed approach also facilitates direct posterior inference for $\theta$ via $\Pi(\theta \mid y)$, while still supporting interpretable low-dimensional representations of the data through $\Pi(z_i\mid y_i, \theta)$.

In this work, we show that such an ideal scenario is in fact achievable with broad generality, by modifying how we view the deviation of $y_i$ from the latent space $\mathcal Z$. Instead of modeling the deviation between $y_i$ and a certain coordinate $z_i$, for instance by some location-scale model, we model the distance between $y_i$ and $\mathcal Z$ directly, as shown in \Cref{fig:generative_view}.

\begin{figure}[H]
  \centering
  \begin{subfigure}[t]{0.48\textwidth}
      \centering
      \includegraphics[width=\textwidth]{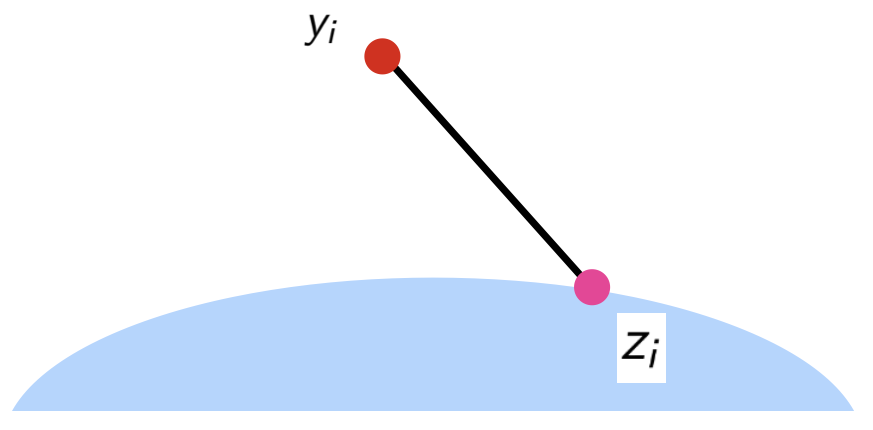}
      \caption{Canonical latent variable model: $z_i$ is drawn on $\mathcal Z_\theta$, then an unconstrained deviation $\epsilon_i$ is added to form $y_i=z_i+\epsilon_i$.}
  \end{subfigure}
  \hfill
  \begin{subfigure}[t]{0.48\textwidth}
      \centering
      \includegraphics[width=\textwidth]{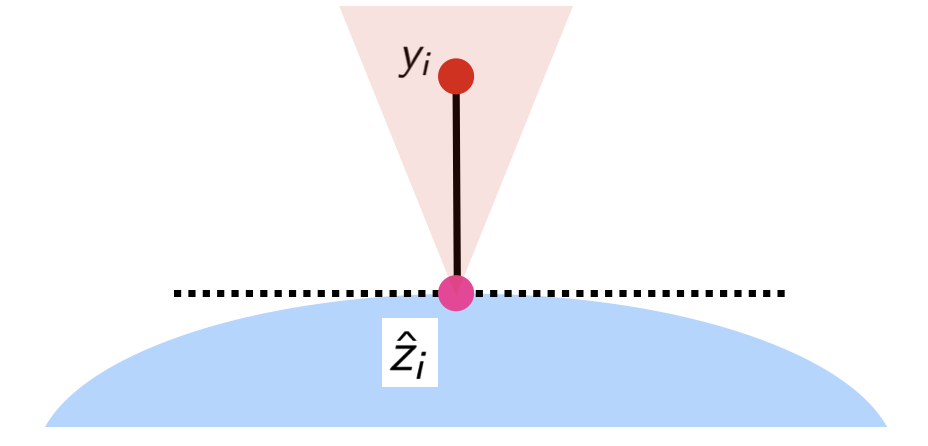}
      \caption{Distance-to-set model: $\hat z_i$ is drawn on $\mathcal P_{\mathcal Z_\theta}(\mathcal Y)$, then a deviation $\epsilon_i$ is drawn in a constrained space (red shadowed area, normal cone if $\mathcal Z_\theta$ is convex), so that $ \hat z_i + \epsilon_i$ is projectable to $\hat z_i$.}
  \end{subfigure}
  \caption{Comparison of the generative views for the canonical latent variable model and the distance-to-set model.  The former considers $z_i$ drawn from $\mathcal Z_\theta$, and $\epsilon_i$ often unconstrained, while the latter assumes $\hat z_i$ drawn from the subset $\mathcal P_{\mathcal Z_\theta}(\mathcal Y)\subseteq\mathcal Z_\theta$, and $\epsilon_i$ is drawn in a geometry-induced displacement space (more details in \Cref{sec:generative_view}).} \label{fig:generative_view}
\end{figure}

The distance from a point to a set is the shortest distance between $y_i$ and any point in $\mathcal Z$. This treatment eliminates the need to specify a latent distribution $\pi_L(z\mid\theta)$, thereby allowing inference to proceed directly through the marginal likelihood. At the same time, the \textit{projection} of $y_i$ onto the set $\mathcal{Z}$, denoted by $\hat{z}_i$, arises naturally as the minimizer in the distance calculation. This quantity admits a clear interpretation as a low-dimensional representation of $y_i$ on $\mathcal{Z}$, can be computed efficiently for fixed $\theta$ in many scenarios of statistical interest, and inherits a posterior distribution through the uncertainty in $\theta$. In this sense, the proposed framework retains the interpretability of latent variable models while avoiding their reliance on explicit latent coordinates.

\section{Method}

\subsection{Distance-to-set model}
Suppose each data point $y_i$ lies in the ambient space $\mathcal Y \subset \mathbb{R}^d$, and is generated near a latent space $\mathcal Z_\theta$. We consider the following likelihood $\mathcal L$, which characterizes how $y_i$ deviates from $\mathcal Z_\theta$:
\[\label{eq:distance_to_set_model}
  \text{dist}(y_i, \mathcal Z_\theta) = \min_{z_i\in \mathcal Z_\theta} \|y_i - z_i\|, \qquad
  \mathcal L(y_i ; \theta, \sigma) =   \phi [ \text{dist}(y_i, \mathcal Z_\theta); \sigma]
\]
where $\mathcal Z_\theta$ is a closed set so that the minimum is attainable. Here $\phi(\text{dist}; \sigma)$ is a kernel function that is monotonically non-increasing in the distance, and $\sigma>0$ is a scalar that controls the rate at which the likelihood decays as $y_i$ moves away from the set $\mathcal Z_\theta$. In this paper, we focus on the following Gaussian distance kernel:
\[\label{eq:gaussian_kernel}
\phi(\text{dist}; \sigma) = m_{\theta,\sigma}^{-1} \exp(-\text{dist}^2/\sigma)
\]
where $m_{\theta,\sigma}$ is the normalizing constant. We refer to $\phi$ as a distance {\em kernel}, since the above likelihood is still interpreted as a probability density/mass function for $y_i$, rather than for the distance itself. The distance \eqref{eq:distance_to_set_model} may be generated via any norm $\|\cdot\|$; for simplicity we primarily focus on the Euclidean norm $\|\cdot\|_2$ with $(y_1,\ldots,y_n)$ i.i.d. from \eqref{eq:distance_to_set_model}. One can generalize to the non-iid setting by using a dependent distance kernel or a non-separable distance between $(y_1,\ldots,y_n)$ and $ \times_{i=1}^n \mathcal  Z^{(i)}_\theta$. We discuss these extensions in \Cref{sec:theory}.

To illustrate these model concepts, we now turn to a simple example involving a translated $\ell_1$-ball.
\begin{example}[\bf Multivariate mixed-effects model with sparse random effect]\label{ex:multi_mix_effects}
Consider the multivariate model
$$y_i = \mu + \gamma_i + \epsilon_i, \qquad y_i\in \mathbb{R}^d, \, i=1,\ldots,n $$
where $\mu\in \mathbb{R}^d$ is the parameter of interest and $\gamma_i\in \mathbb{R}^d$ is a random effect for the $i$-th data point, which we expect to be sparse. Typically, $\mu$ may have additional structure such as covariate dependence; our focus in this exposition is on inducing sparsity in $\gamma_i$.

We define the following latent space:
\[\label{eq:sparse_latent_space}
\mathcal Z_\theta = \bigg \{ \mu + \gamma_i:  \sum_{j=1}^d |\gamma_{ij}| \le r \bigg \},
\]
where the subscript $\theta=(\mu, r)$ denotes the parameter that determines the latent space, and $r>0$ denotes the radius around this set. Effectively, changing the random effect $\gamma_i$ is akin to changing coordinate $z_i$ on the set $\mathcal Z_\theta$.

In this case, the Euclidean projection of  $y_i$ onto $\mathcal Z_\theta$ is available in closed form through soft-thresholding. Denote $\hat z_i= \mu+\hat\gamma_i$: we have
\[
& \text{dist}(y_i, \mathcal Z_\theta) = \|y_i - (\mu + \hat\gamma_i)\|_2,\\
&  \hat\gamma_{ij} = \operatorname{sign}(y_{ij}-\mu_j)\,\max\bigl(|y_{ij}-\mu_j|-\lambda_i,\,0\bigr),
\]
where $\lambda_i\ge 0$ is chosen such that $\sum_{j=1}^d |\hat\gamma_{ij} |  = r$ when $y_i\not\in \mathcal Z_\theta$, and $\lambda_i=0$ when $y_i\in \mathcal Z_\theta$. The resulting joint likelihood is $\mathcal L(y; \theta, \sigma) \propto \prod_{i=1}^n\sigma^{-d/2}\exp( - \text{dist}^2(y_i, \mathcal Z_\theta)/\sigma)$, where $\mathcal Z_\theta$ does not depend on $y_i$. Specifically, the squared distances add up in the exponents $\sum_{i=1}^n (y_i-\hat z_i)^2$, with $\hat z_i = \mu +  \hat\gamma_i$ and $ \hat\gamma_i$ is a sparse vector function of $y_i$ and $\theta$.

Under a conventional latent-variable model, each observation-specific latent variable $\gamma_i$ would need to be sampled  jointly with $\mu$ and $r$. Here, by contrast, the sparse representation $\hat\gamma_i$ is obtained directly by projection, which is available in closed form. The likelihood therefore depends on $(\mu,r,\sigma)$ through the projected distances, without requiring high-dimensional latent-variable augmentation.

\hfill $\blacksquare$
\end{example}

Generalizing from the last example, we highlight three useful features of this distance-to-set formulation. First, it bypasses the need to sample latent coordinates. Instead, for each observation it computes a projection:
\[
  \hat z_i \in \mathcal Z_\theta: \|\hat z_i-y_i\|=\text{dist}(y_i, \mathcal Z_\theta),
\]
which we denote by $\hat z_i \in \mathcal P_{\mathcal Z_\theta}(y_i)$.
 Even when the projection is not unique, any element $\hat z_i \in \mathcal P_{\mathcal Z_\theta}(y_i)$ would suffice for evaluating the likelihood. Thus, \eqref{eq:distance_to_set_model} is already a marginal likelihood of $y_i$ given $(\theta, \sigma)$, whereas $\hat z_i$ is only an auxiliary variable arising from the distance calculation.

Second, despite their auxiliary nature, $\hat z_i$ can be interpreted as an embedding for $y_i$ --- a model-based smoothing of $y_i$ in the space of $\mathcal Z_\theta$. Therefore, we refer to $\hat z_i$ as a {\em latent projection}. Once uncertainty in $\theta$ is incorporated, $\hat z_i$ and $\mathcal P_{\mathcal Z_\theta}(y_i)$ also become stochastic.

Third, the projection only depends on the specification of $\mathcal Z_\theta$ and the norm that defines the distance. This gives flexibility in choosing the distance kernel $\phi$, while leaving projection  $\mathcal P_{\mathcal Z_\theta}(y_i)$ unchanged. For example, one could use the set \eqref{eq:sparse_latent_space} for inducing sparsity in the latent projection, while adopting a richer hierarchical model for other parts of the data-generating mechanism.

\subsection{Generative view for the latent projection and its deviation}\label{sec:generative_view}
We now complete a Bayesian perspective by introducing a generative view for the latent projection and its deviation. The purpose of this construction is to clarify the probabilistic meaning of the projection and the constrained residual structure induced by the distance-to-set formulation.

Suppose ($\theta, \sigma$) are generated from a prior distribution $\Pi_0(\theta, \sigma)$, and $\theta$ determines the space $\mathcal Z_\theta$. Let $\mathcal P_{\mathcal Z_\theta}(\mathcal Y)$ denote the image of the observation space under the projection onto $\mathcal Z_\theta$. Consider  the hierarchical data generating process
\[\label{eq:two_layer_generative_process}
 (\hat z_i\mid \theta,\sigma) \sim q(\cdot; \theta,\sigma)  \quad
   (\epsilon_i \mid \hat z_i,\sigma) \sim h(\cdot; \sigma, \hat z_i), \quad
   y_i = \hat z_i + \epsilon_i.
\]
where $q$ is a proper distribution supported on $P_{\mathcal Z_\theta}(\mathcal Y)$, and $h$ is a proper distribution supported on the constrained space $\{v: \hat z_i \in \mathcal P_{\mathcal Z_\theta}(\hat z_i+v)\}$. That is, $\hat z_i+\epsilon_i$ must project  back to $\hat z_i$.

This construction in \eqref{eq:two_layer_generative_process} defines a joint model of $(y_i, \hat z_i)$ and is more general than the distance-to-set model \eqref{eq:distance_to_set_model}, since it  not only specifies the distribution of $y_i$, but also assigns a proper distribution for $\hat z_i \in \mathcal P_{\mathcal Z_\theta}(y_i)$. On the other hand, the distance-to-set model is recovered as the special case when
	$q(\hat z_i; \theta,\sigma) h(v_i;\sigma,\hat z_i)\propto \phi (\|\epsilon_i\|; \sigma) J(\hat z_i, \epsilon_i)$, where $J$ is the Jacobian factor for the transformation $ (\hat z_i,\epsilon_i) \mapsto y_i$.

To better understand the constrained spaces of $\hat z_i$ and $v_i$, we now focus on the special case where $\mathcal Z_\theta$ is convex. Under Euclidean distance, projection is unique, so $\mathcal P_{\mathcal Z_\theta}(y_i)$ is always a singleton. If $\mathcal Y \cap \mathcal Z_\theta = \varnothing$, then the projection image $\mathcal P_{\mathcal Z_\theta}(\mathcal Y)$ lies on the boundary of $(\mathcal Z_\theta)$, and $\hat z_i$ is a random boundary point. Moreover, the deviation $\epsilon_i$ lies in the normal cone of $\mathcal Z_\theta$ at $\hat z_i$:
\[\label{eq:normal_cone}
  \mathcal N_{\mathcal Z_\theta}(\hat z_i) = \left\{ v\in \mathbb{R}^d: v^\top (\zeta -\hat z_i) \le 0\; \forall \zeta\in \mathcal Z_\theta \right\},
\]
In geometric terms, $\epsilon_i$ points {\em outward} from $\mathcal Z_\theta$ at $\hat z_i$.

For an intuitive example, consider the case in which $\mathcal Z_\theta = \{z: \|z -c \|_2\le r\}$ is a Euclidean ball with center $c\in\mathbb{R}^d$ and radius $r>0$. Its boundary is $\{z:\|z-c\|_2=r\}$, and the normal cone at $\hat z_i\in \text{bd}(\mathcal Z_\theta)$ is $\{ \alpha (\hat z_i-c): \alpha\ge 0\}$, namely the collection of all normal vectors orthogonal to the tangent plane at $\hat z_i$ \citep{rockafellar1997convex}. When the boundary is smooth like in the above example,
the tangent plane at each $\hat z_i$ is unique, and the normal cone corresponds to normal vectors. For non-smooth boundary points (such as the vertices of a polyhedron), the normal cone defined in \eqref{eq:normal_cone} provides a useful generalization.

\addtocounter{example}{-1}
\begin{example}[\bf Multivariate mixed-effects model with sparse random effect continued]
In the previous example, for a given $\hat z_i = \mu +\gamma_i$, when $\sum_j |\gamma_{ij}|=r$,   the normal cone at $\hat z_i$ is
\(
\mathcal N_{\mathcal Z_\theta}(\hat z_i)
=
\Bigl\{\alpha s:\alpha\ge 0,\
s_j=
\,\mathrm{sign}(\gamma_{ij}), \text{ if } \gamma_{ij}\neq 0,\;
s_j\in[-1,1], \text{ if } \gamma_{ij}=0\;
\Bigr\}.
\)

 Essentially the distance-to-set model induces   a dependent and constrained structure on the deviation vector $\epsilon_i = y_i-\hat z_i$. Here, the normal-cone restriction reveals which patterns of residual variation are compatible with a given sparse projection $\gamma_i$: (i) the zero random effect at index $k$ should not have  deviation magnitude larger than that of a non-zero random effect $|\epsilon_{i,k}|\le |\epsilon_{i,j}|$ at index $j$, and (ii) non-zero random effects share a common deviation magnitude.
Constraint (i) can also intuitively be understood as an identifiability condition: the non-zero random effect can only be identified when the difference between the observed and systematic component $\mu_j$ is large enough.

\hfill $\blacksquare$
\end{example}

\subsection{Independence between set-shift parameter and  deviation}\label{sec:set_parameterization}
In a canonical statistical model $y_i = \mu_i + \epsilon_i$, the error $\epsilon_i$ is often assumed to be independent of the location parameter $\mu_i$ (e.g. fixed effects in regression). One may wonder if an analogous property holds for the distance-to-set model, where the deviation is constrained by the geometry of the set.
Toward understanding this, let the residuals belong to some set $\Gamma_r$ where $r>0$ is a scale parameter on the radius $\text{rad}(\Gamma_r)=r\; \text{rad}(\Gamma_1)$, where $\text{rad}(C)= \inf_{c}\sup_{z\in C} \|z-c\|$. Consider an additive model structure defined $ \mathcal Z_{\theta}^n = \{ [f(x_i,\beta) + \gamma_i]_{i=1}^n:  \gamma\in \Gamma_r\} : = \mathcal M_{\beta,r}$, and data
\(
  y_i = f(x_i,\beta) + \hat \gamma_i + \epsilon_i, \qquad i=1,\ldots,n,
\)
where $f(x_i,\beta)$ is a \textit{set-shift} parameter that translates the set, akin to the location parameter. Here, the deviation $\epsilon_i = y_i - f(x_i,\beta)-\hat \gamma_i$ plays the role of the error, and is constrained to $V_{\mathcal Z_\theta}({\hat z_i})= \{  v\in \mathbb{R}^d:   \hat z_i\in \mathcal P_{\mathcal Z_\theta}(\hat z_i+v) \}$, which we will call the displacement space.

As $\hat z_i=f(x_i,\beta)+\hat \gamma_i$, our question is made precise by asking whether the dependence of $\mathcal Z_\theta$ on $\beta$ induces dependency between $\beta$ and the deviation $\epsilon_i$. For convex sets under Euclidean distance, the answer is no.
By shift invariance of the normal cone,
$$\mathcal{N}_{\mathcal Z_{\beta,r}}(\hat z_i)
=\mathcal{N}_{\mathcal Z_{\beta,r} - f(x_i,\beta)}[\hat z_i-f(x_i,\beta)]=\mathcal{N}_{\Gamma_r}(\gamma_i);
$$
that is, the admissible deviation space depends only on the set geometry encoded by $\Gamma_r$, not on the shift implied by $f(x_i,\beta)$. Thus the deviation remains independent of the set-shift parameter, even when $f$ is nonlinear.
The following theorem generalizes this to a broader class of projections and to non-convex sets.

\begin{theorem}
Let $\mathcal P_{\mathcal Z_\theta}:\mathbb R^d \mapsto \mathcal Z_\theta$ be a (possibly set-valued) projection rule onto a nonempty set $\mathcal Z_\theta\subset\mathbb R^d$.
For $z\in \mathcal Z_\theta$, define the displacement set
$
V_{\mathcal Z_\theta}(z) := \{\, v\in\mathbb R^d \;:\; z\in \mathcal P_{\mathcal Z_\theta}(z+v) \,\}.
$
For any $b\in\mathbb R^d$,
$
V_{\mathcal Z_\theta +b}(z+b) = V_{\mathcal Z_\theta}(z)\quad \text{for all } z\in \mathcal Z_\theta
$
if and only if the projection rule is translation--equivariant:
$$
\mathcal P_{\mathcal Z_\theta +b}(y+b) = \mathcal P_{\mathcal Z_\theta}(y) + b \qquad \text{for all } y,b\in\mathbb R^d .
$$
\end{theorem}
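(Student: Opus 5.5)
The plan is to show that both statements are reformulations of the same fact: the graph relation $\{(y,z): z\in\mathcal P_{\mathcal Z_\theta}(y)\}$ is shifted by $(b,b)$ when the set is shifted by $b$. Throughout, $\mathcal P_{\mathcal Z_\theta}$ is read as a rule defined for every nonempty set, so $\mathcal P_{\mathcal Z_\theta+b}$ is the same rule applied to the translated set. The key observation is that $\mathcal P_{\mathcal Z_\theta}(y)\subseteq \mathcal Z_\theta$. Hence, for a point $w$ with $w\notin\mathcal Z_\theta$, the membership $w\in\mathcal P_{\mathcal Z_\theta}(y)$ is automatically false. This is what lets us pass between ``for all $z\in\mathcal Z_\theta$'' in the displacement statement and ``for all $y$'' in the projection statement.

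For the direction (equivariance $\Rightarrow$ invariance of displacement sets), fix $z\in\mathcal Z_\theta$ and $b$. A vector $v$ lies in $V_{\mathcal Z_\theta+b}(z+b)$ iff $z+b\in\mathcal P_{\mathcal Z_\theta+b}(z+b+v)$. By equivariance with $y=z+v$, this set equals $\mathcal P_{\mathcal Z_\theta}(z+v)+b$. So the condition holds iff $z\in\mathcal P_{\mathcal Z_\theta}(z+v)$, i.e.\ iff $v\in V_{\mathcal Z_\theta}(z)$. This is a direct chain of equivalences.

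For the converse, assume $V_{\mathcal Z_\theta+b}(z+b)=V_{\mathcal Z_\theta}(z)$ for all $z\in\mathcal Z_\theta$ and all $b$, and fix $y,b$. I will prove both inclusions of $\mathcal P_{\mathcal Z_\theta+b}(y+b)=\mathcal P_{\mathcal Z_\theta}(y)+b$.

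\emph{First inclusion.} Take $w\in\mathcal P_{\mathcal Z_\theta}(y)$. Then $w\in\mathcal Z_\theta$, and with $v=y-w$ we have $v\in V_{\mathcal Z_\theta}(w)=V_{\mathcal Z_\theta+b}(w+b)$. Unwinding the definition gives $w+b\in\mathcal P_{\mathcal Z_\theta+b}(w+b+v)=\mathcal P_{\mathcal Z_\theta+b}(y+b)$.

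\emph{Second inclusion.} Take $u\in\mathcal P_{\mathcal Z_\theta+b}(y+b)$. Then $u\in\mathcal Z_\theta+b$, so $u=w+b$ with $w\in\mathcal Z_\theta$. Setting $v=y-w$, we get $v\in V_{\mathcal Z_\theta+b}(w+b)=V_{\mathcal Z_\theta}(w)$, hence $w\in\mathcal P_{\mathcal Z_\theta}(y)$ and $u\in \mathcal P_{\mathcal Z_\theta}(y)+b$.

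The only delicate step is this converse direction. The hypothesis only speaks about base points $z\in\mathcal Z_\theta$, so I must make sure every candidate projection point is such a base point. This is exactly where I use that the rule maps into the set, $\mathcal P_{\mathcal Z}(\cdot)\subseteq\mathcal Z$, applied to both $\mathcal Z_\theta$ and $\mathcal Z_\theta+b$.

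I will also remark that empty projections cause no issue, since both inclusions are vacuous then. Finally, as a corollary for the earlier discussion, the Euclidean metric projection, and any projection defined by a translation-invariant norm, is equivariant: $\|y+b-(\zeta+b)\|=\|y-\zeta\|$. This recovers the normal-cone shift invariance stated before the theorem, now without convexity.
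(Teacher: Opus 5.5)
Your proof is correct and follows the same route as the paper: unwind the definition of the displacement set in the forward direction, and use $v=y-z$ with $z\in\mathcal P_{\mathcal Z_\theta}(y)$ for the converse. It is slightly more complete than the paper's argument. The paper's converse only shows $\mathcal P_{\mathcal Z_\theta}(y)+b\subseteq\mathcal P_{\mathcal Z_\theta+b}(y+b)$, whereas your second inclusion uses $\mathcal P_{\mathcal Z_\theta+b}(\cdot)\subseteq\mathcal Z_\theta+b$ to supply the reverse containment that set-valued rules require.
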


\begin{remark}
The above theorem is stated for projection rules, but the same invariance property holds more generally for functions $\mathcal T:\mathbb R^d \to \mathcal Z_\theta$ that are translation-equivariant and single-valued.
\end{remark}

\subsection{Normalizing constant and implicit set size regularization}
The distance-to-set likelihood assigns higher density to observations that lie closer to $\mathcal Z_\theta$. Since given a set of observations, distance decreases as the set expands, a larger set will generally provide a better fit
\(
\mathcal{Z}_{\theta_1} \subseteq \mathcal{Z}_{\theta_2} \implies \operatorname{dist}(y_i, \mathcal{Z}_{\theta_1}) \geq \operatorname{dist}(y_i, \mathcal{Z}_{\theta_2}).
\)
As a consequence, larger sets will always be favored in the posterior if not for the
contribution $m^{-1}_{\theta,\sigma}$. Thus, controlling the set size is essential.

Fortunately, there is an implicit regularization effect of the set size in the posterior \eqref{eq:joint_posterior}.
Intuitively, as the set $\mathcal Z_\theta$ expands in \eqref{eq:two_layer_generative_process}, the projection image $\mathcal P_{\mathcal Z_\theta}(\mathcal Y)$ also tends to expand, making the induced distribution of projection points $q(\cdot; \theta,\sigma)$ more diffuse. As a result, the marginal density $\sum_{z_i\in \mathcal P_{\mathcal Z_\theta}(y_i)}h( \epsilon_i; \sigma, \hat z_i)q(\hat z_i; \theta,\sigma)$ would be relatively smaller under more diffuse $q(\hat z_i; \theta,\sigma)$, and in turn $Z_\theta$ is weighed unfavorably in the posterior, reminiscent of the Bayesian version of Occam's razor \citep{jefferys1992ockham}. On the other hand, since the distribution $q$ in the generative view of \eqref{eq:two_layer_generative_process} is often intractable, this motivates a direct analysis of the same regularization effect through the normalizing constant.

To this end, consider the alternative two-layer generative process: (i) first draw a distance magnitude $t_i>0$ from a distribution, which defines a level set $S_{\theta,t_i} := \{y:\mathrm{dist}(y,\mathcal Z_\theta)=t_i\}$, and then  (ii) draw a point $y_i$ from $S_{\theta,t_i}$ uniformly. The two densities involved are:
\(
\tilde q(t_i; \sigma) = \frac{\exp(-t_i^2/\sigma)\,A(S_{\theta,t_i})}{m_{\theta,\sigma}}, \qquad
\tilde h(y_i; \sigma, t_i) = \frac{1(y_i\in S_{\theta,t_i})}{A(S_{\theta,t_i})}.
\)
The first is a half-Gaussian exponentially tilted by the area of the level set denoted as $A(S_{\theta,t_i})$, and enables us to compute the normalizing constant:
\[\label{eq:normalizing_constant}
m_{\theta,\sigma}= \int_{t>0} {\exp(-t^2/\sigma)\,A(S_{\theta,t})} dt.
\]
This expression makes the regularization mechanism transparent.
The level set $S_{\theta,t}$ is the boundary of the closed $t$-neighborhood $\mathcal Z^t_{\theta}=\{\tilde z = z + v, z\in \mathcal Z_{\theta}, \|v\|\le t \}$, and   as $\mathcal Z_{\theta}$ expands, we see that $\mathcal Z^t_{\theta}$ expands too for each $t>0$. For convex sets, the boundary area also increases monotonically in turn, and we have $\mathcal Z_{\theta_1}\subseteq \mathcal Z_{\theta_2} \implies A(S_{\theta_1,t})\le A(S_{\theta_2,t})\implies m^{-1}_{\theta,\sigma}\ge m^{-1}_{\theta_2,\sigma}$. That is, the normalizing constant $m_{\theta,\sigma}$ would increase and its inverse acts as a stronger penalty on larger sets.
For non-convex sets, this monotonicity may not hold as separate components can emerge as
$t$ increases. Nonetheless, since the volume $V(\mathcal Z_{\theta,t})$ is nondecreasing and locally Lipschitz for $t>0$, it is differentiable for almost every $t > 0$ so that  $A(S_{\theta,t})=\partial V(\mathcal Z_{\theta,t})/\partial t$ gives us a way to quantify the constant \eqref{eq:normalizing_constant}.

We now return our attention to the case when the distance is defined using Euclidean norm, for which the volume can be computed explicitly.

\begin{theorem}
  For a convex set $\mathcal Z_{\theta}$ and distance defined using Euclidean norm, the normalizing constant admits the form:
  \(
  m_{\theta,\sigma} = \sum_{k=0}^{d-1} \pi^{(d-k)/2} \,V_k(\mathcal Z_{\theta})\,
\,\sigma^{(d-k)/2}.
  \)
  where $V_k(\mathcal Z_{\theta})>0$ is the $k$-th dimensional volume for $\mathcal Z_{\theta}$. For bounded $\mathcal Z_{\theta}$ with non-empty interior, $V_k(\mathcal Z_{\theta})\in (0,\infty)$ for all $k=0,1,\ldots,d-1$.
\end{theorem}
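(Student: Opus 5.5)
The plan is to derive the formula from the Steiner formula for convex sets, combined with the level-set representation \eqref{eq:normalizing_constant}. For a closed convex set $\mathcal Z_\theta$, the closed $t$-neighborhood $\mathcal Z^t_\theta$ is again convex. The Steiner formula expresses its volume as a polynomial in $t$:
\[
V(\mathcal Z^t_\theta) = \sum_{k=0}^{d} \kappa_{d-k}\, V_k(\mathcal Z_\theta)\, t^{d-k}, \qquad \kappa_j = \frac{\pi^{j/2}}{\Gamma(j/2+1)},
\]
where $V_k$ are the intrinsic volumes (the ``$k$-th dimensional volumes'' of the statement) and $\kappa_j$ is the volume of the unit $j$-ball. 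I will take this classical result as given (e.g.\ Schneider's \emph{Convex Bodies}).

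The first step is to identify $A(S_{\theta,t})$ with $\partial V(\mathcal Z^t_\theta)/\partial t$ for every $t>0$, not merely almost every $t$. For convex $\mathcal Z_\theta$ under the Euclidean norm, the function $y\mapsto \mathrm{dist}(y,\mathcal Z_\theta)$ is $C^1$ on $\mathbb R^d\setminus \mathcal Z_\theta$. Its gradient is $(y-\mathcal P_{\mathcal Z_\theta}(y))/\mathrm{dist}(y,\mathcal Z_\theta)$, which has unit norm. The coarea formula then gives
\[
\int_{\mathbb R^d\setminus\mathcal Z_\theta} g(\mathrm{dist}(y,\mathcal Z_\theta))\,dy=\int_0^\infty g(t)\,A(S_{\theta,t})\,dt .
\]
Since the Steiner polynomial is smooth in $t$, differentiating it yields
\[
A(S_{\theta,t})=\sum_{k=0}^{d-1}(d-k)\,\kappa_{d-k}\,V_k(\mathcal Z_\theta)\,t^{d-k-1}.
\]
The $k=d$ term, $V_d(\mathcal Z_\theta)=V(\mathcal Z_\theta)$, is constant in $t$ and therefore drops out. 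This matches \eqref{eq:normalizing_constant}, which integrates only over $t>0$, i.e.\ over $y\notin\mathcal Z_\theta$. The full density normalizer would add $V_d(\mathcal Z_\theta)\,\sigma^0$ (the mass of points inside the set). I will remark on this but follow the paper's definition.

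The second step is a termwise Gaussian moment computation. With $j=d-k\ge 1$, the substitution $u=t^2/\sigma$ gives
\[
\int_0^\infty e^{-t^2/\sigma}\, j\, t^{j-1}\,dt = \frac{j}{2}\,\Gamma\!\left(\frac j2\right)\sigma^{j/2}=\Gamma\!\left(\frac j2+1\right)\sigma^{j/2}.
\]
Multiplying by $\kappa_j=\pi^{j/2}/\Gamma(j/2+1)$ cancels the Gamma factor exactly, leaving $\pi^{j/2}\sigma^{j/2}$. Summing over $k=0,\dots,d-1$ gives the claimed expression. Interchanging the finite sum and the integral is trivial.

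The third step concerns positivity and finiteness. For a convex body (compact, with nonempty interior) every intrinsic volume satisfies $V_k\in(0,\infty)$. Specifically, $V_0=1$ is the Euler characteristic, and $V_k$ is a positive multiple of a mixed volume $V(K[k],B[d-k])$. Mixed volumes are strictly positive when all bodies involved have nonempty interior, by monotonicity and homogeneity: such a $K$ contains a small ball $\epsilon B$, so $V_k(K)\ge \epsilon^k V_k(B)>0$. Finiteness follows from boundedness, since $K\subseteq RB$ gives $V_k(K)\le R^kV_k(B)$.

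I expect the main obstacle to be the first step: rigorously justifying $A(S_{\theta,t})=\partial_t V(\mathcal Z^t_\theta)$ via coarea, including that the level sets are $C^{1}$ hypersurfaces (boundaries of convex bodies, hence Lipschitz and rectifiable). A secondary issue is reconciling the missing $k=d$ term with the paper's definition of $m_{\theta,\sigma}$. If the statement's $V_k>0$ is meant for unbounded or lower-dimensional sets, that positivity claim should be restricted to the bounded, nonempty-interior case, as the second sentence of the statement does.
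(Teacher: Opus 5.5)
Your proposal is correct and follows the paper's proof: the Steiner formula for $V(\mathcal Z_\theta^t)$, differentiation in $t$ to obtain $A(S_{\theta,t})$, and the termwise identity $\int_0^\infty e^{-t^2/\sigma}t^{j-1}\,dt=\tfrac12\sigma^{j/2}\Gamma(j/2)$, after which $\kappa_j\Gamma(j/2+1)=\pi^{j/2}$. Your additions are valid refinements that the paper's proof leaves implicit: the coarea justification (which only needs $A(S_{\theta,t})=\partial_t V(\mathcal Z_\theta^t)$ for almost every $t$, so your ``every $t$'' claim is not needed), the mixed-volume argument for $V_k\in(0,\infty)$, and the remark that the missing $V_d(\mathcal Z_\theta)$ term comes from the paper integrating only over $y\notin\mathcal Z_\theta$.
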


Finally, we remark that $V_k(\mathcal Z_{\theta})$ can be computed explicitly or numerically for simple convex sets such as vector-norm balls, while more complicated sets can be handled as well owing to the shift-invariance and scaling laws of volume. For example, recall the additive model considered in the previous section $\mathcal Z_{\theta}=\{z\in\mathbb R^d: f(x_i,\beta)+ \gamma_i:  \gamma_i\in \Gamma_r\}$. We know for such a set:
\[\label{eq:normalizing_constant_scaling}
  m_{\theta,\sigma} = \sum_{k=0}^{d-1} \pi^{(d-k)/2} \,V_k( \Gamma_1)\,
\, r^k\sigma^{(d-k)/2}
= (\pi \sigma)^{d/2} \sum_{k=0}^{d-1}  \,V_k( \Gamma_1)\,
\,  (\frac{r}{\sqrt{\pi\sigma}})^k
.
  \]
where we use the scaling law of volumes $V_k(r \Gamma_1)= V_k(\Gamma_1) r^k$.
This example shows explicitly that $m^{-1}_{\theta,\sigma}$ does penalize large sets by penalizing large $r$. In particular, when $r> \sqrt{\pi\sigma}$, the terms with larger $k$ can dominate in the polynomial, leading to  accelerated decay in $m^{-1}_{\theta,\sigma}$.

\subsection{Prior specification strategy}\label{sec:prior_specification}
For data generated i.i.d. from \eqref{eq:distance_to_set_model}, the posterior can be written as:
\[\label{eq:joint_posterior}
  \Pi(\theta, \sigma \mid y) = \frac{ \Pi_0(\theta, \sigma) m_{\theta,\sigma}^{-n}  \exp( - \sum_{i=1}^n \text{dist}^2 (y_i, \mathcal Z_{\theta})/\sigma)}{\int \Pi_0(\theta, \sigma) m_{\theta,\sigma}^{-n} \exp( - \sum_{i=1}^n \text{dist}^2 (y_i, \mathcal Z_{\theta})/\sigma) d\theta d\sigma}.
\]
A notable feature is the explicit appearance of the normalizing constant $m_{\theta,\sigma}$, which provides implicit regularization over the set size as discussed in the previous section.

In practice, one could use a proper prior for $\theta$ and $\sigma$ that reflects the prior knowledge about the set geometry and the noise level.
In this paper, we use a normal prior $\text{N}(0,10^2)$ for unconstrained parameters.  For positive parameters, we posit an inverse-Gaussian prior $\text{InvGaussian}(a,b)$ \(\Pi_0(s; a,b)\propto s^{-3/2}\exp\{- \frac{b}{2a^2}s - \frac{b}{2s} \}\)
which has prior mean $a$ and variance $a^3/b$.  This prior form is convenient because its two exponential terms balance two competing effects: shrinkage toward smaller values of $s$ and protection of $s$ from becoming singular.
In this paper,  we assign $\text{InvGaussian}(1, 1)$ for $\sigma$ throughout. For models with a radius parameter $r$, \eqref{eq:normalizing_constant_scaling}  suggests that $r$ scales with $\sqrt{\sigma}$, hence we assign the prior $r\sim \text{InvGaussian}(0.1 \sqrt{\sigma}, 0.1 \sqrt{\sigma})$.

\begin{remark}
In the Supplementary Material, we discuss how to generalize these distances to broader classes of divergences via majorization.
\end{remark}

\section{Computation}
These models are amenable to Markov chain Monte Carlo (MCMC) for posterior computation. When $m_{\theta,\sigma}$ is tractable, and $\theta$ is low-dimensional, the Metropolis-Hastings algorithm is a good baseline due to its simplicity. Starting with the current parameter $(\theta, \sigma)$ and $\text{dist}(y_i, \mathcal Z_\theta)$, each Metropolis-Hastings iteration proceeds as follows:

\begin{itemize}
  \item Propose a new parameter value $(\theta^\prime, \sigma^\prime)$ from a proposal distribution $\mathcal K( \theta^\prime, \sigma^\prime \mid \theta, \sigma)$.
  \item Run optimization subroutine to find projections $\hat z_i^\prime$ given $(\theta^\prime, y_i)$,
  and compute $\text{dist}(y_i, \mathcal Z_{\theta^\prime})$ for $i=1, \ldots,n$.
  \item Accept the proposal $(\theta',\sigma')$ with probability
  \[
 \min\left(1, \frac{\Pi(\theta^\prime, \sigma^\prime \mid y)
    \mathcal K( \theta, \sigma \mid \theta^\prime, \sigma^\prime)
    }{\Pi(\theta, \sigma \mid y)
    \mathcal K( \theta^\prime, \sigma^\prime \mid \theta, \sigma)
    }\right),
  \]
  otherwise keep the current parameter $(\theta, \sigma)$.
\end{itemize}

When $\theta$  is moderate- or high-dimensional, random-walk Metropolis-Hastings often mixes poorly. In such settings, gradient-based MCMC algorithms such as  the Metropolis-Hastings algorithm with a Barker proposal (MH-Barker) \citep{livingstone2022barker} and the No-U-Turn Sampler (NUTS) \citep{hoffman2014no} are often preferred. The Barker proposal offers lower computational cost per iteration and mixes well in moderate dimensions, whereas NUTS incurs a slightly higher per-iteration cost but typically achieves superior performance in higher-dimensional settings. In our implementation, we use the \texttt{numpyro} package \citep{bingham2019pyro} for most of the posterior computations.

We also remark that our distance-based formulation is well positioned to benefit from tools from convex analysis such as duality \citep{rockafellar1997convex}. The projection onto a set is equivalently viewed as the proximal operator of the indicator function of that set \citep{polson2015proximal}, enabling tools such as Moreau's decomposition for well-known proximal operators \citep{parikh2014proximal}. Due to length considerations and the technical detail of optimization concepts, we detail in the Supplemental Material how duality can significantly simplify three complicating facets that may arise in using distance-to-set models.  First, repeated projection onto $\mathcal Z_\theta$  may be expensive when a closed-form expression is unavailable. Second, when gradient-based algorithms are used, efficient evaluation of the gradient is essential. Third, we discuss how to sample from the model's predictive distribution.

\section{Posterior consistency on set parameter recovery}\label{sec:theory}
In this section, we conduct a theoretical analysis establishing the sufficient conditions for posterior consistency, in particular for recovery of the set parameter $\theta$. There is a large literature on posterior consistency \citep{schwartz1965bayes,barron1999consistency,walker2007rates,ghosal2000convergence}, including cases where the data are generated in an independent, but not identically distributed setting \citep{ghosh2003bayesian,ghosal2007convergence}. In our setting, however, a primary challenge is that the likelihood induced by the distance-to-set model may not be factorizable over the data index $i$. As a result, standard consistency conditions  in the canonical setting do not apply directly, and the role of the normalizing constant must be handled explicitly.

To facilitate analysis, we focus on the following setup.
Let $y^{(n)} \in \mathbb{R}^{n d}$ denote the stacked data vector, and let $\mathcal{M}^{(n)}_{\theta} = \times_{i=1}^n \mathcal Z_\theta^i \subset \mathbb{R}^{n d}$ denote the product model set indexed by $\theta \in \Theta$, where $\Theta$ is compact and the component sets  $\mathcal Z_\theta^i$ may potentially vary with the covariates. We consider a general distance between $y^{(n)}$ and $\mathcal{M}^{(n)}_{\theta}$,
\(
  \mathrm{dist}(y^{(n)}, \mathcal{Z}^{(n)}_{\theta}) = \min_{z\in \mathcal{M}^{(n)}_{\theta}} \|y^{(n)}-z\|,
\)
which includes $\mathrm{dist}^2(y^{(n)}, \mathcal{Z}^{(n)}_{\theta})=\sum_{i=1}^n \mathrm{dist}^2(y_i, \mathcal Z_\theta^i)$ as a special case. For simplicity, we treat $\sigma$ as a known constant, as is common in standard Bayesian asymptotic theory literature.
We also assume that the prior does not depend on $n$, so the posterior contains the normalizing factor $1/m^{n}_{\theta,\sigma}$.

Define the empirical and population risks:
\(
R_n(\theta) = \frac{1}{n} \, \mathrm{dist}^2(y^{(n)}, \mathcal{Z}^{(n)}_{\theta}),\qquad
R(\theta) = \limsup_{n \to \infty} \mathbb{E} \, R_n(\theta)
\)
where the expectation is taken under the data-generating mechanism $\mathbb{P}$. Assume that $R(\theta)$ is uniquely minimized at $\theta_0$. Our goal is to show that the posterior $\Pi(\theta \mid y^{(n)})$ concentrates near $\theta_0$ under regularity conditions as follows.
  \begin{itemize}
    \item[(A1)] There exists a metric $\rho$ on $\Theta$ and a constant $\kappa > 0$ such that
    $
        R(\theta) - R(\theta_0) \geq \kappa\, \rho^2(\theta, \theta_0)
    $
    $\forall \theta \in \Theta$.
    \item[(A2)]
    $
        \sup_{\theta \in \Theta} \left| R_n(\theta) - \mathbb{E} R_n(\theta) \right| \to 0
    $
    in probability as $n \to \infty$.
    \item[(A3)] For every $\varepsilon > 0$,
    $
        \Pi_0\left\{ \theta : R(\theta) - R(\theta_0) < \varepsilon \right\} > 0
   .$
   \item[(A4)] There exists  a  sequence $\varepsilon_n \downarrow 0$ and  constants $M_1>M > 0$,
$\Theta_{n}(M,M_1)=\{\theta:\ M\varepsilon_n \le \rho(\theta,\theta_0)\le M_1\varepsilon_n\}$, such that for $n$ large enough
for any $\theta\in \Theta_{n}(M,M_1)$,
 $
 \Delta C_n(\theta)=(\log m^{(n)}_{\theta,\sigma}-\log m^{(n)}_{\theta_0,\sigma})\ge -c_0\,n\epsilon_n^2$ for some constant $c_0>0$.
  \item[(A5)] For the same $\varepsilon_n$ in (A4), for all $n$ large enough, for any $\theta: R(\theta)-R(\theta_0)< \sigma\varepsilon_n^2$,
    $
  \big|\Delta C_n(\theta)\big|
    \le c_1\,n\varepsilon_n^2
    $ for some constant $c_1>0$.
  \item[(A6)] For the same $\varepsilon_n$ and $M_1$ in (A4), for all $n$ large enough, the prior mass
   $
\Pi_0\{\theta:\rho(\theta,\theta_0)> M_1\varepsilon_n\}
\ \le\ \exp\{-b\,n\,\varepsilon_n^2\}$ for some $b>2 c_1+ 6$.
\end{itemize}
With the above ingredients, we are ready to state the consistency result.
\begin{theorem}
\label{thm:consistency}
Under (A1)--(A3), for the likelihood given by \eqref{eq:distance_to_set_model} and \eqref{eq:gaussian_kernel},
\[
    \Pi\left( \theta : \rho(\theta, \theta_0) > M \epsilon_n \mid y^{(n)} \right) \underset{n\to \infty}{\stackrel{\mathbb P}{\longrightarrow}} 0
\]
for any sequence $\varepsilon_n \downarrow 0$ that obeys (A4)-(A6).
\end{theorem}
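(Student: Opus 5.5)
We write the posterior mass of $U_n=\{\theta:\rho(\theta,\theta_0)>M\varepsilon_n\}$ as a ratio $N_n/D_n$ with
\[
N_n=\int_{U_n} \exp\Bigl\{-\tfrac{n}{\sigma}\bigl[R_n(\theta)-R_n(\theta_0)\bigr]-\Delta C_n(\theta)\Bigr\}\,d\Pi_0(\theta),
\qquad
D_n=\int_{\Theta} \exp\Bigl\{-\tfrac{n}{\sigma}\bigl[R_n(\theta)-R_n(\theta_0)\bigr]-\Delta C_n(\theta)\Bigr\}\,d\Pi_0(\theta).
\]
Both numerator and denominator have been divided by the likelihood at $\theta_0$, and $\Delta C_n=\log m^{(n)}_{\theta,\sigma}-\log m^{(n)}_{\theta_0,\sigma}$ as in (A4). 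The plan is to bound $D_n$ from below by $\exp\{-c\,n\varepsilon_n^2\}$ and $N_n$ from above by something of smaller order, on an event of probability tending to one.

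\textbf{Denominator.} Restrict to $B_n=\{\theta: R(\theta)-R(\theta_0)<\sigma\varepsilon_n^2\}$. On $B_n$, (A5) gives $|\Delta C_n|\le c_1 n\varepsilon_n^2$. For the risk term we need $R_n(\theta)-R_n(\theta_0)\le R(\theta)-R(\theta_0)+2\delta_n$, where $\delta_n=\sup_\theta|R_n-\mathbb E R_n|$ together with the gap between $\mathbb E R_n$ and $R$. Two things are required here:
\begin{itemize}
\item uniform convergence (A2) at a rate $\delta_n\lesssim \sigma\varepsilon_n^2$, which I would impose or obtain by choosing $\varepsilon_n$ no faster than the concentration rate of (A2), interpreting the statement in that way;
\item a prior-mass lower bound $\Pi_0(B_n)\ge e^{-n\varepsilon_n^2}$. (A3) gives only positivity, so I would use it combined with the implicit requirement on $\varepsilon_n$.
\end{itemize}
This yields $D_n\ge \Pi_0(B_n)\exp\{-(3+c_1)n\varepsilon_n^2\}$, roughly $\exp\{-(4+c_1)n\varepsilon_n^2\}$.

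\textbf{Numerator.} Split $U_n$ into the shell $\Theta_n(M,M_1)$ and the tail $\{\rho>M_1\varepsilon_n\}$.

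On the tail, bound the integrand crudely. The risk difference satisfies $R_n(\theta)-R_n(\theta_0)\ge R(\theta)-R(\theta_0)-2\delta_n\ge -2\delta_n$, since $\theta_0$ minimizes $R$. The normalizer term needs a lower bound on $\Delta C_n$ there as well. I would argue it via (A4) extended by monotonicity, or note that for the Gaussian kernel $m_\theta$ is bounded below uniformly over compact $\Theta$. Then the tail contributes at most $e^{O(n\varepsilon_n^2)}\Pi_0(\rho>M_1\varepsilon_n)\le \exp\{-(b-c)n\varepsilon_n^2\}$. The condition $b>2c_1+6$ is exactly what makes this negligible against the denominator bound $e^{-(4+c_1)n\varepsilon_n^2}$.

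On the shell, (A1) gives $R(\theta)-R(\theta_0)\ge\kappa M^2\varepsilon_n^2$, so the exponent is at most $-\tfrac{n}{\sigma}(\kappa M^2\varepsilon_n^2-2\delta_n)+c_0 n\varepsilon_n^2$ by (A4). Taking $M$ large so that $\kappa M^2/\sigma>c_0+c_1+6$ makes the shell contribution $o(D_n)$.

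\textbf{Main obstacle.} The delicate step is converting (A2), which is stated only as convergence in probability, into a bound on $n\cdot\sup|R_n-\mathbb E R_n|$ of order $n\varepsilon_n^2$, and likewise controlling the $\limsup$ gap between $\mathbb E R_n$ and $R$. A naive application fails because these fluctuations are multiplied by $n$. I would first try to exploit that only differences $R_n(\theta)-R_n(\theta_0)$ enter: a localized/chaining argument bounds their fluctuation by $\rho(\theta,\theta_0)$ times an $O_p(n^{-1/2})$ term, which is absorbed by the quadratic margin (A1) whenever $n\varepsilon_n^2\to\infty$. Failing that, I would state explicitly that $\varepsilon_n$ is taken no faster than the rate in (A2).
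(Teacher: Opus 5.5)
Your shell and denominator steps are essentially sound, but the tail step has a genuine gap. On $\{\rho(\theta,\theta_0)>M_1\varepsilon_n\}$ your pointwise bound on the integrand needs $-\Delta C_n(\theta)=O(n\varepsilon_n^2)$ uniformly there, and no assumption supplies this:
\begin{itemize}
\item (A4) is assumed only on the shell $\Theta_n(M,M_1)$, and nothing says $\Delta C_n$ is monotone in $\rho$.
\item A uniform lower bound on $m_{\theta,\sigma}$ over compact $\Theta$ only gives $-\Delta C_n(\theta)\le Cn$, because $\log m^{(n)}_{\theta,\sigma}$ grows linearly in $n$ (for i.i.d.\ data $m^{(n)}_{\theta,\sigma}=m_{\theta,\sigma}^n$).
\end{itemize}
Your tail bound then becomes $e^{Cn-bn\varepsilon_n^2}$, which diverges. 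This is not a technicality. For sets indexed by a radius $r$, where $m_{\theta,\sigma}$ increases in $r$, taking $r<r_0$ gives $\Delta C_n(\theta)=n\log(m_{\theta,\sigma}/m_{\theta_0,\sigma})$, which is negative and of order $n$. It is offset only by the order-$n$ increase in the risk term, and you discard that increase when you lower-bound $R_n(\theta)-R_n(\theta_0)$ by $-2\delta_n$.

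The missing idea, which is how the paper treats this region, is that $m^{(n)}_{\theta,\sigma}$ makes $L_n(\theta)$ a probability density, so $\mathbb E_{\theta_0}[L_n(\theta)/L_n(\theta_0)]\le 1$.
\begin{itemize}
\item By Fubini, the expected tail numerator $\int_{\rho>M_1\varepsilon_n}L_n(\theta)/L_n(\theta_0)\,d\Pi_0(\theta)$ is at most $\Pi_0(\rho>M_1\varepsilon_n)\le e^{-bn\varepsilon_n^2}$ by (A6).
\item Markov's inequality at level $e^{-(b/2)n\varepsilon_n^2}$ then controls it with probability tending to one, with no control of $\Delta C_n$ needed off the shell.
\item Since $b/2>3+c_1$, this beats the denominator bound $e^{-(3+c_1)n\varepsilon_n^2}$.
\end{itemize}
Like the paper, this uses that the data are drawn from the model at $\theta_0$. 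In that setting you could also repair your pointwise route. Writing $p_\theta$ for the normalized density of $y^{(n)}$, one has $\tfrac{n}{\sigma}\{\mathbb E_{\theta_0}R_n(\theta)-\mathbb E_{\theta_0}R_n(\theta_0)\}+\Delta C_n(\theta)=\mathrm{KL}(p_{\theta_0}\,\|\,p_\theta)\ge 0$, which bounds the tail integrand by $e^{2n\delta_n/\sigma}$ on your uniform-deviation event.

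The rest of your plan holds up. Bounding the shell integrand directly through (A1) and (A4) on the uniform-deviation event is a valid and more direct substitute for the paper's test $\varphi_n$. The two issues you flag are also real:
\begin{itemize}
\item (A2) is needed at rate $\varepsilon_n^2$, together with the discrepancy between $\mathbb E R_n$ and $R$.
\item (A3) must be strengthened to an exponential lower bound on $\Pi_0(B_n)$.
\end{itemize}
The paper's proof uses both implicitly: it applies (A2) at thresholds $\tfrac14\kappa M^2\varepsilon_n^2$ and $\sigma\varepsilon_n^2$, and never lower-bounds $\Pi_0(B_\delta)$. So these are not where you depart from it.
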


\begin{remark}
  Condition (A1) describes the separation; (A2) is the uniform law of large numbers; (A3) is commonly known as prior support; (A4-A5) ensure that the log-difference in normalizing constant gets dominated by the risk difference, in the shell $\Theta_n(M,M_1)$ and a small neighborhood of $\theta_0$ defined by $R(\theta)-R(\theta_0)$; (A6) ensures the prior assigns an exponentially small mass to the outside of the ball of radius $M_1\varepsilon_n$.
\end{remark}

We want to highlight the breadth of the above theorem. Since we define the distance between $y^{(n)}$ and $\mathcal{Z}^{(n)}_{\theta}$ in a general way, it may or may not be decomposable over the data index $i$. Because of this,
the underlying data-generating process can be dependent. The same argument also extends to the divergence-to-set models discussed in the Supplement, provided (A1)--(A6) are satisfied.
The generality is due to the use of a likelihood ratio bound that enables us to construct an exponentially consistent test. We provide the details in the Supplementary Materials.

\section{Empirical study using multi-environment study data}
In this section, we use a real dataset to benchmark the computational performance of our distance-to-set model. Additional simulated experiments and their details appear in the Supplementary Materials.

In many applications, researchers are interested in whether an effect is broadly generalizable across environments. Examples include policy evaluations across different regions \citep{andrews2022transfer}, educational interventions across different schools, and large-scale experiments across products or user segments.
In such problems, full invariance is often too restrictive, while environment-specific modeling without any constraint can obscure the shared structure.
 The distance-to-set framework provides a natural middle ground: it formulates generalizability as a structured neighborhood around a shared global parameter. In this sense, it is closely related to a mixed-effects formulation, where one decomposes an effect into a common component plus environment-specific, random departures. The key difference is interpretive: rather than treating those departures as random effects drawn from a distribution and summarizing this heterogeneity through a variance component, we model them as deviations from the exact-generalizability baseline, so that the radius parameter
$r$ measures how much separation from that benchmark is needed to sufficiently explain the data. Thus, $r$ serves as a direct and interpretable summary of the degree of generalizability.

We illustrate this idea using the Student/Teacher Achievement Ratio\footnote{\url{https://dataverse.harvard.edu/dataset.xhtml?persistentId=hdl:1902.1/10766}} (STAR) dataset \citep{achilles2008tennessee}. Project STAR was a longitudinal class-size experiment conducted in Tennessee, in which students were randomly assigned to small classes, regular classes, and regular classes with a full-time aide. The data contain student-level covariates, teacher characteristics, school characteristics, and test scores measured over multiple grades. This setting is useful for studying generalizability because the intervention is randomized and the outcome is comparable across schools. We are interested in the scientific question of whether the small-class effect is approximately stable across school contexts.

\subsection{Model specification}
We focus on the math scores and specify the model as
\begin{equation}\label{eq:star_model}
y_{jg} = \alpha_g + \eta^\top x_j + \phi^\top w_{jg} + (\beta+\delta_{s_{jg}}) D_{jg} + \varepsilon_{jg},
\end{equation}
where $y_{jg}$ is the math score for student $j$ in grade $g$, $\alpha_g$ is the grade-specific intercept, $x_j$ contains student covariates such as gender, ethnicity and birth cohort, $w_{jg}$ contains time-varying controls such as free-lunch status, teacher experience, teacher degree, teacher ethnicity, and school urbanicity.  The indicator $D_{jg}^{S}$  denotes assignment to a small class in grade $g$, and  $s_{jg} \in \{1, \ldots, 79\}$ denotes the school ID attached to student $j$ in grade $g$.

The parameter of primary interest is $\beta$, which captures the overall small-class effect across schools. The school-specific term $\delta_{s_{jg}}$ captures deviation from this global effect in school $s$.  When all $\delta_s$ are zero, the model reduces to a pooled specification in which the treatment effect is identical across schools. To allow for occasional deviation while preserving a dominant shared effect, we constrain the school-specific deviations to lie in an $\ell_1$-ball
\(
\mathcal M^{(n)}_{\theta=(\alpha_g, \eta, \phi, \beta, r)}=\Big\{ \big[\mu_{jg}+\delta_{S_{jg}} D_{jg} \big]_{ij}: \sum_{s=1}^{n_S}|\delta_s|\leq r \Big\}.
\)
where $\mu_{jg}=\alpha_g + \eta^\top x_j + \phi^\top w_{jg}+\beta D_{jg}$.

Here, projection onto the relevant set is given by a weighted soft-thresholding  \[
 \label{eq:weighted_soft_thresholding}
\hat\delta_s
=
\operatorname{sign}(\bar R_{s1})
\Big(|\bar R_{s1}|-\frac{\lambda}{n_{s1}}\Big)_+, \quad\bar R_{s1}:=\frac{\sum_{j,g:\,S_{jg}=s,\ D_{jg}=1} (y_{jg}-\mu_{jg})}{n_{s1}}, \quad n_{s1}=\sum_{j: S_{jg}=s} D_{jg}
\]
for some $\lambda>0$, chosen so that $\sum_{s=1}^{n_S}|\delta_s|= r$. In implementation,  it is convenient to reparameterize from $(\alpha_g, \eta, \phi, \beta, r)$ to $(\alpha_g, \eta, \phi, \beta, \lambda)$ using the change-of-variable trick, which avoids solving repeatedly for $\lambda$ given $r$ during posterior computation.

For comparison, we consider the four alternative models: (1) a pooled model with $\delta_s=0$ for all $s$ (Pooled); (2) a hierarchical model with $\delta_s \overset{iid}{\sim} \text{N}(0, \tau^2)$ (Standard); (3) a sparse hierarchical model with $\delta_s$ following a standard horseshoe prior \citep{carvalho2009handling} (Horseshoe); and (4) a sparse hierarchical model with $\delta_s$ following a continuous spike-and-slab prior \citep{george1993variable} (Spike \& Slab). We refer to our proposed method as the distance-to-set (DTS) model. For a fair comparison,  we use the same priors for $\alpha_g$, $\eta$, $\phi$, $\beta$ and $\sigma$, as suggested in \Cref{sec:prior_specification}, across all models whenever possible.

\subsection{Generalizability simulation study}

We first conduct a simulation study under a similar setting designed to mimic the STAR dataset, which allows us to vary the degree of school-specific heterogeneity in a controlled way. We consider a simplified base-level model without the grade effect and school-level covariates as $\mu_{j}=\alpha_j + \eta^T x_j$, where $x_j\in \mathbb{R}^2$ is generated from standard Normal. We set $\beta=2, \sigma=1, \eta =(1, -0.5)^T$. Each student has 50\% chance of being assigned to a small class.

To study generalizability, we vary the sparsity level of the true school-specific deviations $\delta_s$ from $10\%$ to $90\%$. \Cref{fig:star_simu_sparsity}  shows that the posterior behavior of the DTS model tracks this structure closely: as the true deviations become sparser, the posterior radius $r$ decreases and the number of near-zero projected effects $\hat\delta_s$ increases. This supports the interpretation of $r$ as a summary of transferability across schools.

To compare with alternative methods, here we provide a close look at the results when true $\delta_s$ are 80\% sparse. We summarize the posterior distribution of $\beta$ and the trace plot of the sparse $\delta_s$ values in \Cref{fig:star_simu_beta_post}. Unsurprisingly, the pooled regression model gives a biased estimate of $\beta$ because it cannot accommodate school-level departures. The hierarchical model gives a wider posterior range for $\beta$ than our method, likely due to the fact that it does not leverage the sparsity structure of $\delta_s$. The horseshoe, spike-and-slab and DTS models give similar posterior distribution for $\beta$. In terms of the sparsity pattern, we find that the spike-and-slab model gives a slightly higher sparsity, followed by the DTS model, while the horseshoe model has much lower sparsity.
\begin{figure}[!ht]
  \centering
  \begin{subfigure}[t]{0.31\textwidth}
      \includegraphics[width=\textwidth]{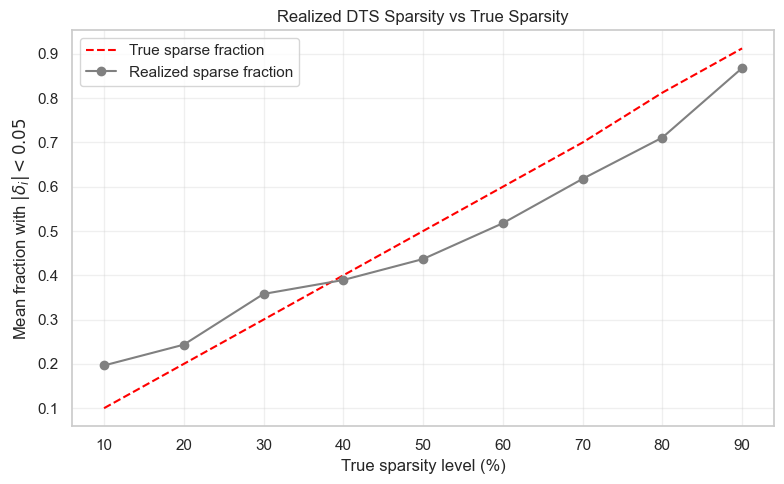}
      \caption{Realized DTS sparsity}\label{fig:star_simu_sparsity}
  \end{subfigure}
  \begin{subfigure}[t]{0.31\textwidth}
    \includegraphics[width=\textwidth]{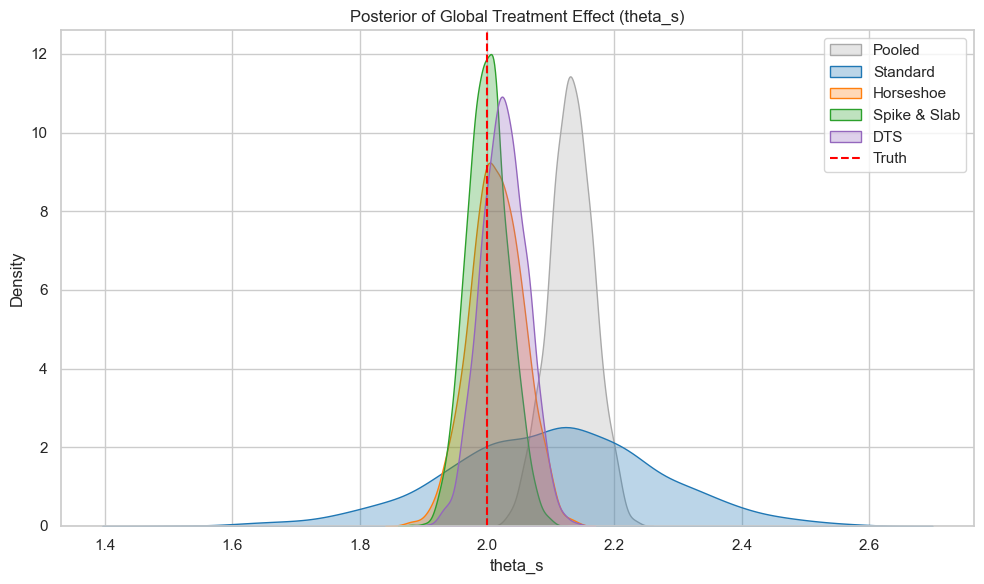}
    \caption{Posterior of $\beta$}
  \end{subfigure}
  \begin{subfigure}[t]{0.31\textwidth}
       \includegraphics[width=\textwidth,height=3cm]{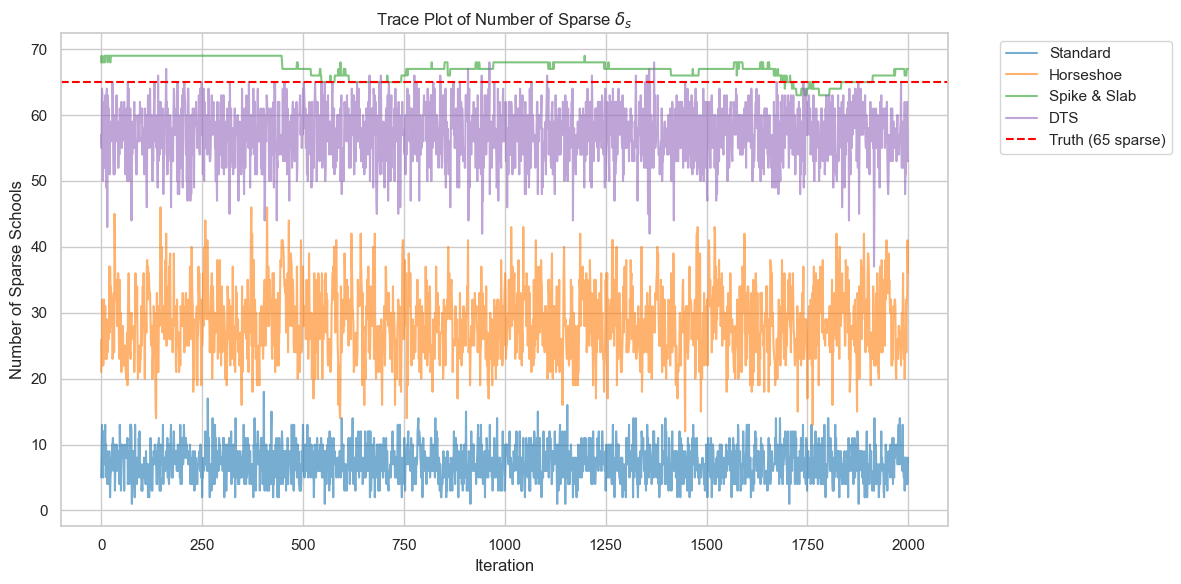}
       \caption{Trace of sparse $\#\{\delta_s: |\delta_s|<0.05\}$}
  \end{subfigure}
  \caption{Posterior of the global small-class effect $\beta$ and trace plot of sparse $\delta_s$'s. The red dashed lines indicate the true value of $\beta$ and the true sparsity level.}\label{fig:star_simu_beta_post}
\end{figure}

\Cref{fig:star_simu_deltas} provides a more detailed view of the school-specific deviations by showing posterior intervals for four schools with true $\delta_s=0$ and four schools with the largest nonzero $|\delta_s|$. The DTS model and spike-and-slab prior both exhibit stronger shrinkage toward zero for null schools while still recovering the larger deviations reasonably well. Overall, the simulation suggests that DTS provides an interpretable and computationally efficient way to characterize approximate generalizability.

\begin{figure}[H]
  \centering
  \begin{subfigure}[t]{\textwidth}
    \includegraphics[width=\textwidth,height=3cm]{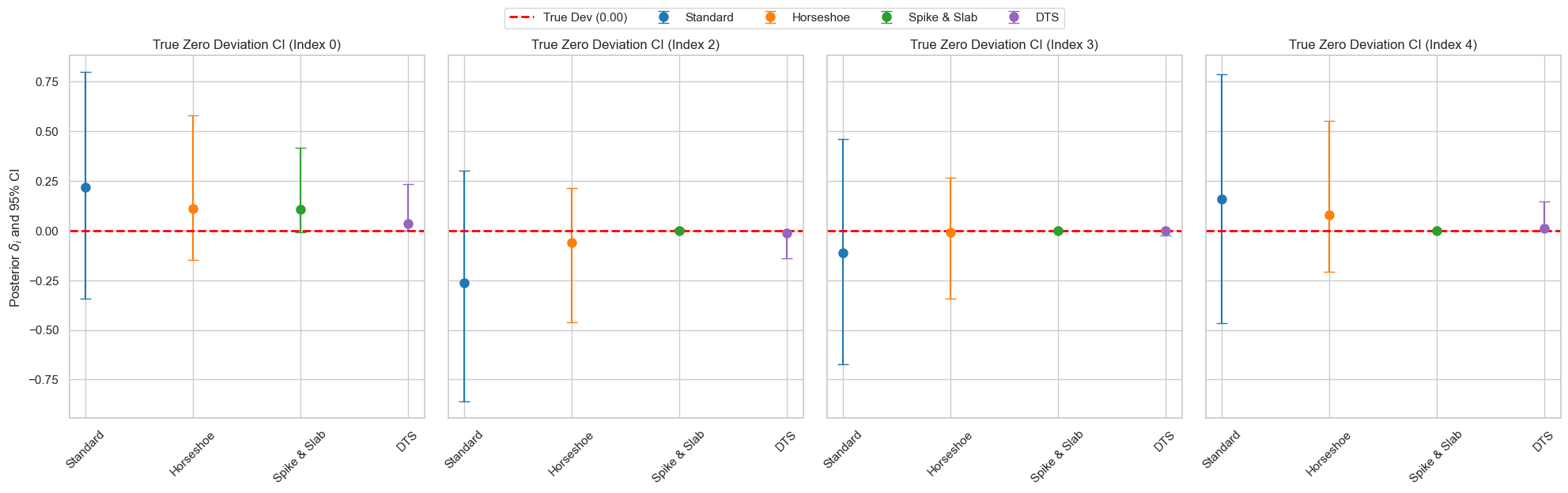}
    \caption{schools with $\delta_s=0$}
  \end{subfigure}
  \begin{subfigure}[t]{\textwidth}
    \includegraphics[width=\textwidth,height=3cm]{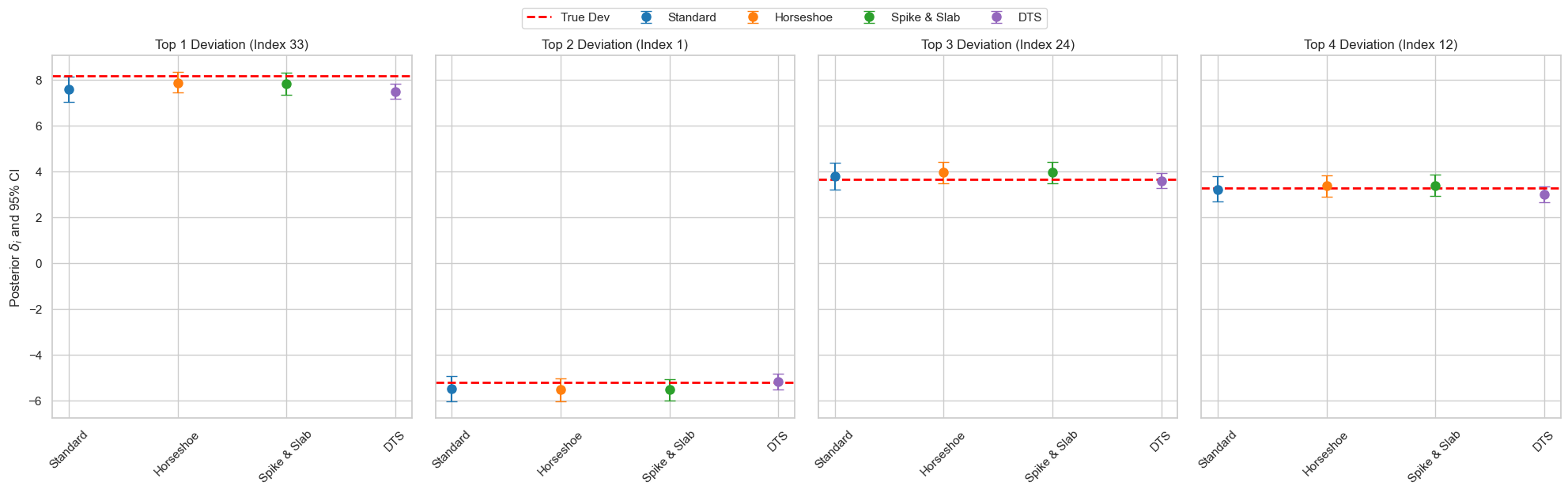}
    \caption{schools with largest $|\delta_s|$}
  \end{subfigure}
  \caption{The posterior range of $\delta_s$ for 4 schools with true $\delta_s=0$ and 4 schools with largest $|\delta_s|$ in the simulation study. The red dashed line indicates the true value of $\delta_s$ for each school.}\label{fig:star_simu_deltas}
\end{figure}

\subsection{Application to the student--teacher achievement ratio (STAR) dataset}

Now we apply the model in \eqref{eq:star_model} to the STAR dataset to assess the generalizability of the small-class effect across schools.
We summarize the posterior distribution of $\beta$, and 4 $\delta_i$'s with the largest posterior mean in absolute value in Figure \ref{fig:star_param_post}.  Similar to the simulation study, horseshoe and spike-and-slab models give similar posterior for $\beta$ as our method, whose posterior mean is lower than pooled regression and hierarchical models.

To summarize generalizability, we examine the number of schools with $|\delta_s|<0.05$ across posterior draws. This is not intended as a formal selection rule, but rather as an interpretable summary of how often the treatment effect effectively does not change at the school level. From Figure \ref{fig:star_param_post}, we observe that our method and the spike-and-slab model yield a similar sparsity pattern, with around $40\%$ to $50\%$ schools being identified as having $\delta_s$ close to zero across iterations, while the hierarchical model and the horseshoe model imply much lower sparsity.

To evaluate how well the global parameters generalize to new schools while accounting for potential deviation, we randomly hold out 20\% of the schools to examine the out-of-sample prediction performance of the global parameters estimated from each model. We repeat this process for 20 random splits of the schools and summarize the prediction error in Figure \ref{fig:star_param_post}(d). We report the Root Mean Squared Error (RMSE) for each method across the 20 splits. All models yield similar average predictive performance, but the DTS model exhibits fewer large-error outliers across splits, suggesting more stable generalization.

\begin{figure}[H]
  \centering
  \begin{subfigure}[t]{1\textwidth}
    \includegraphics[width=\textwidth,height=3cm]{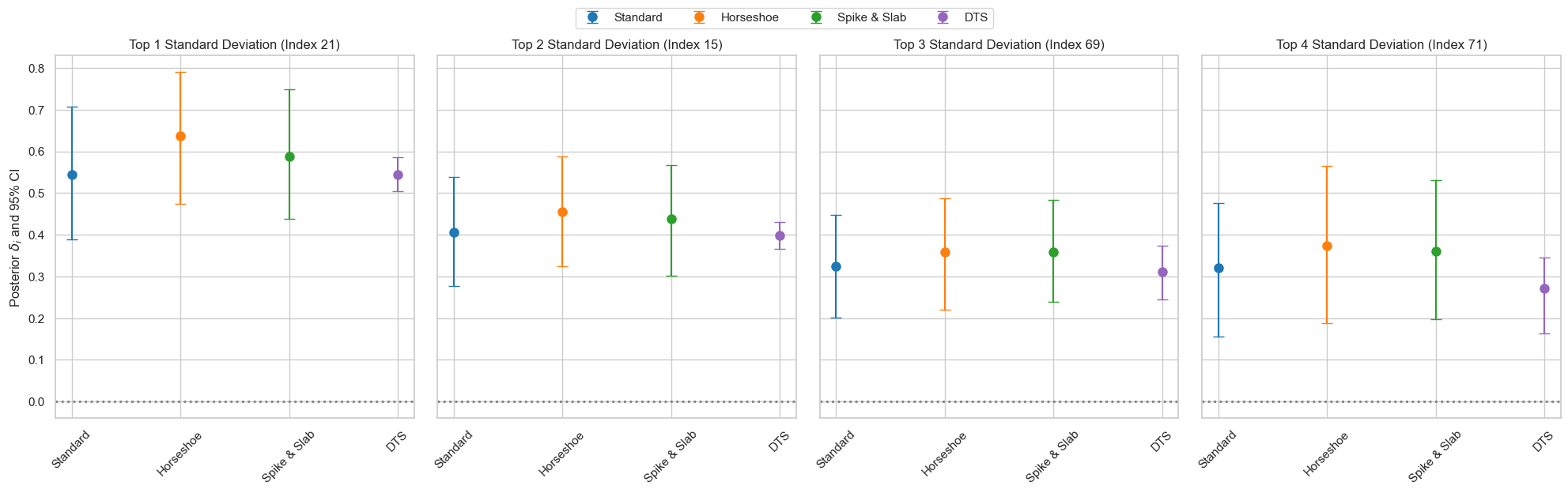}
    \caption{schools with largest $|\delta_s|$}
  \end{subfigure}\\
  \begin{subfigure}[t]{0.3\textwidth}
    \includegraphics[width=\textwidth,height=3.5cm]{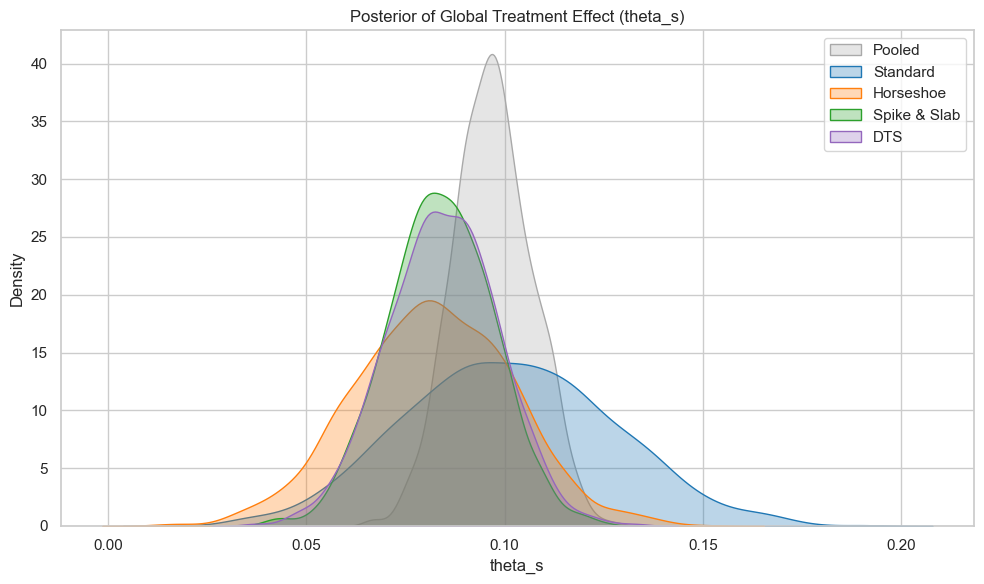}
    \caption{Posterior range of $\beta$}
  \end{subfigure}
  \begin{subfigure}[t]{0.3\textwidth}
    \includegraphics[width=\textwidth,height=3.5cm]{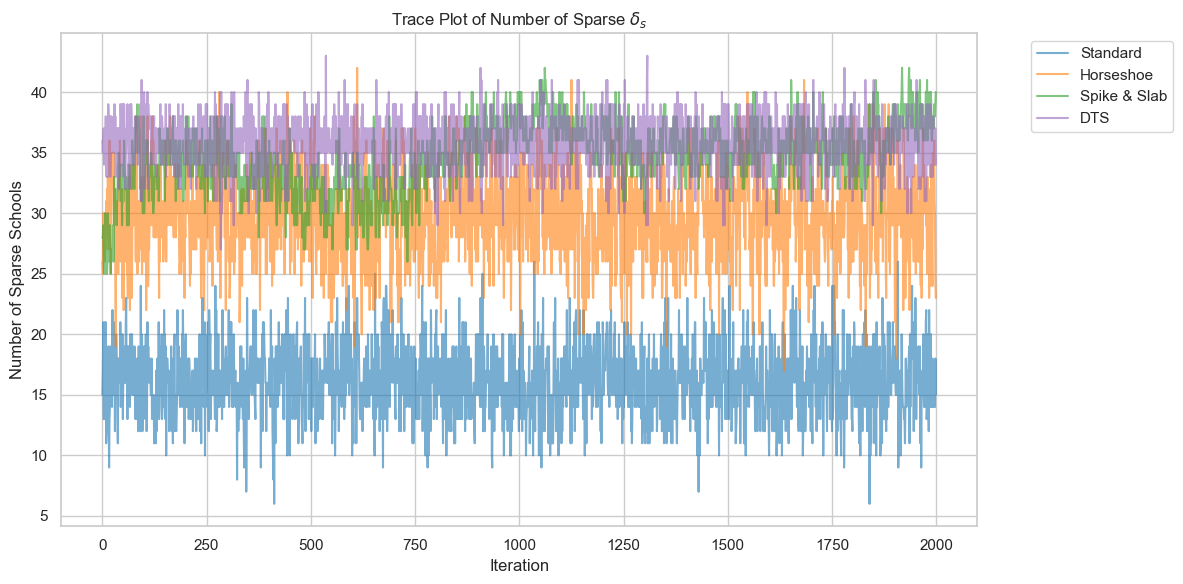}
    \caption{The trace of the number of sparse $\delta_s$'s across iterations in the real data analysis.}
    \end{subfigure}
    \begin{subfigure}[t]{0.3\textwidth}
      \includegraphics[width=\textwidth,height=3.5cm]{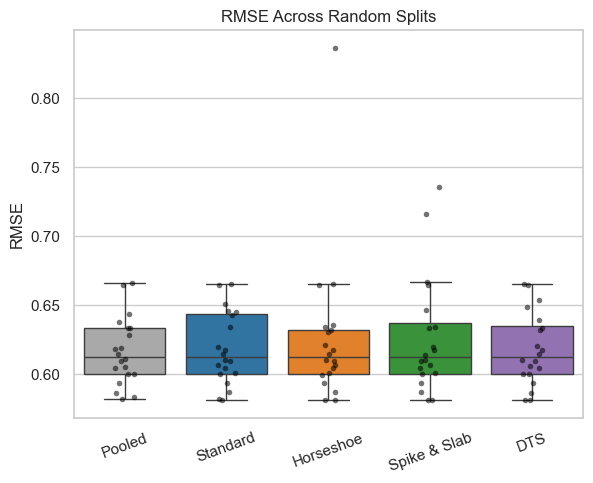}
      \caption{Prediction error across different train/test splits in the real data analysis.}
      \end{subfigure}
    \caption{Results from the STAR dataset analysis.}\label{fig:star_param_post}
\end{figure}

Finally, we report the computation time for all methods in the simulation study and the real data analysis in \Cref{tab:star_comp_time}. For all methods, we implement the MCMC using the NUTS sampler \citep{hoffman2014no}  and run the same number of iterations.  We find that sampling under our model is much faster than horseshoe and spike-and-slab models, and also outpaces the standard hierarchical model, likely due to the fact that we leverage the sparsity structure of $\delta_s$ and avoid the need to sample them separately. In this application, we find the distance-to-set formulation provides a useful combination of interpretability, competitive predictive performance, and reduced computational cost.

\begin{table}[!ht]
\caption{Computation Time Comparison (in seconds)}
\centering
\begin{tabular}{cccccc}
  \toprule
& pooled & hierarchical & horseshoe & spike\&slab & DTS \\
\midrule
simulation & 5.2  & 12.0 & 39.9 & 143.7 & 10.7  \\
real data &  52.5 &164.1  &189.0 & 1044.8 &  107.1 \\
\bottomrule
\end{tabular}\label{tab:star_comp_time}
\end{table}

\section{Application in Transfer learning}
In this section, we consider a transfer learning problem, where we seek to use information from a source dataset $S$ to improve inference in a related but smaller target dataset $T$ \citep{pan2009survey}. We denote the size of $S$ and $T$ as $n_S$ and $n_T$, respectively, and $n_T<n_S$.

A common Bayesian approach is to first fit a model to the source dataset $S$, then extract a {\em point} estimate (such as the posterior mean) and  then use it to construct a subjective prior for the target dataset $T$. While this procedure is valid, it is suboptimal, since it discards posterior uncertainty from $S$ that can be large unless $n_S$ is very large.
Another alternative is to fit a joint model for $S$ and $T$. However, a naive joint formulation can induce undesirable reverse transfer: information from the noisy target sample can distort inference on the source parameter. To illustrate,
consider the following model
\(
  & y_S = X_S \beta_S + \epsilon_S, \quad \epsilon_S \sim \text{N}(0, \sigma_S^2 I_{n_S}),\quad \beta_S \sim \text{N}(0, \Sigma_S),\\
& y_T = X_T \beta_T + \epsilon_T, \quad \epsilon_T \sim \text{N}(0, \sigma_T^2 I_{n_T}),\quad \beta_T \sim \text{N}(\beta_S, \lambda^{-1} I_{p}),
\)
where the $\lambda>0$ controls the amount of information transfer from $\beta_S$ to $\beta_T$. Although this formulation appears to encourage borrowing strength from $S$ to $T$, it also allows feedback in the opposite direction.  In particular, as $\lambda\to \infty$, $\beta_S \stackrel{D}{\to} \text{N}(m_\infty,V_\infty)$, where
\[ V_\infty^{-1}
=
\Sigma_S^{-1}
+\frac{1}{\sigma_S^2}X_S^\top X_S
+\frac{1}{\sigma_T^2}X_T^\top X_T,
 \quad
\text{ and } \quad
m_\infty
=
V_\infty\left(
\frac{1}{\sigma_S^2}X_S^\top y_S
+\frac{1}{\sigma_T^2}X_T^\top y_T
\right).
\]
Here, the source estimate is severely impacted by the target data, which is not desirable.

\subsection{Bayesian transfer learning via distance-to-ellipsoid model}

Instead, we can model
the target model as a distance-to-set model, as the following:
\[\label{eq:transfer_learning_distance_to_set}
  & y_S = X_S \beta_S + \epsilon_S, \quad \epsilon_S \sim \text{N}(0, \sigma_S^2 I_{n_S}),\\
  & \beta_S \sim \text{N}(0, \Sigma_S),\\
  & \mathcal L (y_T \mid \beta_S, r) \propto \exp(- \omega{  \min_{ \beta_T: \|\beta_T-\beta_S\|_2\le r}\|y_T - X_T\beta_T\|_2^2})1(y_T\not\in \mathcal Z_{\theta}).
\]
In this case, the set $\mathcal Z_{\theta}=
X_T\beta_S+\{X_Tu:\|u\|_2\le r\}$ is an ellipsoid centered at the source-based prediction. It is not hard to see that the projection simply corresponds to a ridge estimator:
\(
\hat \beta_T=
\bigl(X_T^\top X_T+\lambda_r I_p\bigr)^{-1}
\bigl(X_T^\top y_T+\lambda_r \beta_S\bigr)
=
\beta_S+\bigl(X_T^\top X_T+\lambda_r I_p\bigr)^{-1}X_T^\top (y_T-X_T\beta_S),
\)
where $\lambda_r>0$ is chosen such that $\|\hat\beta_T-\beta_S\|_2=r$. Since we assume $y_T\not\in \mathcal Z_{\theta}$, $\lambda_r>0$ is always the case. With a proper prior $\pi^r_0$ on $r$, when $\lambda=\lambda_r>0$, the posterior of $\beta_S$ is given by
\(
& \pi(\beta_S\mid y_S,y_T,\lambda) \propto \pi^r_0( \|\hat\beta_T-\beta_S\|_2 ) \left|\frac{d (\beta_S,r)}{d(\beta_S,\lambda)}\right| \bigg\vert_{r=\|\hat\beta_T-\beta_S\|_2} \\
& \cdot
\exp\!\left(
-\frac{1}{2\sigma_S^2}\|y_S-X_S\beta_S\|_2^2
-\frac12 \beta_S^\top \Sigma_S^{-1}\beta_S
-\omega_r\, \|(I-H_\lambda)(y_T-X_T\beta_S)\|_2^2
\right),
\)
where $H_\lambda:=X_T(X_T^\top X_T+\lambda I_p)^{-1}X_T^\top$, and we use change of variable from $r$ to $\lambda$. The second line corresponds to a kernel of $\text{N}(m_\lambda,V_\lambda)$:
\(
& V_\lambda^{-1}
=
\Sigma_S^{-1}
+\frac{1}{\sigma_S^2}X_S^\top X_S
+2\omega X_T^\top (I-H_\lambda)^2 X_T,\\
& m_\lambda
=
V_\lambda
\left(
\frac{1}{\sigma_S^2}X_S^\top y_S
+
2\omega X_T^\top (I-H_\lambda)^2 y_T
\right).
\)
As $\lambda\to \infty$, $\hat\beta_T\to \beta_S$. In the meantime, $H_\lambda\to 0$, but the $ V_\lambda^{-1}\to \Sigma_S^{-1}+ (\sigma_S^2)^{-1} X_S^\top X_S +2\omega X_T^\top X_T$. Therefore, with $\omega< (2\sigma_S^2)^{-1}$, the reverse impact from the target data is controlled.

While one could set $\omega$ to be  decreasing in $\lambda$, this is not necessary because the distance-to-set model automatically tunes $\omega$ in the following sense: as $\lambda$ increases, the distance
$
  \|(I-H_\lambda)(y_T-X_T\beta_S)\|_2
$
increases so that the distance kernel in \eqref{eq:transfer_learning_distance_to_set} favors smaller $\omega$. In this application, we assign a simple shrinkage prior $\omega\sim \text{Exp}(1)$, and half-Cauchy prior $r\sim \text{C}^+(0.01)$.

\begin{remark}
  Using the envelope-mixture equivalence for Gaussian location mixture \citep{polson2016mixtures}, there exists $(r,\tilde\lambda): r=\|\hat\beta_T(\tilde\lambda, y_T, \beta_S)-\beta_S\|_2$, such that \\ $\max_{\beta_T: \|\beta_T-\beta_S\|_2\le r}\exp(- \omega{  \|y_T - X_T\beta_T\|_2^2}) \propto \int \exp(- {  \omega \|y_T - X_T b \|_2^2 - \omega \tilde\lambda\|b-\beta_S\|_2^2}) \text{d}b,$ for given $y_T,\beta_S$. Despite some connection, $\tilde \lambda$ here is an implicit function of $y_T,\beta_S$, whereas $\lambda$ in the joint model is independent of $y_T,\beta_S$, hence the two models differ fundamentally.
\end{remark}

\subsection{Bayesian transfer learning simulations}
We evaluate the distance-to-set transfer learning model in  a simulation study, and compare it to a naive joint hierarchical model. We set $n_S = 200$, $n_T = 30$, and $p = 5$. The true source coefficient is $\beta^0_S = (1.5, 1.2, 1.8, 0, 0)^\top$. To examine a range of transferability conditions, we generate target coefficients according to $\beta^0_T = \beta^0_S + \alpha\,\varepsilon$ with $\varepsilon \sim \mathrm{N}(0, I_p)$ and use $\alpha=0.05, 2.0$ and $8.0$, corresponding to small, moderate and large deviation from the source.
Source and target data are generated as $y_S = X_S\beta^0_S + \varepsilon_S$ and $y_T = X_T\beta^0_T + \varepsilon_T$, with $\varepsilon_S \sim \mathrm{N}(0, I_{n_S})$ and $\varepsilon_T \sim \mathrm{N}(0, I_{n_T})$, and design matrices with i.i.d.\ standard normal entries.

Figure~\ref{fig:sim_betaT_comparison}(a) shows the marginal posterior distributions of $\hat\beta_T$ under the distance-to-set model for each scenario. When the deviation is small ($\alpha = 0.05$), the target is effectively exchangeable with the source, and the posterior concentrates tightly near the source OLS estimate, demonstrating effective borrowing of strength. Under moderate deviation ($\alpha = 2.0$), the model exhibits bimodal posteriors for several coordinates, capturing genuine uncertainty about whether source information is transferable for each component of $\beta_T$. Under large deviation ($\alpha = 8.0$), the source is of limited relevance and the posteriors spread out, gravitating toward the target-only OLS estimate rather than forcing transfer.

\begin{figure}[H]
  \begin{subfigure}[t]{0.48\textwidth}
  \includegraphics[width=\textwidth]{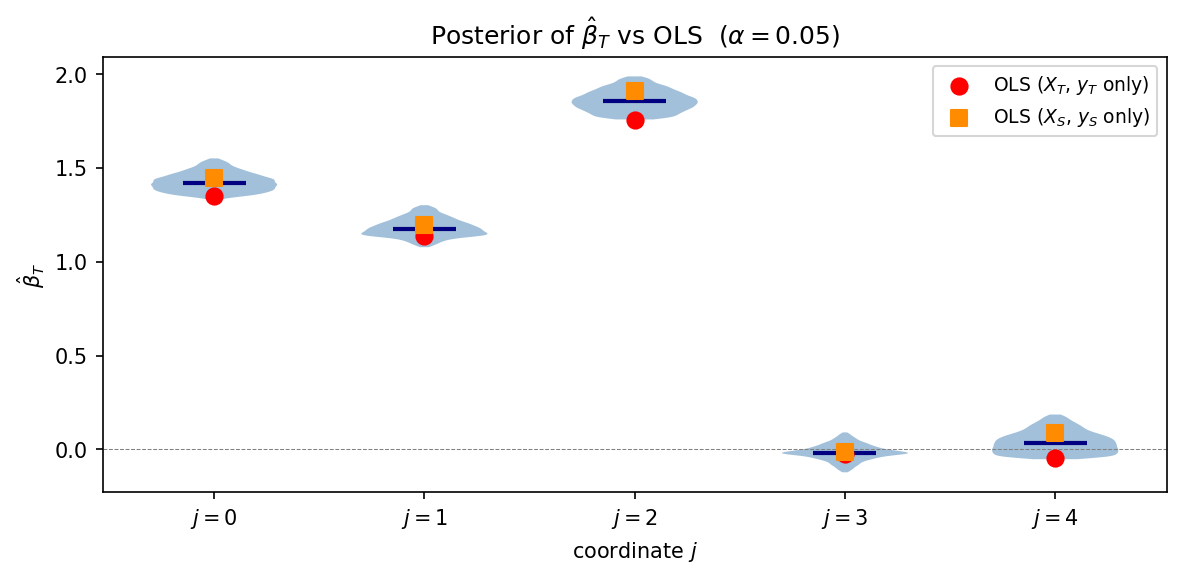}\\
  \includegraphics[width=\textwidth]{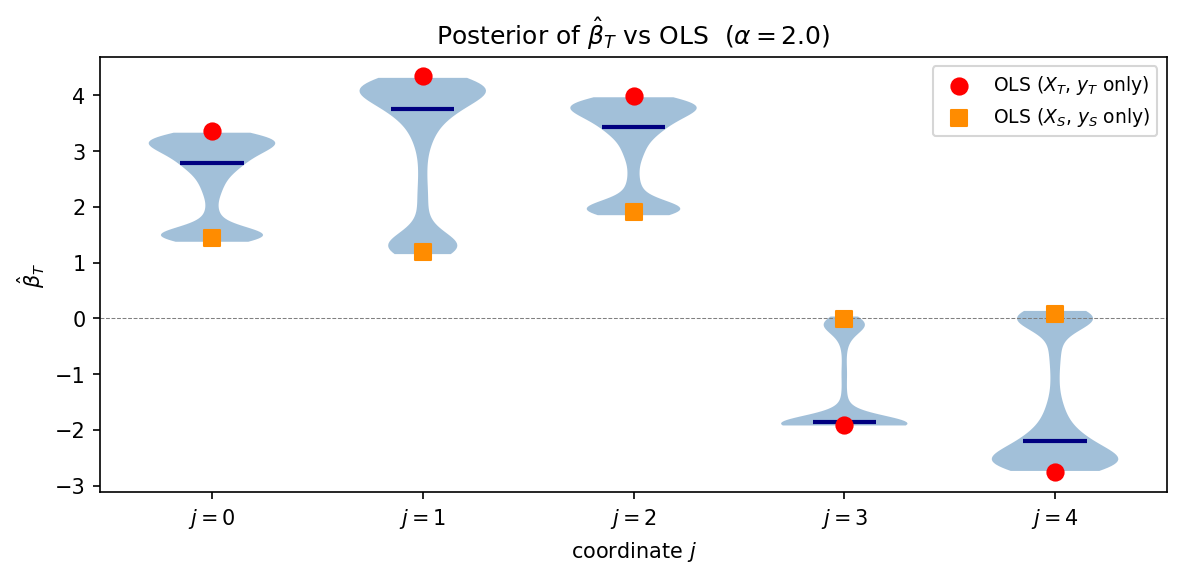}\\
  \includegraphics[width=\textwidth]{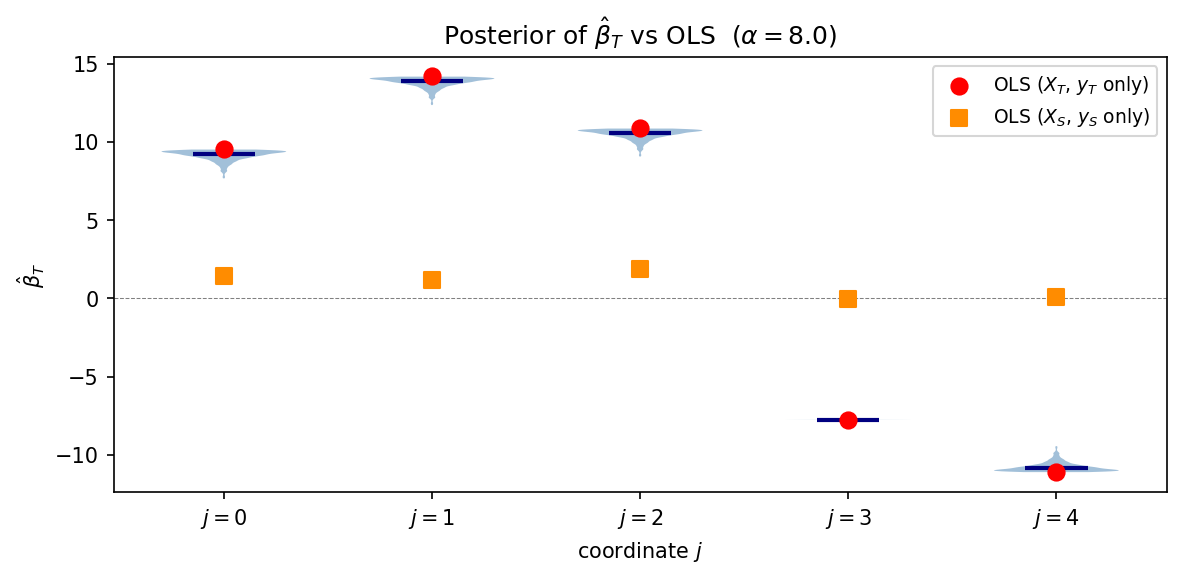}
  \caption{Posterior of $\hat\beta_T$ under the distance-to-set model.}
\end{subfigure}
\hfill
\begin{subfigure}[t]{0.48\textwidth}
  \includegraphics[width=\textwidth]{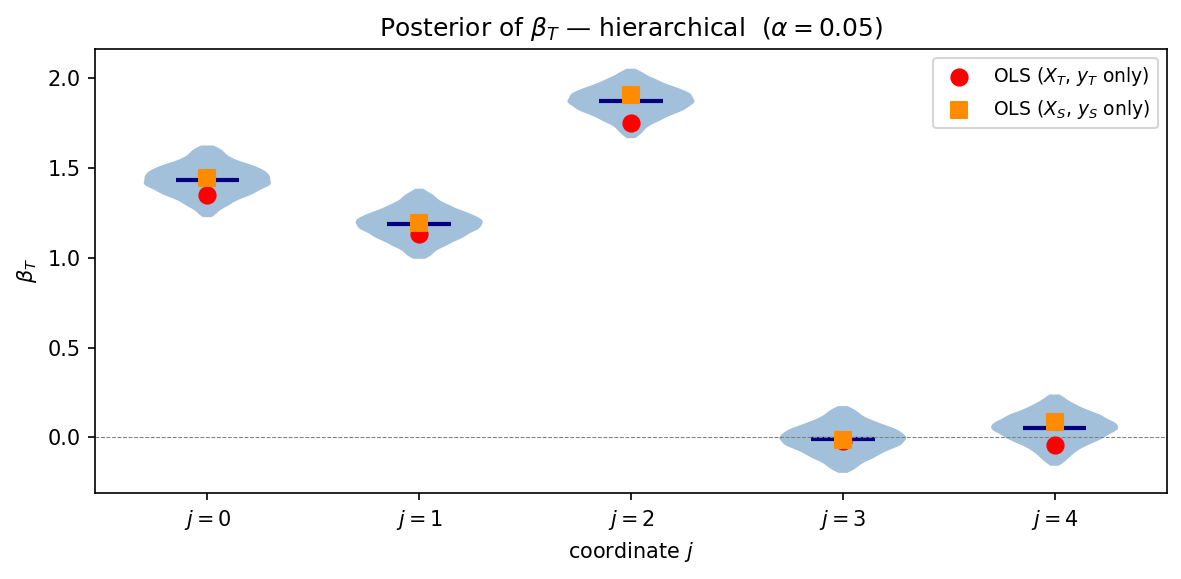}\\
  \includegraphics[width=\textwidth]{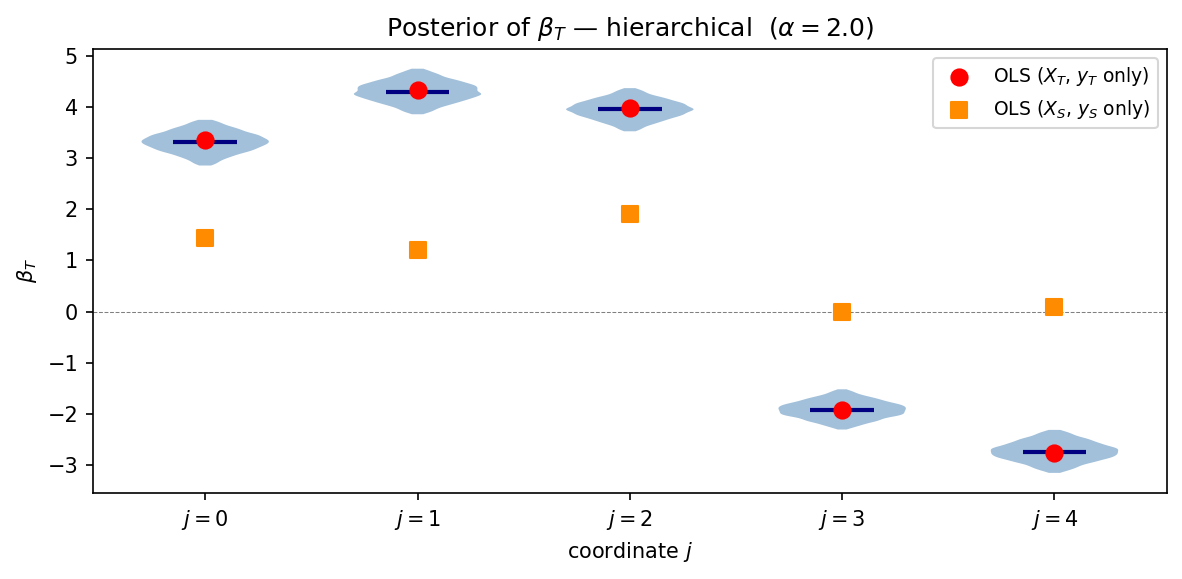}\\
  \includegraphics[width=\textwidth]{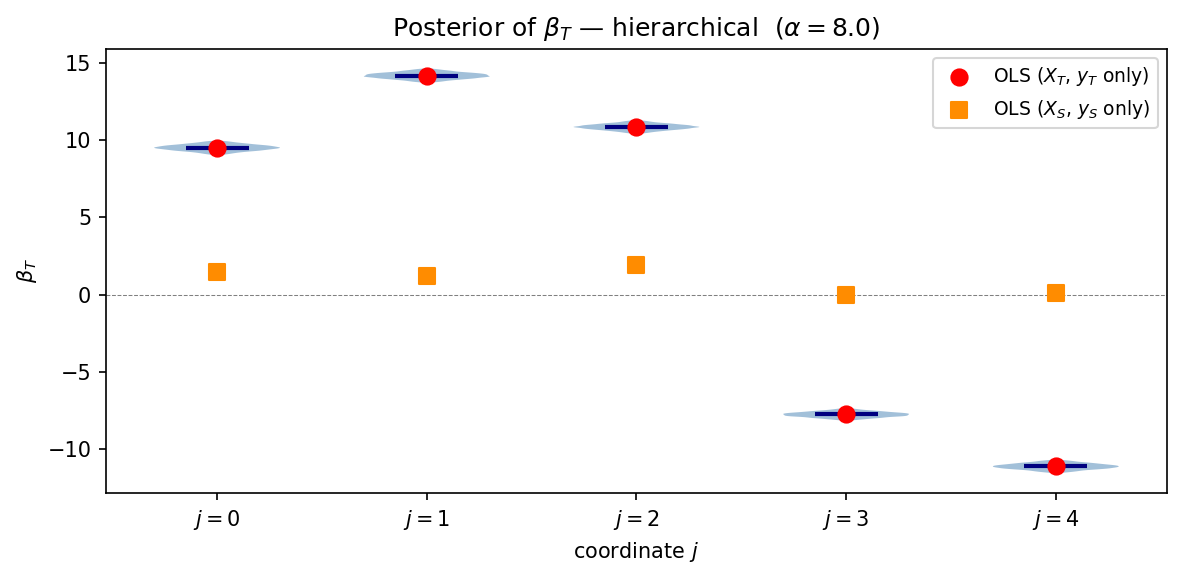}
  \caption{Posterior of $\beta_T$ under the naive joint model.}
\end{subfigure}
\caption{Comparison of the posterior of target regression coefficient under the distance-to-set model and the naive joint model for small ($\alpha = 0.05$, top), moderate ($\alpha = 2.0$, middle), and large ($\alpha = 8.0$, bottom) deviation. Violin plots display the marginal posterior for each coordinate; red circles denote the target-only OLS estimate and orange squares the source-only OLS estimate.\label{fig:sim_betaT_comparison}}
\end{figure}

\begin{figure}[H]
  \begin{subfigure}[t]{0.48\textwidth}
    \includegraphics[width=\textwidth]{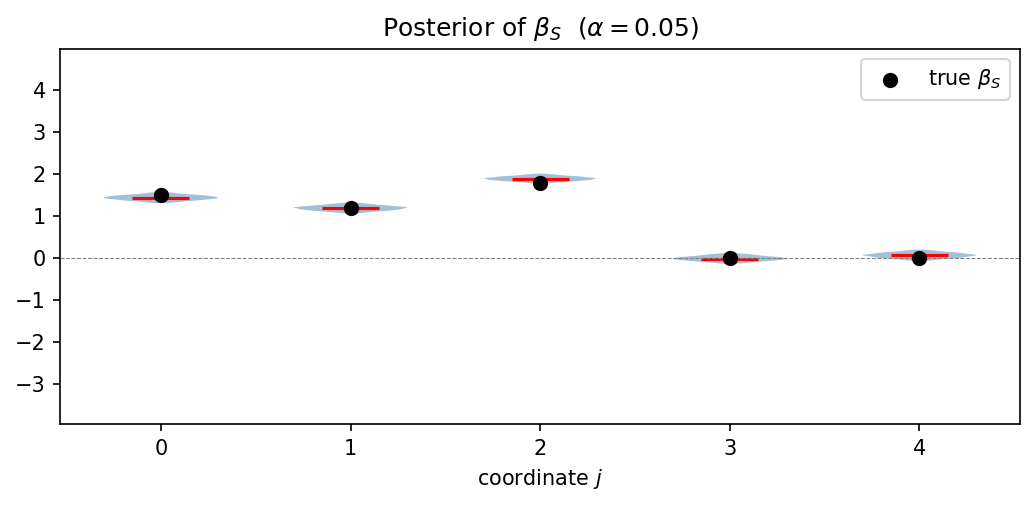}\\
    \includegraphics[width=\textwidth]{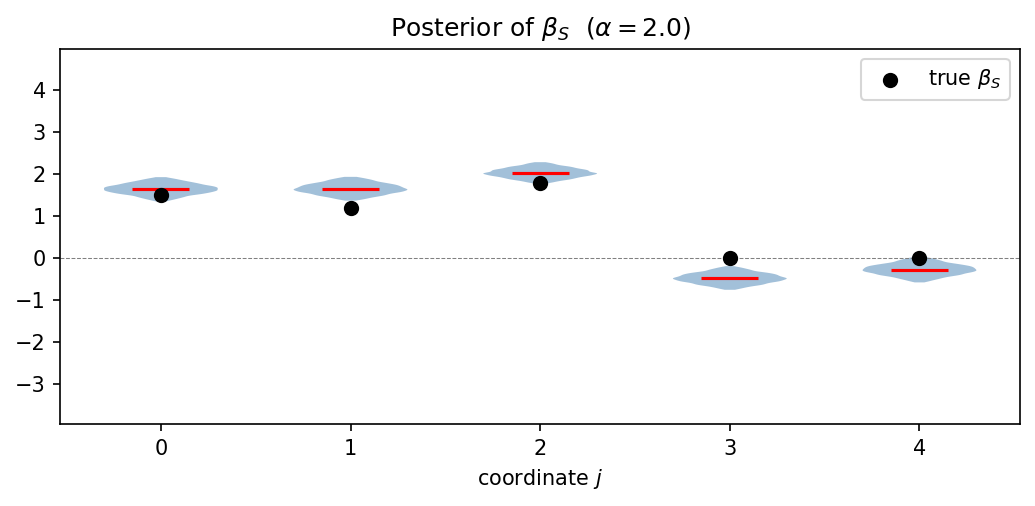}\\
    \includegraphics[width=\textwidth]{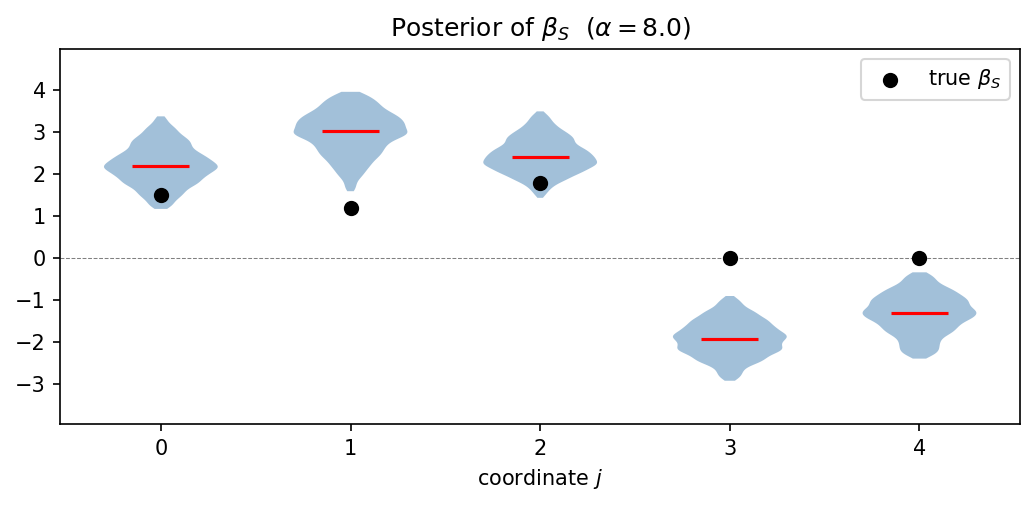}
    \caption{Posterior of $\beta_S$ under the naive joint model.}
  \end{subfigure}
  \hfill
  \begin{subfigure}[t]{0.48\textwidth}
    \includegraphics[width=\textwidth]{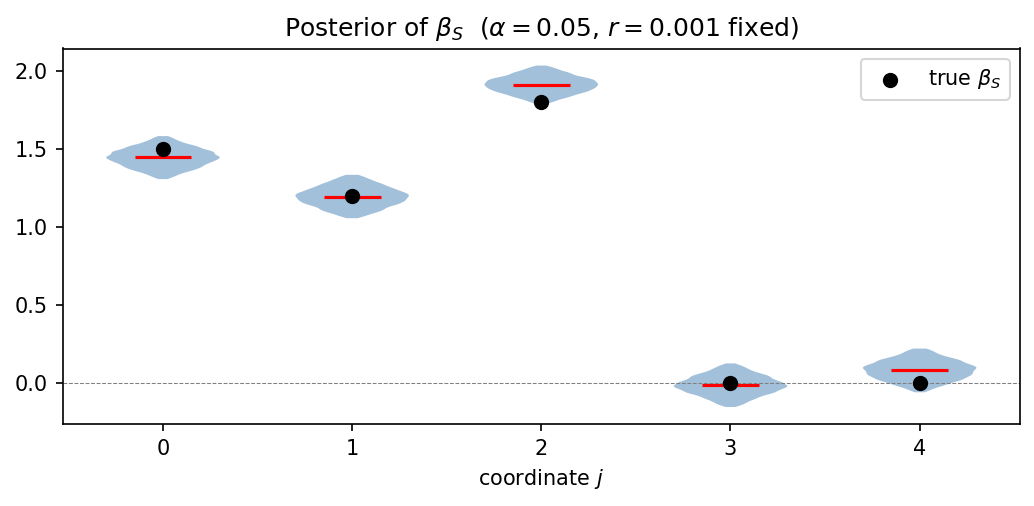}\\
    \includegraphics[width=\textwidth]{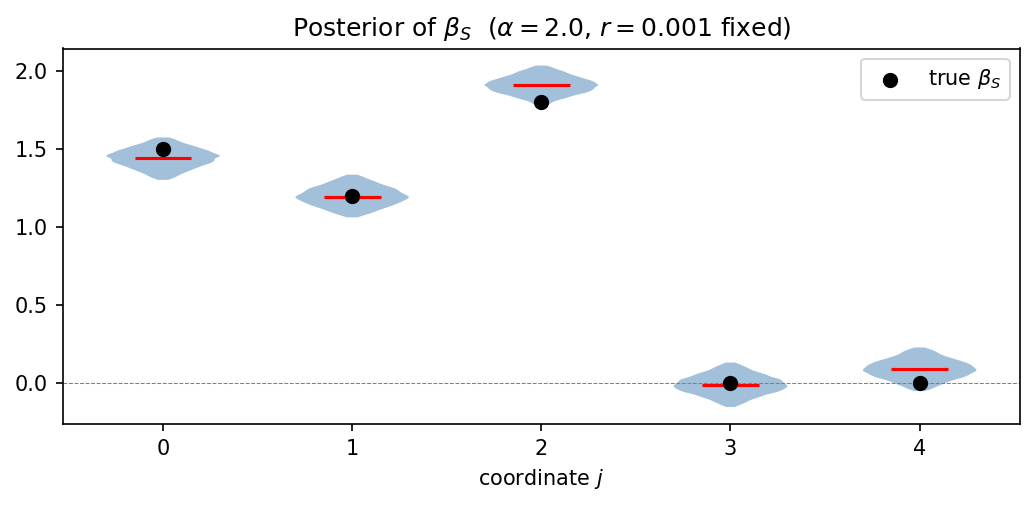}\\
    \includegraphics[width=\textwidth]{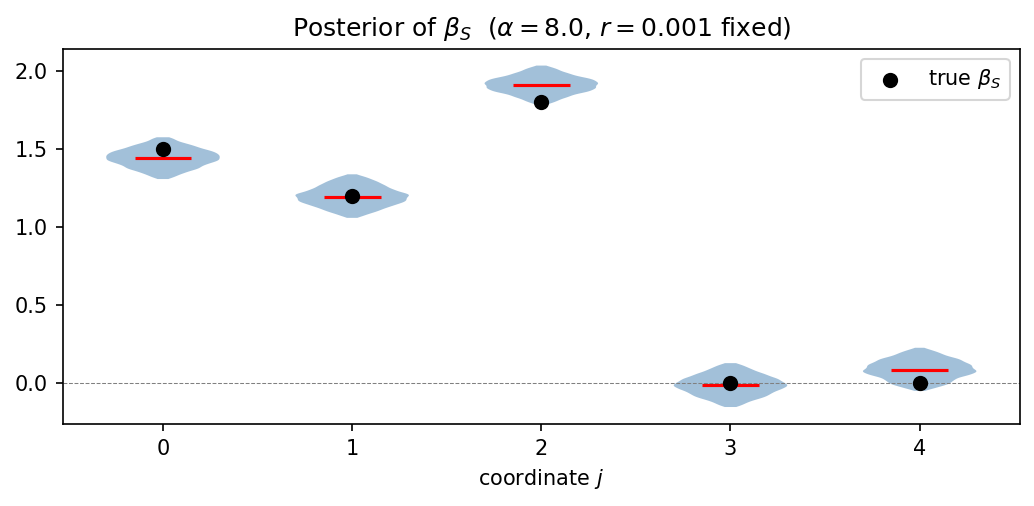}
    \caption{Posterior of $\beta_S$ under the distance-to-set model.}
  \end{subfigure}
  \caption{Posterior of $\beta_S$ under the naive joint model and the distance-to-set model for small ($\alpha = 0.05$), moderate ($\alpha = 2.0$), and large ($\alpha = 8.0$) deviation, with fixed coupling strength. In the naive joint model, the joint coupling distorts $\beta_S$ posterior away from the source-only estimate as deviation increases. In the distance-to-set model, $\beta_S$ is almost  unaffected by the target data.}
  \label{fig:sim_reverse_transfer}
\end{figure}

Figure~\ref{fig:sim_betaT_comparison}(b) displays the corresponding posteriors of $\beta_T$ under the naive joint model, in which $\beta_T \sim \mathrm{N}(\beta_S, \lambda^{-1} I_p)$ with $\lambda$ assigned a half-Cauchy prior $\text{C}^+(0.01)$. In contrast to the distance-to-set model, the joint model produces more diffuse posteriors despite large source sample size and it closely tracks the target-only OLS estimates.
Furthermore, under moderate deviation, the posterior is still unimodal --- the prior of $\beta_T$ fails to capture the uncertainty about whether transfer is warranted for each coordinate.

A more concerning limitation of the naive joint models is the effect of \emph{reverse transfer}. Due to the coupling between $\beta_S$ and $\beta_T$, information from the small, noisy target dataset can contaminate the posterior of $\beta_S$, which should be well-determined by the large source sample alone. To illustrate this, Figure~\ref{fig:sim_reverse_transfer} shows the posteriors of $\beta_T$ and $\beta_S$ under a joint model with a fixed, large coupling strength ($\lambda$ fixed at $1000$). When the deviation is small, the coupling is harmless. However, as $\alpha$ increases and $\beta^0_T$ departs substantially from $\beta^0_S$, the joint model pulls the posterior of $\beta_S$ toward the target estimate, distorting inference on a parameter that the source data alone should determine accurately. In contrast, the distance-to-set model is essentially immune to this issue by design: even when the coupling strength is large (with $r$ fixed at $0.001$), the projection satisfies $\hat\beta_T \approx \beta_S$, but because the distance is large (as the set shrinks with $r \downarrow 0$), $\omega$ becomes small. As a result, the posterior of $\beta_S$ remains effectively unaffected by the target data.

\subsection{Application in click-through rate prediction}
Finally, we consider an application of transfer learning to click-through rate (CTR) prediction in online advertising. Accurate user-level CTR estimates are central to bidding, budget allocation and campaign optimization, but data availability often varies substantially across campaigns.
While a large, established campaign may collect thousands of impressions, a newly launched or niche campaign often has far fewer observations, making reliable estimation challenging. Transfer learning provides a principled framework to leverage information from a data-rich source campaign to improve inference in a data-scarce target campaign, provided both share a common feature space and have reasonably similar regression coefficients.

We use the Criteo Private Ad dataset \citep{criteo2025privateAd}, which contains detailed impression-level logs from a real-world advertising platform. Our analysis considers data spanning three consecutive days and focuses on two campaigns: the source campaign with $n_S = 2{,}000$ users and 3,513 impressions, and the target campaign with $n_T = 200$ users and 762 impressions. To facilitate modeling, impression-level data are aggregated at the user level. The response for each user $i$ is defined as $y_i = \operatorname{logit}(\widehat{\mathrm{CTR}}_i)$, where $\widehat{\mathrm{CTR}}_i$ denotes the empirical click-through rate for that user. Each user's covariate vector is given by the mean of their numeric impression-level features across all impressions. The feature set contains $p = 21$ variables, grouped into four categories: (i) key-value features, (ii) features computed within the browser bits, (iii) features present in Criteo’s production system but intentionally made hidden to the bidding system due to privacy constraints, and (iv) contextual features. All features are standardized using the mean and standard deviation computed from the source data.

Our main goal is to obtain accurate predictions for the regression problem in the target campaign. We apply the distance-to-set transfer learning model \eqref{eq:transfer_learning_distance_to_set} to estimate the target regression coefficient $\beta_T \in \mathbb{R}^p$, incorporating information from the source campaign. The source coefficient $\beta_S$ is estimated jointly with $\beta_T$, and its posterior uncertainty is fully propagated through the inference procedure.

\begin{figure}[H]
  \centering
  \includegraphics[width=1\textwidth]{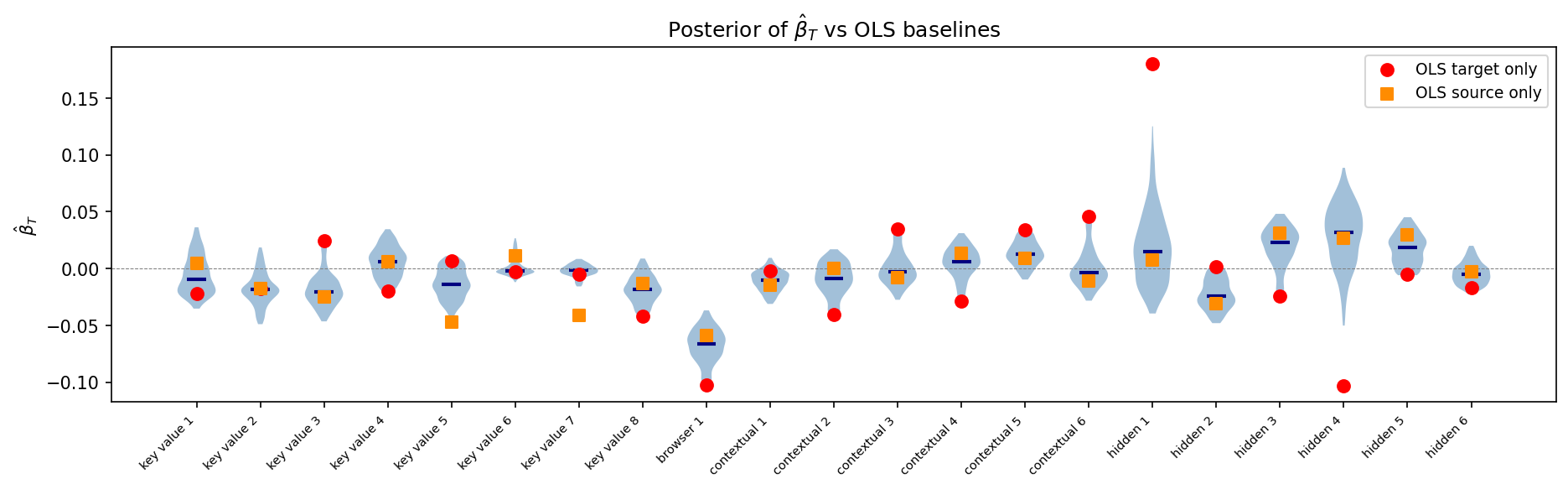}
  \caption{Posterior of target regression coefficients $\hat\beta_T$ under the distance-to-set model compared to OLS estimates fit on the target or source data only.}
  \label{fig:beta_T_hat_vs_ols}
\end{figure}

Figure~\ref{fig:beta_T_hat_vs_ols} shows the posterior $\hat\beta_T$ under the distance-to-set model, compared to the target-only and source-only OLS estimators. Although the source campaign has a reasonably large sample size ($n_S$), there remains appreciable posterior uncertainty in the transferred coefficients. The results demonstrate that the posterior of $\beta_T$ does not merely transfer all the information from $\beta_S$, instead, it is selective in adapting the information: several feature coefficients are completely based on the source data (e.g., hidden 1 and 4), while some features coefficients are based on the target data (e.g., key value 7).

For comparison, we assess the predictive performance of the distance-to-set model by using the posterior mean of $\hat\beta_T$ alongside three alternative approaches: (i) the ordinary least squares estimator fit solely on the target data; (ii) naive full-transfer, which applies the source OLS directly to the target cohort; and (iii) the transfer-LASSO estimator \citep{li2022transfer}.

\begin{figure}[H]
  \centering
  \includegraphics[width=0.5\textwidth]{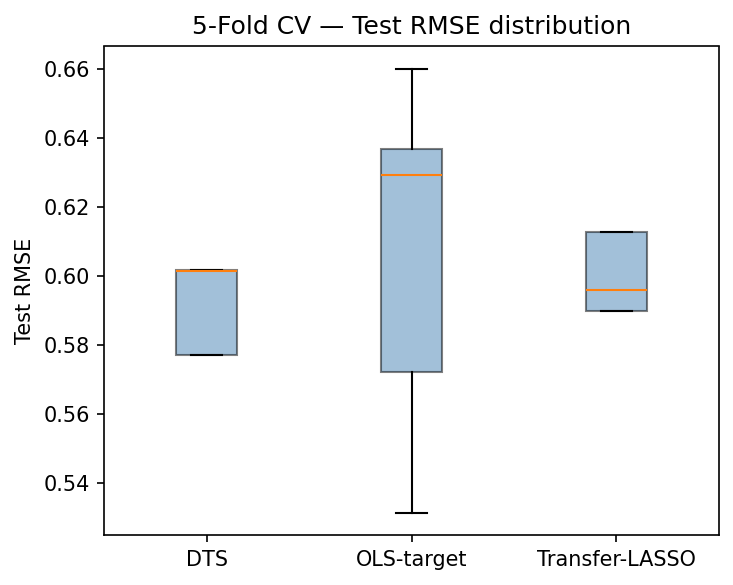}
  \caption{Prediction performance: 5-fold cross-validated root mean squared error (RMSE) for different estimators in the target campaign: OLS on target data (OLS-target), distance-to-set Bayesian transfer (DTS), and Transfer-LASSO \citep{li2022transfer}. DTS achieves the lowest mean RMSE and lowest variance across folds; full transfer (using source OLS) performs very poorly due to distributional mismatch (RMSE $2.03$), hence not shown in the figure.}
  \label{fig:cv_mse_boxplot}
\end{figure}

We use 5-fold cross-validation to evaluate the predictive performance of different estimators (with 160 subjects for training and 40 subjects for testing). Figure~\ref{fig:cv_mse_boxplot} shows the boxplot of prediction root mean squared error (RMSE). Notably, direct application of the source coefficients leads to poor performance on the target data, with a mean RMSE of $2.03$ (more than three times worse than the other three methods), so we do not show the results in the figure. This highlights substantial differences between campaigns and the inadequacy of full transfer without adaptation. In contrast, the distance-to-set transfer model achieves the lowest mean RMSE ($0.580$) among the methods, as well as the lowest standard deviation across folds ($0.063$), indicating the most stable out-of-sample prediction. The OLS-target estimator attains a mean RMSE of $0.606$ with standard deviation $0.053$. The transfer-LASSO achieves a mean RMSE of $0.585$ with standard deviation $0.066$, and it transfers 19 features from source, and adjusts only key value 6 and key value 7 in all folds. These results demonstrate that the distance-to-set model not only excels in mean predictive accuracy, but also offers more stable performance across folds.

\section{Discussion}

The distance-to-set framework enables the incorporation of an optimization component within the likelihood, blending canonical Bayesian statistical inference with projection-based computation. A central advantage over conventional latent variable models is that the marginal likelihood remains tractable. The latent projection $\hat z_i$ is obtained as a byproduct of likelihood evaluation, rather than as an additional random variable to be sampled, thereby avoiding the high-dimensional augmented space that often leads to slow mixing in data-augmented MCMC. At the same time, the projection retains an interpretable role as a latent representation of the observation relative to the underlying set, with uncertainty inherited from the posterior uncertainty over $\theta$.

Several interesting directions remain for future work. First, as shown in the transfer learning example, the invariance of the projection allows us to separate source and target information in a principled way.
Extending this framework to handle multiple sources or nonlinear feature maps would be a natural next step. Second, the distance-to-set approach connects closely to other ideas based on geometric discrepancy. Recent work on epigraph projections and proximal MCMC \citep{zhou2024proximal,shukla2025proximal} extend the class of kernels that can be considered beyond the squared distance we focus on here, and recent work on distributionally robust optimization \citep{blanchet2025distributionally}
and predictively oriented posteriors \citep{mclatchie2025predictively} share connections to these contributions as well.
Third, in some applications, exact projection onto $\mathcal Z_\theta$ may be unavailable or computationally expensive, motivating the development of numerical solutions and a library of high quality approximations. Developing practical algorithms and corresponding theoretical guarantees for such settings remains an important open problem.

\medskip
\noindent\textbf{Disclosure of the use of generative AI:} In preparing this paper, the authors used Claude Code to assist with code development and figure plotting. The authors take full responsibility for all content presented.

\spacingset{0.5}

\bibliographystyle{chicago}
\bibliography{ref_fixed}

\spacingset{1.8}

\section{Supplementary Materials}

\subsection{Proofs}

\textbf{Proof of Theorem 1}

\begin{proof} If translation--equivariance holds for the projection rule $\mathcal P_{\mathcal M_\theta}$, then
  $$
  \begin{aligned}
  v\in V_{\mathcal M_\theta +b}(z+b)
  &\Rightarrow z+b\in \mathcal P_{\mathcal M_\theta +b}(z+b+v) \Rightarrow z+b\in \mathcal P_{\mathcal M_\theta}(z+v)+b \Rightarrow z\in \mathcal P_{\mathcal M_\theta}(z+v) \\
  &\Rightarrow v\in V_{\mathcal M_\theta}(z).
  \end{aligned}
  $$
  Conversely, take any $y$ and $z\in \mathcal P_{\mathcal M_\theta}(y)$, so $y-z\in V_{\mathcal M_\theta}(z)=V_{\mathcal M_\theta+b}(z+b)$ yields $z+b\in \mathcal P_{\mathcal M_\theta +b}(y+b)$ for all $b$, which is exactly translation--equivariance.
  \end{proof}
  \smallskip
\textbf{Proof of Theorem 2}

\begin{proof}
  For convex $\mathcal Z_{\theta}$, we can apply the Steiner formula (see \cite{colesanti2022hadwiger} for details):
  \begin{equation}
    V(\mathcal Z_{\theta}^t)=\sum_{k=0}^{d} \left(\kappa_{d-k} V_k(\mathcal Z_{\theta}) \right )\,\,t^{d-k},\qquad t> 0,
  \end{equation}
  where $\kappa_j=\pi^{j/2}/\Gamma(j/2+1)$ is the $j$-dimensional volume of the unit Euclidean ball, and $V_k(\mathcal Z_{\theta})>0$ is the $k$-th dimensional volume. Differentiating $V(\mathcal Z_{\theta}^t)$ and integrating \eqref{eq:normalizing_constant} yields
  \(
  m_{\theta,\sigma}
  & =\sum_{k=0}^{d-1} (d-k) \kappa_{d-k}\,\,V_k(\mathcal Z_{\theta})
  \int_0^\infty e^{-t^2/\sigma}\,t^{d-k-1}\,dt.\\
  & =\sum_{k=0}^{d-1}(d-k)\kappa_{d-k}\,\,V_k(\mathcal Z_{\theta})\,
  \frac12\,\sigma^{(d-k)/2}\,\Gamma\!\Big(\frac{d-k}{2}\Big)\\
  & =\sum_{k=0}^{d-1}\kappa_{d-k}\,V_k(\mathcal Z_{\theta})\,
  \,\sigma^{(d-k)/2}\,\Gamma\!\Big(\frac{d-k}{2}+1\Big)\\
  & =\sum_{k=0}^{d-1} \pi^{(d-k)/2} \,V_k(\mathcal Z_{\theta})\,
  \,\sigma^{(d-k)/2},
  \)
  where we used the identity $\int_0^\infty e^{-t^2/\sigma}\,t^{m-1}\,dt=\frac12\sigma^{m/2}\Gamma(m/2)$ for $m>0$.

\end{proof}

  \textbf{Proof of Theorem 3}

  \begin{proof}
    The proof is a direct application of the Danskin's Theorem. Here
    $
      \partial_\theta \mathrm{dist}^2(y, \mathcal M_\theta) = \text{conv}_{ u\in U_*(\theta)} \{ \nabla_\theta \Psi(u, \theta)  \},
    $
    which requires $U_*(\theta)$ to be compact. If $\arg\max_u \Phi(u, \theta)$ is not compact, we can make $U_*(\theta)$ a compact subset hence satisfies the condition.
    \end{proof}

  \textbf{Proof of Theorem 4}

  \begin{proof}

    The log of the unnormalized likelihood ratio between $\theta$ and $\theta_0$ is
    \(
      \ell_n(\theta) - \ell_n(\theta_0) = - \frac{n}{\sigma} \left[ R_n(\theta) - R_n(\theta_0) \right].
    \)
    By the uniform law of large numbers,
    $
    R_n(\theta)-R_n(\theta_0)=\{R(\theta)-R(\theta_0)\}+o_{\mathbb P}(1)    $
    uniformly in $\theta$.

    For
    $
      R_n(\theta) = \frac{1}{n} \, \mathrm{dist}^2(y^{(n)}, \mathcal{Z}^{(n)}_\theta),
    $
    by (A1), there exists $\kappa > 0$ such that
    $
      R(\theta) - R(\theta_0) \geq \kappa \, \rho^2(\theta, \theta_0)
    $
    for all $\theta \in \Theta$. In particular, for $\theta \in \Theta_{n}(M,M_1)$,
    \(
      \inf_{\theta \in \Theta_{n, M}} \left\{ R(\theta) - R(\theta_0) \right\} \geq \kappa M^2 \epsilon_n^2.
    \)
Therefore, we can define the gap process
    \(
      \Delta_n(\theta) := R_n(\theta) - R_n(\theta_0), \qquad T_n := \inf_{\theta \in \Theta_{n, M}} \Delta_n(\theta).
    \)

    We pick a constant $0 < \tau < 1$ and define the test
    \(
      \varphi_n :=\mathbf 1\left\{ T_n \leq \tau \kappa M^2 \epsilon_n^2 \right\},
    \)
    in the below, we will use $\tau=1/2$.

    \noindent\textbf{(a). Vanishing type-I error.}
    Under $H_0$, we decompose
    \(
      T_n\ & = {\inf_{\theta\in\Theta_{n}(M,M_1)}\{R(\theta)-R(\theta_0)\}}
      \;+\; \inf_{\theta\in\Theta_{n}(M,M_1)}\big\{[R_n(\theta)-R(\theta)]-[R_n(\theta_0)-R(\theta_0)]\big\}
      \\
      & \ge  \kappa M^2\varepsilon_n^2  -\,2\ \sup_{\theta\in\Theta}\,|R_n(\theta)-R(\theta)|.
    \)
    Hence by (A2)
    \(
      \mathbb{E}_{\theta_0} \varphi_n = \mathbb P_{\theta_0}\!\left(T_n\le \tfrac12\,\kappa M^2\varepsilon_n^2\right)
      \ \le\
      \mathbb P_{\theta_0}\!\left(\sup_{\theta\in\Theta}|R_n(\theta)-R(\theta)|\ \ge\ \tfrac14\,\kappa M^2\varepsilon_n^2\right) \to 0.
    \)

    \noindent\textbf{(b). Vanishing type-II error.}
    Fix $\theta^\star\in\Theta_{n}(M,M_1)$ as the true parameter under $H_1$.  Then
    \(
    \{\varphi_n=0\}\ \Rightarrow\ T_n>\tfrac12\,\kappa M^2\epsilon_n^2
    \ \Rightarrow\ \Delta_n(\theta^\star)=R_n(\theta^\star)-R_n(\theta_0)\ \ge\ \tfrac12\,\kappa M^2\varepsilon_n^2.
    \)
    Let $\tilde\ell_{n,\theta}(y)$ denote the log of the normalized likelihood and write
    \(
\tilde\ell_n(\theta_0)-      \tilde\ell_n(\theta)
    =
    \frac{n}{\sigma}\,\Delta_n(\theta)\ +\ \Delta C_n(\theta),
    \)
    where $\Delta C_n(\theta):=\log m^{(n)}_{\theta,\sigma}-\log m^{(n)}_{\theta_0,\sigma} $.

    Then on the event $\{\varphi_n=0\}$,
    \(
    \tilde\ell_n(\theta_0)-\tilde\ell_n(\theta^\star)
    \ \ge\ \frac{n}{\sigma}\cdot \frac{1}{2}\,\kappa M^2\epsilon_n^2\ +\ \Delta C_n(\theta^\star).
    \)
    Using the fact that the expectation of the likelihood ratio is 1, and upon applying Markov's inequality, we get
    \(
    \mathbb P_{\theta^\star}(\varphi_n=0)
    & \le \mathbb P_{\theta^\star}\!\left(\tilde\ell_n(\theta_0)-\tilde\ell_n(\theta^\star)\ \ge\ \tfrac{1}{2}\,\tfrac{n}{\sigma}\kappa M^2\epsilon_n^2+\Delta C_n(\theta^\star)\right) \\
    & \ \le\ \exp\!\left\{-\,\tfrac{1}{2}\,\tfrac{n}{\sigma}\kappa M^2\epsilon_n^2-\Delta C_n(\theta^\star)\right\}.
    \)

    Since by (A4) we have $\inf_{\theta\in\Theta_{n}(M,M_1)}\Delta C_n(\theta)\ge -c_0\,n\epsilon_n^2$, then
    \(
    \mathbb E_{\theta^\star}(1-\varphi_n)
    \ \le\ \exp\!\left\{-\Big(\tfrac{1}{2}\,\tfrac{\kappa}{\sigma}M^2-c_0\Big)\,n\epsilon_n^2\right\}
    \ =\ e^{-c' n\epsilon_n^2},
    \)
    for $M$ large enough so that $c':=\tfrac{1}{2}(\kappa/\sigma)M^2-c_0>0$. Taking the supremum  gives
    $\sup_{\theta\in\Theta_{n}(M,M_1)}\mathbb E_\theta(1-\varphi_n)\le e^{-c' n\epsilon_n^2}$.

    Since $\mathbb E_{\theta_0}\big[\Pi(\Theta_n(M,M_1)\mid y^{(n)})\big]
\le
\mathbb E_{\theta_0}[\varphi_n]
+\sup_{\theta\in\Theta_n(M,M_1)}\mathbb E_{\theta}[1-\varphi_n]\to 0$, we have $\Pi(\Theta_n(M,M_1)\mid y^{(n)})\xrightarrow{\mathbb P_{\theta_0}} 0.$

    \noindent\textbf{(c). Vanishing mass outside the ball of radius $M_1\varepsilon_n$.}
    Let
    $
    \Theta^{\mathrm{far}}_{n}(M_1)
    :=\{\theta:\rho(\theta,\theta_0)>M_1\varepsilon_n\}.
    $
    Write the posterior as
    \(
    \Pi(B\mid y^{(n)})
    =
    \frac{\int_{B}\exp\{\tilde\ell_n(\theta)\}\,d\Pi_0(\theta)}
    {\int_{\Theta}\exp\{\tilde\ell_n(\theta)\}\,d\Pi_0(\theta)}
    \le
    \frac{\int_{B}\exp\{\tilde\ell_n(\theta)\}\,d\Pi_0(\theta)}
    {\int_{   B_\delta}\exp\{\tilde\ell_n(\theta)\}\,d\Pi_0(\theta)}.
    \)
    with
    $
    B_\delta:=\{\theta: R(\theta)-R(\theta_0)<\delta\}.
    $
    By (A3), $\Pi_0(B_\delta)>0$ hence the denominator is positive.

    Recall $\tilde\ell_n(\theta)-\tilde\ell_n(\theta_0)=-(n/\sigma)\Delta_n(\theta)-\Delta C_n(\theta)$, where
    $\Delta_n(\theta)=R_n(\theta)-R_n(\theta_0)$ and
    $\Delta C_n(\theta)=\log m^{(n)}_{\theta,\sigma}-\log m^{(n)}_{\theta_0,\sigma}$.
    Let
    $
    E_n(\delta)
    :=
    \left\{
    \sup_{\theta\in\Theta}\big|R_n(\theta)-R(\theta)\big|
    \le \delta
    \right\}.
    $
    By (A2), $\mathbb P_{\theta_0}(E_n(\delta))\to 1$. On $E_n(\delta)$, for all $\theta\in B_\delta$,
    \(
    \Delta_n(\theta)
    =R_n(\theta)-R_n(\theta_0)
    \le (R(\theta)+\delta)-(R(\theta_0)-\delta)
    \le 3\delta,
    \)
    hence
    \(
    \tilde\ell_n(\theta)
    \ge
    \tilde\ell_n(\theta_0)-\frac{n}{\sigma}\,3\delta-\Delta C_n(\theta),
    \qquad \forall \theta\in B_\delta.
    \)
For $\delta=\sigma\varepsilon_n^2$, by (A5), we have
    $
    \sup_{\theta\in B_\delta}\big|\Delta C_n(\theta)\big|
    \le c_1\,n\varepsilon_n^2.
    $
On $E_n(\delta)$, we have
    \(
    \int_{B_\delta}\exp\{\tilde\ell_n(\theta)\}\,d\Pi_0(\theta)
    \ge
    \exp\!\left\{\tilde\ell_n(\theta_0)-\frac{n}{\sigma}\,3\delta-c_1 n\varepsilon_n^2\right\}\Pi_0(B_\delta).
    \)
    Next, by Fubini and the likelihood-ratio identity,
    \(
    \mathbb E_{\theta_0}\!\left[\int_{\Theta^{\mathrm{far}}_{n}(M_1)}
    \exp\{\tilde\ell_n(\theta)-\tilde\ell_n(\theta_0)\}\,d\Pi_0(\theta)\right]
    =
    \int_{\Theta^{\mathrm{far}}_{n}(M_1)}\mathbb E_{\theta_0}\!\left[\frac{L_n(\theta)}{L_n(\theta_0)}\right]d\Pi_0(\theta),
    \)
    which equals $
    \Pi_0(\Theta^{\mathrm{far}}_{n}(M_1))$.
    Therefore, by Markov's inequality, for any $\eta>0$,
    \(
    \mathbb P_{\theta_0}\!\left(
    \int_{\Theta^{\mathrm{far}}_{n}(M_1)}
    \exp\{\tilde\ell_n(\theta)-\tilde\ell_n(\theta_0)\}\,d\Pi_0(\theta)
    >\eta
    \right)
    \le \frac{\Pi_0(\Theta^{\mathrm{far}}_{n}(M_1))}{\eta}.
    \)
    By (A6), $\Pi_0(\Theta^{\mathrm{far}}_{n}(M_1))\le e^{-b n\varepsilon_n^2}$, so taking $\eta=e^{-(b/2)n\varepsilon_n^2}$, we get
    \(
    \int_{\Theta^{\mathrm{far}}_{n}(M_1)}
    \exp\{\tilde\ell_n(\theta)-\tilde\ell_n(\theta_0)\}\,d\Pi_0(\theta)
    \le e^{-(b/2)n\varepsilon_n^2}
    \quad\text{with }\mathbb P_{\theta_0}\text{-probability }\to 1.
    \)
Since $b/2 > 3 + c_1$, we have
    $
    \Pi\!\left(\Theta^{\mathrm{far}}_{n}(M_1)\mid y^{(n)}\right)
    \ \xrightarrow{\mathbb P_{\theta_0}}\ 0.
    $

Combining (a)--(c) yields the result.

  \end{proof}

  \subsection{Intractable normalizing constant and Poisson point process data augmentation}
We now discuss the case when $m_{\theta,\sigma}$ is intractable, and the likelihood involves independent $y_i$ hence the likelihood contains an $m^{-n}_{\theta,\sigma}$ term.
In order to develop a useful algorithm, we impose the following constraints in the prior support for $\theta$ and $\sigma$:
\begin{itemize}
\item (Boundedness on $\mathcal Z_\theta$) there exists a known $R_{\max}$ and a center $c$ such that $\mathcal Z_\theta \subseteq B(c,R_{\max})=\left\{z: \|z-c\| \le R_{\max}\right\}$, with $\|\cdot\|$ some norm;
\item (Boundedness on $\sigma$) there exists a known $\sigma_{\max}$ such that $\sigma \le \sigma_{\max}$ for all $\sigma$.
\end{itemize}

With the above, we now develop a data augmentation strategy that effectively cancels out the $m^{-n}_{\theta,\sigma}$ term.
We first introduce a latent variable $(\psi \mid \theta, \sigma,y)\sim \text{Gamma}(n, m_{\theta,\sigma})$, leading to joint posterior
\(
\Pi(\theta,\sigma,\psi\mid y) \propto \; &  \Pi_0(\theta, \sigma) \;{\exp(- \sum_{i=1}^n \text{dist}^2(y_i, \mathcal Z_\theta)/\sigma)} \;\psi^{n-1} \\
& \cdot \exp \big [ -\psi
\underbrace{ \int_0^\infty e^{-u} \left \{ \int_{\mathbb R^d}\mathbf 1 (x \in  \mathcal Z_{\theta}^{\sqrt{u\sigma}}\setminus \mathcal Z_{\theta})\,dx\right \} \ du }_{=m_{\theta,\sigma}} \big ].
\)
Note that the expression of $m_{\theta,\sigma}$ only requires $\mathcal Z_\theta$ to be bounded, not necessarily convex.

For all $(\theta,\sigma)$ in the prior support, and all $u>0$, we have the $\sqrt{u\sigma}$-neighborhood contained in a known ball,
$
\mathcal Z_\theta^{\sqrt{u\sigma}}
\subseteq B\!\left(c,\, R_{\max}+\sqrt{u\sigma_{\max}}\right)
=: B_u,
$
within which we can define a Poisson point process $\Phi$ on $\mathbb R^d\times(0,\infty)$ with intensity measure
\(
\mu_0(dx,du)
:= \psi\, e^{-u}\,\mathbf 1\{x\in B_u\}\,dx\,du,
\)
which is free of $\theta$ and $\sigma$. The total mass is finite and can be computed using numerical integration:
\(
\Lambda_0
= \iint \mu_0(dx,du)
= \psi C_0, \quad C_0 = \tilde\kappa_d \int_0^\infty e^{-u}\,\bigl(R_{\max}+\sqrt{u\sigma_{\max}}\bigr)^d\,du,
\)
with $\tilde\kappa_d$ being the volume of the unit ball in $\mathbb R^d$. Letting
$
A_{\theta,\sigma}
:= \bigl\{(x,u): u>0,\ x\in \mathcal Z_\theta^{\sqrt{u\sigma}}\setminus \mathcal Z_\theta\bigr\},
$
the point process falling completely out of $A_{\theta,\sigma}$ has probability
\(
\Pr_{\mu_0}\bigl(\Phi\cap A_{\theta,\sigma}=\varnothing\bigr)
= \exp\bigl\{-\mu_0(A_{\theta,\sigma})\bigr\}, \quad \mu_0(A_{\theta,\sigma}) = \psi\, m_{\theta,\sigma}.
\)
Therefore, we have the joint posterior augmented with the latent gamma $\psi$ and the Poisson point process $\Phi$:
\[\label{eq:joint_posterior_augmented}
\Pi(\theta,\sigma,\psi,\Phi\mid y)
\propto
\Pi_0(\theta,\sigma)\,\psi^{n-1}
\exp\!\left(
-\sum_{i=1}^n \frac{\operatorname{dist}^2(y_i,\mathcal Z_\theta)}{\sigma}
\right)\,
p(\Phi)\,
\mathbf 1\{\Phi\cap A_{\theta,\sigma}=\varnothing\},
\]
where $p(\Phi)$ denotes the law of $\Phi$ under $\mu_0$. Although the dimension of $|\Phi|$ is random, making the joint posterior unsuitable for standard Metropolis-Hastings algorithm (without using reversible jump step), we can still use a Metropolis-Hastings within Gibbs sampler to update $\Phi$, $(\theta,\sigma)$, and $\psi$ separately. Specifically, we alternate between the following steps.

\begin{itemize}
  \item Update $\Phi$ given $\theta,\sigma,\psi$:
\begin{enumerate}
    \item[a] Draw $K \sim \mathrm{Poisson}(\Lambda_0)$.
    \item[b] For $j = 1, \ldots, K$:
        Draw $u_j$ from the density $f(u) \propto e^{-u}|B_u|$.
         Draw $x_j \mid u_j \sim \mathrm{Uniform}(B_{u_j})$.
         If $(x_j, u_j) \in A_{\theta,\sigma}$, discard this point; otherwise, keep it in $\Phi$.
\end{enumerate}
\item Update $\psi$ by drawing from $\mathrm{Gamma}(n + |\Phi|, C_0)$.
\item
Propose $(\theta',\sigma')$ from $q( \cdot \mid \theta, \sigma)$,
    and accept $(\theta',\sigma')$ with probability
    \(
    1 \wedge
    \frac{
    \Pi_0(\theta',\sigma')\,
    \exp\!\left(
    -\sum_{i=1}^n \frac{\operatorname{dist}^2(y_i,\mathcal Z_{\theta'})}{\sigma'}
    \right)  q(\theta,\sigma \mid \theta',\sigma')
    }{
    \Pi_0(\theta,\sigma)\,
    \exp\!\left(
    -\sum_{i=1}^n \frac{\operatorname{dist}^2(y_i,\mathcal Z_\theta)}{\sigma}
    \right) q(\theta',\sigma' \mid \theta,\sigma) }
    \mathbf 1\{\Phi\cap A_{\theta',\sigma'}=\varnothing\}.
    \)
\end{itemize}
The second step is a thinning procedure for sampling from the Poisson point process with intensity $\mu_0(dx,du) 1\{ (x,u) \notin A_{\theta,\sigma}\}$. The third step $p(\Phi)$ cancels out since $\Phi$ is considered as fixed.

\subsection{Simplifying distance calculations via duality}
A key computational advantage of the distance-to-set model is that it avoids sampling the latent coordinates $z_i$. Such an advantage is diminished if the distance calculation itself is too costly. We therefore seek efficient methods for evaluating $\text{dist}(y_i, \mathcal Z_\theta)$ when the projection lacks a closed-form solution.

We focus on the case in which $\mathcal Z_\theta$ is convex with non-empty relative interior. There are several ways to compute the projection, such as alternating projection method when $\mathcal Z_\theta$ is the intersection of several convex sets \citep{bauschke1996projection}. In some cases even when directly solving for the projection, referred to as primal problem, can be computed, it may be more efficient to work with the \textit{dual formulation}
\[\label{eq:strong_duality}
  \min_{z\in \mathcal Z_\theta} \frac{1}{2}\|y_i-z\|^2  = \max_{u\in \mathbb{R}^n} \langle y_i,u\rangle - \frac{1}{2}\|u\|^2_* - S_{\mathcal Z_\theta}(u),
\]
where $\|\cdot\|_*$ is the dual norm $\|u\|_* = \sup_{\|v\|\le 1} u\top v$, and $S_{\mathcal Z_\theta}(u)=\sup_{z\in \mathcal Z_\theta} u^{\top} z$ is the support function of $\mathcal Z_\theta$ \citep{rockafellar1997convex}. Indeed, Dykstra's algorithm can be understood as a dual form of the aforementioned alternating projection method \citep{han1988successive}, and will return the projection of the starting point even when there is more than one point in the intersection of the sets.

The representation is useful when the support function and dual norm (right hand side) are easier to evaluate than the projection itself (primal, left hand side).
For example, for the dual norm, we have $(\|\cdot\|_p)_* = \| \cdot \|_{q}$, where $1/p + 1/q = 1$. This means that $\|\cdot\|_2^* = \|\cdot\|_2$ and $\|\cdot\|_1^* = \|\cdot\|_\infty$. An example use case is via Moreau's decomposition, the projection onto the supremum norm ball (of radius 1) can be expressed as $P_{\mathcal{B}_\infty}(y) = y - S(y)$ where $S(\cdot)$ is the soft-thresholding operator that we recall corresponds to the $\| \cdot \|_1$ norm.

Likewise, support functions often admit closed forms for basic convex sets.  The support function for ellipsoid $\{z\in \mathbb{R}^n: z^\top Q^{-1} z \le r\}$ is $S_{\mathcal M_\theta}(u)= \sqrt{r u^{\top }Q u }$, and for a polytope (polyhedron with bounded vertices) is $S_{\mathcal M_\theta}(u)=\max_{i\in \mathcal I} u^{\top} v_i$, where $v_i$ are the vertices of the polytope. Duality can also be useful in computing projections onto sums of these sets as well as nonconvex sets \citep{won2019projection,won2023unified}.

Once a dual maximizer $\hat u$ is obtained, the projection of $y$ onto $\mathcal Z_\theta$ is recovered as $y_i- \hat z_i \in \partial (1/2) \|\hat u\|^2_*$ , where $\partial$ denotes the subdifferential. For differentiable $\|\hat u\|_*$, such as $\|\cdot\|_p$ with $1<p<\infty$, we have $\hat z_i = y_i- \nabla (1/2) \|\hat u\|^2_*$ , which is simply $\hat z_i=y_i -\hat u$ when $p=2$. Therefore, by solving the dual problem, we can obtain both the distance and the latent projection $z_i=\hat z$.

We now use an example to illustrate the duality technique.
\begin{example}[\bf Smoothing via an ellipsoid subject to affine constraints]\label{ex:ellipsoid}
  We consider the case of Euclidean projection to a convex set $\mathcal Z_\theta$, the intersection of an ellipsoid and an affine set:
  \[
    \mathcal Z_\theta = \{z\in \mathbb{R}^d: z^\top Q^{-1} z \le r, A z\le b\},
  \]
  where $Q\in \mathbb{R}^{d\times d}$ is a positive definite matrix, $A\in \mathbb{R}^{c\times d}$ is a matrix, and $b\in \mathbb{R}^c$ is a vector. The projection to ellipsoid (without affine constraints) is useful as it takes the form $\tilde z_i = Q(Q+ \lambda_r I)^{-1} y_i$, where $\lambda_r$ is chosen so that $\tilde z_i^\top Q^{-1} \tilde z_i=r$. The form of $\tilde z_i$ is the same as the Kriging smoother for the expected value of mean for a Gaussian process with covariance matrix $Q$ and nugget effect $\lambda_r$ \citep{cressie2015statistics}. However, there is also interest to include additional constraints, such as the order constraints $z_1 \le z_2 \le \cdots \le z_d$ considered by \cite{lin2014bayesian}. The projection onto the intersection of an ellipsoid and an affine set does not admit a closed-form expression. Although one could resort to alternating projection methods, each iteration can be expensive due to multiple inversions of $(Q+\lambda_r I)$ at different value of $\lambda_r$.

  Using duality, we can alternatively solve for the projection to $\mathcal Z_\theta$ as follows. The support function for the ellipsoid $\mathcal M^1_\theta=\{z: z^\top Q^{-1} z \le r\}$ and polyhedron $\mathcal M^2_\theta=\{z: A z\le b\}$ is
  \(
  & S_{\mathcal M^1_\theta}(u)= \sqrt{r u^{\top }Q u },
  \quad S_{\mathcal M^2_\theta}(u)=
  \begin{cases}
    \min\limits_{v \ge 0:A^{\top} v = u} b^\top v, & \text{if } u \in \text{cone}(\{A_1,\ldots,A_c\}) \\
    \infty, & \text{otherwise}.
  \end{cases}
  \)
  Since the support function for an intersection of sets can be calculated via the infimal convolution
$
S_{\mathcal M_\theta}(u)= \inf_{u_1+\cdots + u_K=u} \sum_{k=1}^K S_{\mathcal M_\theta^k}(u_k),
$ we have
  $
  S_{\mathcal M_\theta}(u)= \min_{v \ge 0}  b^{\top} v + \sqrt{r (u-A^{\top} v)^{\top} Q (u-A^{\top} v) }.
  $
   The dual problem becomes
  \(
  \max_{u\in \mathbb{R}^d, v \ge 0} y^\top u - \frac{1}{2}\|u\|^2_2
  - b^{\top} v - \sqrt{r (u-A^{\top} v)^{\top} Q (u-A^{\top} v) },
  \)
  which can be solved efficiently via L-BFGS with smoothing on the gradient. Once the optimizer $(\hat u, \hat v)$ is obtained, the projection follows as $\hat z = y - \hat u$. The resulting computation avoids repeated matrix inversions that would arise in direct projection onto the constrained ellipsoid.
  \hfill $\blacksquare$
  \end{example}

\subsection{Gradient-based proposal algorithm and distance calculation}
 For simplicity, we focus on the case when $\Pi(\theta, \sigma \mid y)$ is tractable.
Both MH-Barker and NUTS require evaluation of the log density $\log \Pi(\theta, \sigma \mid y)$ and its gradient (or subgradient) with respect to $\theta$.
The main difficulty is that the distance term depends on $\theta$ through the projection $\hat z_i(\theta)$. The chain rule
$
\nabla_{\theta} \text{dist}^2(y_i, \mathcal Z_\theta)
= 2\,\langle\, \hat z_i - y_i,\ \nabla_\theta \hat z_i(\theta)\rangle,
$
is not directly useful when $\hat z_i(\theta)$ lacks a closed form and is only available through numerical optimization \citep{scieur2022curse}.
Fortunately, we can again appeal to duality to bypass the calculation of $\nabla_\theta \hat z_i(\theta)$. We first state the following result.

\begin{theorem}
 Denote the dual function $\Psi(u,\theta) := 2\langle y,u\rangle - \|u\|^2_* - 2S_{\mathcal Z_\theta}(u)$. If there is a maximizer subset $U_*(\theta) \subseteq \arg\max_u \Psi(u, \theta)$, such that: (i) $U_*(\theta)$ is compact, (ii) for each $\hat u \in U_*(\theta)$, the map $\theta \mapsto S_{\mathcal{Z}_\theta}(\hat u)$ is  differentiable at $\theta$, and (iii) the gradient $\nabla_\theta S_{\mathcal{Z}_\theta}(u)$ is continuous in $u$ for all $u$ in a neighborhood of $U_*(\theta)$, then for any dual maximizer $\hat u\in U_*(\theta)$,
 \(
-2\nabla_\theta S_{\mathcal Z_\theta}(u) \big\vert_{u=\hat u}\in \partial_\theta \mathrm{dist}^2(y, \mathcal Z_\theta).
 \)
Moreover,  when the dual maximizer is unique,
  $$\nabla_\theta \mathrm{dist}^2(y, \mathcal Z_\theta)=-2\nabla_\theta S_{\mathcal M_\theta}(\hat u).$$

\end{theorem}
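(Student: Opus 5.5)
The plan is to write $\mathrm{dist}^2(y,\mathcal Z_\theta)$ as a pointwise maximum of functions of $\theta$ and then apply Danskin's theorem. By the strong duality identity \eqref{eq:strong_duality}, which holds for closed convex $\mathcal Z_\theta$ with non-empty relative interior, multiplying by $2$ gives
\[
\mathrm{dist}^2(y,\mathcal Z_\theta)=\max_{u}\Psi(u,\theta),\qquad \Psi(u,\theta)=2\langle y,u\rangle-\|u\|_*^2-2S_{\mathcal Z_\theta}(u).
\]
For fixed $u$, the only $\theta$-dependence of $\Psi$ is through the support function, so $\nabla_\theta\Psi(u,\theta)=-2\nabla_\theta S_{\mathcal Z_\theta}(u)$ whenever the latter exists. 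This is hypothesis (ii) at the points of $U_*(\theta)$.

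The first step is to reduce the maximization to the compact set $U_*(\theta)$. Because $U_*(\theta)\subseteq\arg\max_u\Psi(u,\theta)$, the value $\max_u\Psi(u,\theta)$ equals $\Psi(\hat u,\theta)$ for any $\hat u\in U_*(\theta)$. I would then invoke Danskin's theorem in its directional-derivative form. It says that $\mathrm{dist}^2$ has directional derivative
\[
\max_{\hat u\in U_*(\theta)}\langle -2\nabla_\theta S_{\mathcal Z_\theta}(\hat u),h\rangle
\]
in direction $h$, and that the (Clarke) subdifferential is $\mathrm{conv}\{-2\nabla_\theta S_{\mathcal Z_\theta}(\hat u):\hat u\in U_*(\theta)\}$. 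Compactness (i) and continuity of the gradient in $u$ (iii) are exactly the hypotheses Danskin needs: they make this maximum attained and upper semicontinuous. Membership of each $-2\nabla_\theta S_{\mathcal Z_\theta}(\hat u)$ in the convex hull then gives the first claim.

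For uniqueness, when $U_*(\theta)=\{\hat u\}$ the convex hull is a singleton. The directional derivative is then linear in $h$, so $\mathrm{dist}^2$ is differentiable at $\theta$ with gradient $-2\nabla_\theta S_{\mathcal Z_\theta}(\hat u)$.

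The main obstacle is that Danskin's theorem is usually stated with the maximization over a \emph{fixed} compact set, uniformly in $\theta$ near the base point, while here the dual variable ranges over $\mathbb R^d$. I would handle this by restricting to a compact neighborhood $K$ of $U_*(\theta)$. Such a restriction is harmless because $\Psi$ is strongly concave in $u$ when $\|\cdot\|_*^2$ is strongly convex (the Euclidean case), so maximizers for nearby $\theta'$ stay bounded and lie in $K$. On $K$, hypothesis (iii) supplies the joint continuity of $\nabla_\theta\Psi$ that Danskin requires. For non-Euclidean norms I would assume this localization implicitly, in the same way the statement allows $U_*$ to be a chosen compact subset of the argmax.
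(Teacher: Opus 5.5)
Your proposal takes the same route as the paper, whose proof is a one-line appeal to Danskin's theorem for $\mathrm{dist}^2(y,\mathcal Z_\theta)=\max_u\Psi(u,\theta)$ over the compact set $U_*(\theta)$, using $\nabla_\theta\Psi(\hat u,\theta)=-2\nabla_\theta S_{\mathcal Z_\theta}(\hat u)$ and letting uniqueness collapse the convex hull to a gradient; your write-up is more careful than the paper's, since you name the Clarke subdifferential and localize explicitly. The one soft spot, which the paper shares, is your claim that (iii) supplies the joint continuity Danskin needs: Danskin requires $\nabla_\theta S_{\mathcal Z_{\theta'}}(u)$ to exist and be continuous in $(u,\theta')$ for $\theta'$ near $\theta$, and your localization to $K$ further needs $\theta'\mapsto S_{\mathcal Z_{\theta'}}$ to be continuous, whereas (iii) only gives continuity in $u$ at the fixed $\theta$, so you are implicitly strengthening the hypotheses rather than deriving this from them.
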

Thus, one can obtain $\nabla_\theta \mathrm{dist}^2(y, \mathcal Z_\theta)$ by simply differentiating the support function over $\theta$, and plugging the dual optimal $\hat u$ back into the gradient. To illustrate, the gradient in Example 3 is simply $\nabla_r \mathrm{dist}^2(y, \mathcal Z_\theta)=-2 \sqrt{(\hat u-A^{\top} \hat v)^{\top} Q (\hat u-A^{\top} \hat v) } r^{-1/2}$.

Consequently, when projections $\hat z_i(\theta)$ lack a closed form, this result suggests that one may solve the dual problem, recover both the distance and its gradient, and use these quantities within popular samplers such as MH-Barker or NUTS.
 Many computing software packages now support the customization of the log-posterior function and its gradient.

\subsection{Distance majorization and divergence-to-set model}\label{sec:divergence}
The distance between a point and a set is induced by the minimum metric between two points, for which the metric satisfies properties such as non-negativity, the triangle inequality, and symmetry. However, in practice, it is often useful to consider more general functions that may not fulfill all these criteria \citep{mmsf}. Divergences beyond Euclidean distance \citep{breg} can serve as substitutes for the distance in our modeling framework as discussed below.

We define the \textit{divergence-to-set} as:
\(
  \text{div}(y_i, \mathcal Z_\theta) = \|y_i- \mathcal T_{\gamma}(y_i) \|,\qquad \mathcal T_{\gamma}:\mathcal{Y} \mapsto \mathcal Z_\theta,
\)
where $\mathcal T_{\gamma}$ is a single-valued function parameterized by $\gamma$, and $\gamma$ depends on $\theta$. Clearly, when $\mathcal T_{\gamma}(y_i)$ is specified as the projection function, then divergence-to-set reduces to the distance-to-set.  It is not hard to see that the above is a majorization of the distance: since $\mathcal T_{\gamma}(y_i)  \in  \mathcal Z_\theta$,
\(
\text{dist}(y_i, \mathcal Z_\theta)= \min_{z\in \mathcal Z_\theta} \|y_i- z\| \le \|y_i- \mathcal T_{\gamma}(y_i)\| \quad \forall y_i \in \mathcal{Y}.
\)

Let the divergence-to-set model follow
\[\label{eq:distance_majorization_divergence}
 & \mathcal L(y_i ; \theta, \sigma) =   \phi [ \text{div}(y_i, \mathcal Z_\theta); \sigma],
\]
 we see how this construction generalizes the distance-to-set likelihood \eqref{eq:gaussian_kernel}. It takes the same form  upon replacing this exact projection by a more tractable map $T_\gamma$ when practical. Moreover,  though $\text{div}(y_i, \mathcal Z_\theta)$ is only an approximation to $\text{dist}(y_i, \mathcal Z_\theta)$, the scale parameter $\sigma$ remains free and can adapt in the posterior so that
the two ratios $\text{div}(y, \mathcal Z_\theta)^2/\sigma$ and $\text{dist}(y, \mathcal Z_\theta)^2/\tilde\sigma$ roughly match  in the distribution.

We now discuss two important applications in which this divergence-to-set approximation (which we will refer to as the \textit{divergence model} for short) is particularly useful. The first is when the exact projection $\mathcal P_{\mathcal Z_\theta}(y_i)$ lacks a closed-form expression, but can be estimated iteratively via
\(
  \hat z_i = \mathcal T_{\gamma}(y_i) = G^{K}(z_i^0),
\)
where $G:\mathbb{R}^d \to \mathbb{R}^d$ denotes a one-step update operator, $z_i^0$ is the initial value, and $K$ is the number of iterations. Under appropriate convergence assumptions, we  have $\lim_{K\to \infty}G^{K} (z^0) = \arg\min_{z\in \mathcal Z_\theta} \|y-z\|$, but in practice one virtually always uses some stopping criterion so that the number of steps $K$ is a function of the last output of $G$ (for example, the first time that the gradient norm falls below a tolerance) or even fixed. Formally, the resulting estimate $\hat z_i$ is therefore an approximation to the exact projection $z^*=\arg\min_{z\in \mathcal Z_\theta} \|y-z\|$ and $\text{div}(y, \mathcal Z_\theta)=\|y- \hat z\| \ge \|y- z^*\|$ as long as $\hat z\in \mathcal Z_\theta$.

The divergence model resolves any algorithmic randomness resulting from this scenario: we can  represent a set of rules associated with the iteration via $\gamma$, such as initialization  of $z_i^0$, step-size schedule, and stopping criterion, which are specified as deterministic functions of $y_i$ and $\theta$. In this case, the induced map $T_\gamma$ defines a valid divergence-majorizing surrogate for the exact distance.

The second application arises in unsupervised machine learning problems based on reconstruction error. Consider estimators as
\(
\min_{\xi\in  X_\theta}  \frac{1}{n}\sum_{i=1}^n \|y_i- \mathcal R(y_i,\xi,\theta)\|^2_2,
\)
where $\mathcal R(y_i,\xi,\theta)$ is a reconstruction map parameterized by $\xi\in X_\theta$ and $\theta\in \Theta$.  For example, in an autoencoder \citep{bank2023autoencoders,kingma2013auto}, $\mathcal R(y_i,\xi,\theta)= D_{\xi,\theta} \circ E_{\xi,\theta}(y_i)$, where $E_{\xi,\theta}$  is an encoder that transform $y_i$ to a latent code $w_i\in \mathbb{R}^m$ with \textit{bottleneck} dimension $m\ll d$, and $D_{\xi,\theta}$ is the decoder that transforms $w_i$ back to the data space $\mathbb{R}^d$.  Although these approaches typically involve optimizing over the nuisance parameter $\xi$, there often remains a parameter of interest $\theta$ whose uncertainty is scientifically meaningful. For instance, in autoencoders, $\theta$ may encode structural choices such as the sparsity pattern of the latent representation $w_i$, which in turn determines the intrinsic dimensionality of the latent space \citep{moran2022identifiable}.

In such settings, one may define $\mathcal T_\gamma(y_i)=\mathcal R (y_i,\hat\xi,\theta)$ where $\hat \xi$ minimizes the reconstruction loss for a given $\theta$. The resulting reconstruction error  majorizes the squared distance  to the set
\(
\mathcal Z_\theta = \{\mathcal R(\tilde y,\xi,\theta): \xi\in  X_\theta, \tilde y\in \mathcal{Y}\}.
\)
For an autoencoder, $\mathcal Z_\theta$ contains all possible reconstructions with intrinsic dimension at most $m$, with sparsity pattern determined by $\theta$. We see the distance is dominated by the reconstruction error via the relation
\[ \sum_{i=1}^n \text{dist}^2(y_i, \mathcal Z_\theta) = \sum_{i=1}^n
\min_{\xi_i \in  X_\theta, \tilde y_i\in \mathcal{Y}} \sum_{i=1}^n \|y_i- \mathcal R(\tilde y_i,\xi_i,\theta)\|^2_2. \]  Thus, reconstruction-based learning procedures can be interpreted as giving rise to divergence-to-set models or vice versa, in which the reconstruction map plays the role of a tractable surrogate projection. This perspective suggests that a range of popular machine learning procedures may be embedded into the divergence-to-set framework.

\subsection{Sampling from the predictive distribution}
In this framework for posterior inference, the distance-to-set model naturally also allows sampling from the posterior predictive distribution: given $(\theta, \sigma)$, we can independently draw samples
$
\mathcal L(y^*_{i} \mid \theta, \sigma) \propto \exp(- {\operatorname{dist}^2(y^*_{i},\mathcal Z_{\theta})}/{\sigma})
$
for $i=1,2,\ldots,n^*$. For low-dimensional settings,  rejection sampling may be feasible with a suitable proposal distribution. In higher dimensions, however, rejection sampling quickly becomes inefficient due to the curse of dimensionality. Instead, a simple MCMC sampler that targets the above predictive distribution can be constructed: starting with the identity
\(
  \exp\{-{\operatorname{dist}^2(y^*_{i},\mathcal Z_{\theta})}/{\sigma}\}= \int_0^{\infty} 1( \text{dist}^2(y^*_{i},\mathcal Z_{\theta}) \le u_i \sigma) \exp(-u_i) d u_i,
\)
we obtain the following MCMC sampler that alternates between:
\begin{itemize}
  \item Update $\tilde u_i \sim \text{Exp}(1)$, set $u_i = \tilde u_i + \text{dist}^2(y^*_{i},\mathcal Z_{\theta})/\sigma$.
  \item Update $y^*_{i} \sim \text{Unif}(\mathcal Z_{\theta}^{\sqrt{u_i\sigma}} \setminus \mathcal Z_{\theta})$.
\end{itemize}
To provide an illustration, Figure \ref{fig:l1_ball_y_scatter} shows samples drawn from the distance-to-$\ell_1$-ball model with different parameters. We note that in general, the second step can become difficult  due to the geometry of the shell $Z_{\theta}^{\sqrt{u_i}\sigma} \setminus \mathcal Z_{\theta}$. In such cases, one may replace with a ball-walk proposal as long as $Z_{\theta}^{\sqrt{u_i}\sigma} \setminus \mathcal Z_{\theta}$ is continuous: propose a new point $y^\prime_{i} \sim \text{Unif}[B(y^*_{i}, \alpha\sqrt{u_i\sigma})]$, where $\alpha>0$ controls the step size (we tune $\alpha$ so that the acceptance rate is around 0.4),  and accept the proposal if $0<\text{dist}(y^\prime_{i}, \mathcal Z_{\theta}) < \sqrt{u_i\sigma}$.

\hfill $\blacksquare$
\begin{figure}[H]
  \centering
  \begin{subfigure}[t]{0.3\textwidth}
      \centering
      \includegraphics[width=\textwidth]{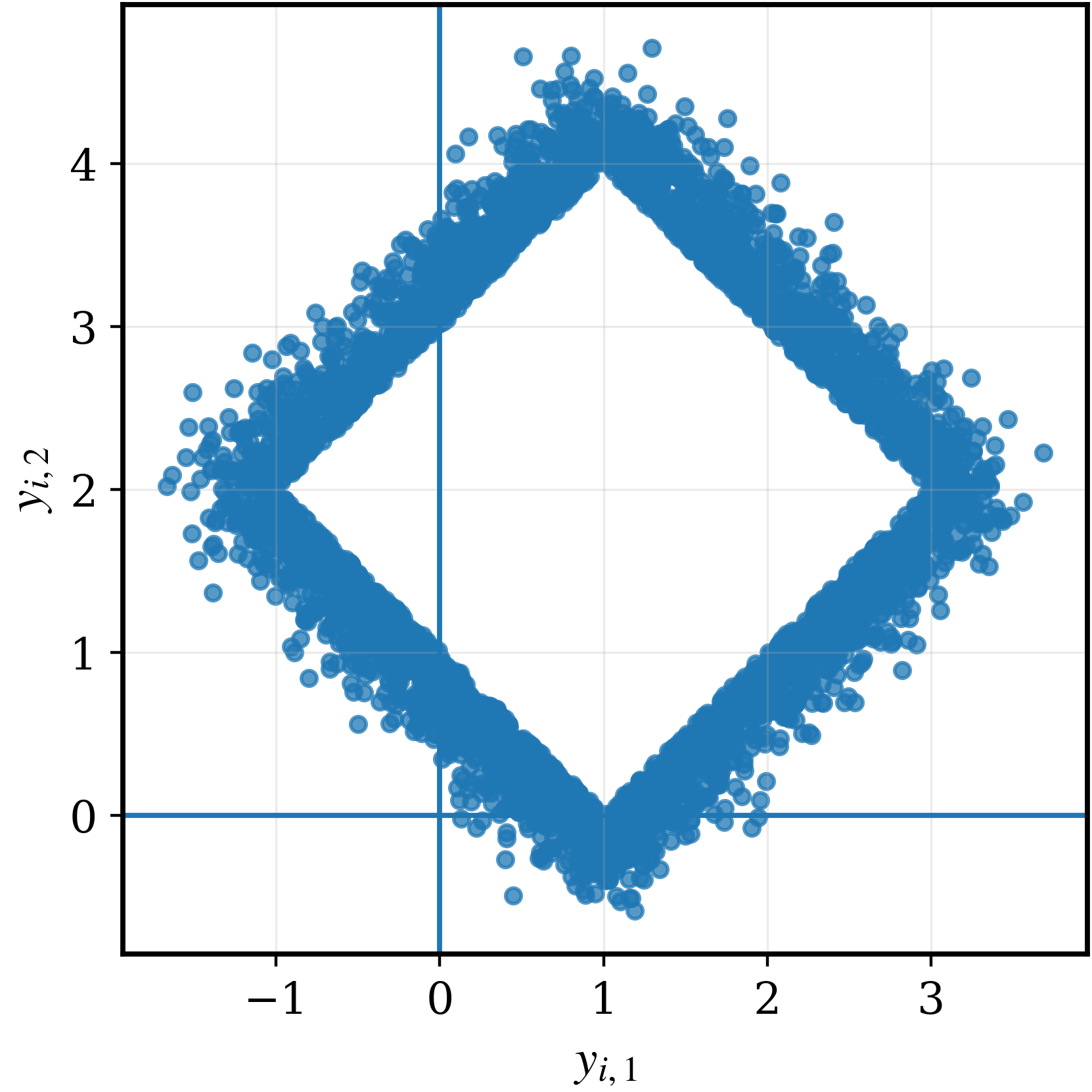}
      \caption{$\mu=(1,2), r=2,\sigma=0.1$.}
  \end{subfigure}
  \;\;
  \begin{subfigure}[t]{0.3\textwidth}
      \centering
      \includegraphics[width=\textwidth]{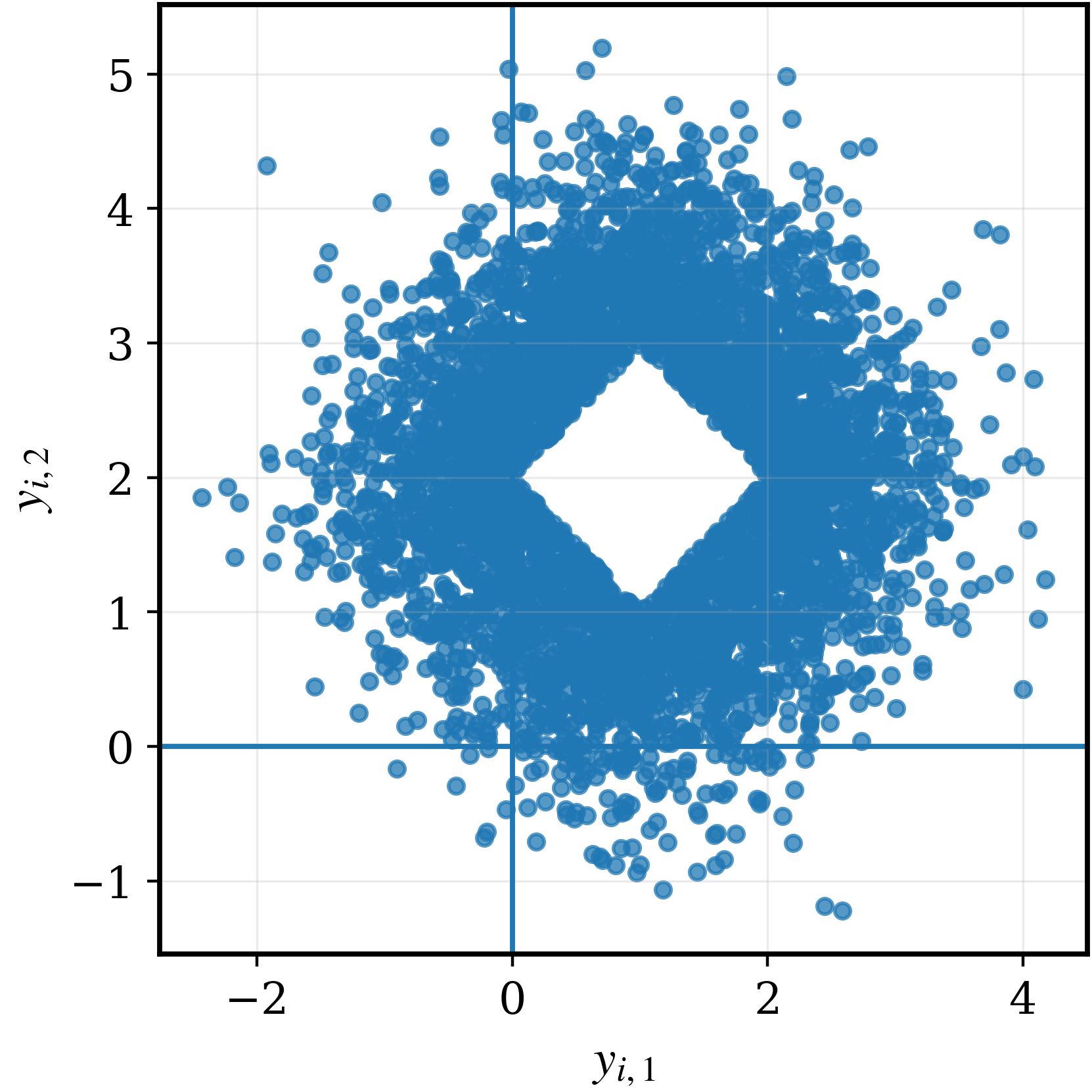}
      \caption{$\mu=(1,2), r=1,\sigma=1$.}
  \end{subfigure}
  \caption{Illustration of the distance-to-set model: samples drawn from two distance-to-$\ell_1$-ball models in $\mathbb{R}^2$, with different radii of the ball and different noise levels. \label{fig:l1_ball_y_scatter}}
\end{figure}

\subsection{Additional simulation studies}

\noindent\textbf{Multivariate mixed-effects model with sparse random effects}

We begin our empirical study by revisiting the multivariate mixed-effects model in \Cref{ex:multi_mix_effects}.  We simulate data from
$
y_i=\mu+\gamma_i+\epsilon_i,
$ for $i=1, \ldots, n$,
with $n=1000$ and $d=20$. We draw $\mu\sim \text{N}(0,100\, I_d)$,  noise terms $\epsilon_i\sim \text{N}(0, I_d)$, and sparse random effects $\gamma_i\in\mathbb R^d$. For each i, the first 5 coordinates are generated as $\gamma_{i,j}= s_{i,j}t_{i,j}$, where $s_{i,j}\in \{-1,1\}$ follows a Rademacher distribution, and $t_{i,j}\sim\text{Gamma}(2,1/5)$,  so that they are away from zero with high probability, and the remaining coordinates are set to zero. We then fit the distance-to-set model using
$
\mathcal Z_\theta=\big \{ \mu+\gamma_i:\ \sum_{j=1}^d |\gamma_{i,j}|= r \big\},\theta=(\mu,r).
$

We run Barker-MH for 100{,}000 iterations, discarding the first 50{,}000 iterations as burn-in, and thinning the chain by 10. The algorithm takes about 5 minutes to run on a standard laptop, and the average Metropolis acceptance rate was $0.34$. Figure~\ref{fig:multivariate_linear_regression} shows the trace plots of the posterior samples for $(\mu_1,\sigma,r)$, and the boxplots for the posterior of $\gamma_{\cdot,j}$ over $j=1,\ldots,20$ observations. We observe excellent mixing performance for all parameters, and the model accurately detects those non-zero random effects and shrinks the others to zero or close to zero.

\begin{figure}[H]
  \centering
  \begin{subfigure}[t]{0.48\textwidth}
      \centering
      \includegraphics[width=\textwidth]{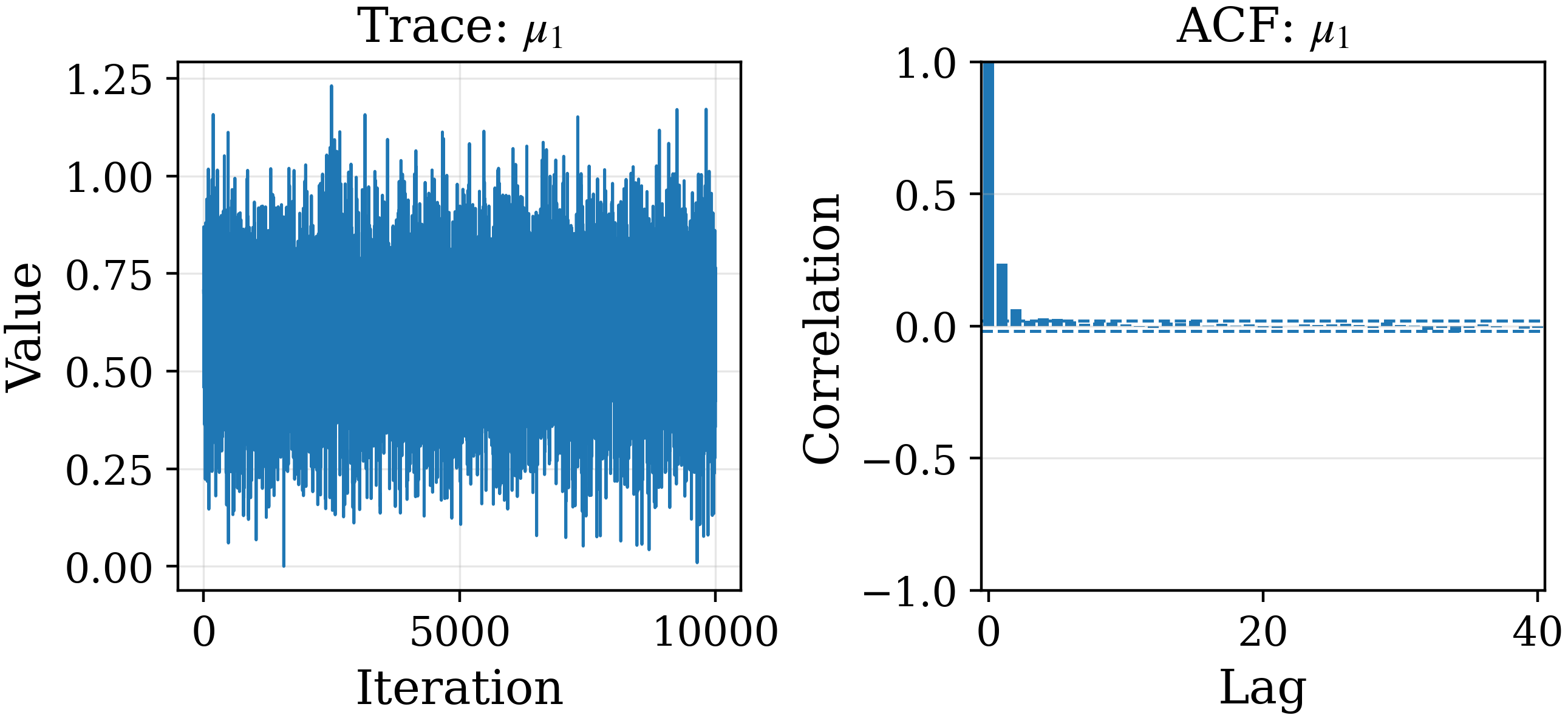}
      \caption{Fixed effect $\mu_1$.}
  \end{subfigure}
  \hfill
  \begin{subfigure}[t]{0.48\textwidth}
      \centering
      \includegraphics[width=\textwidth]{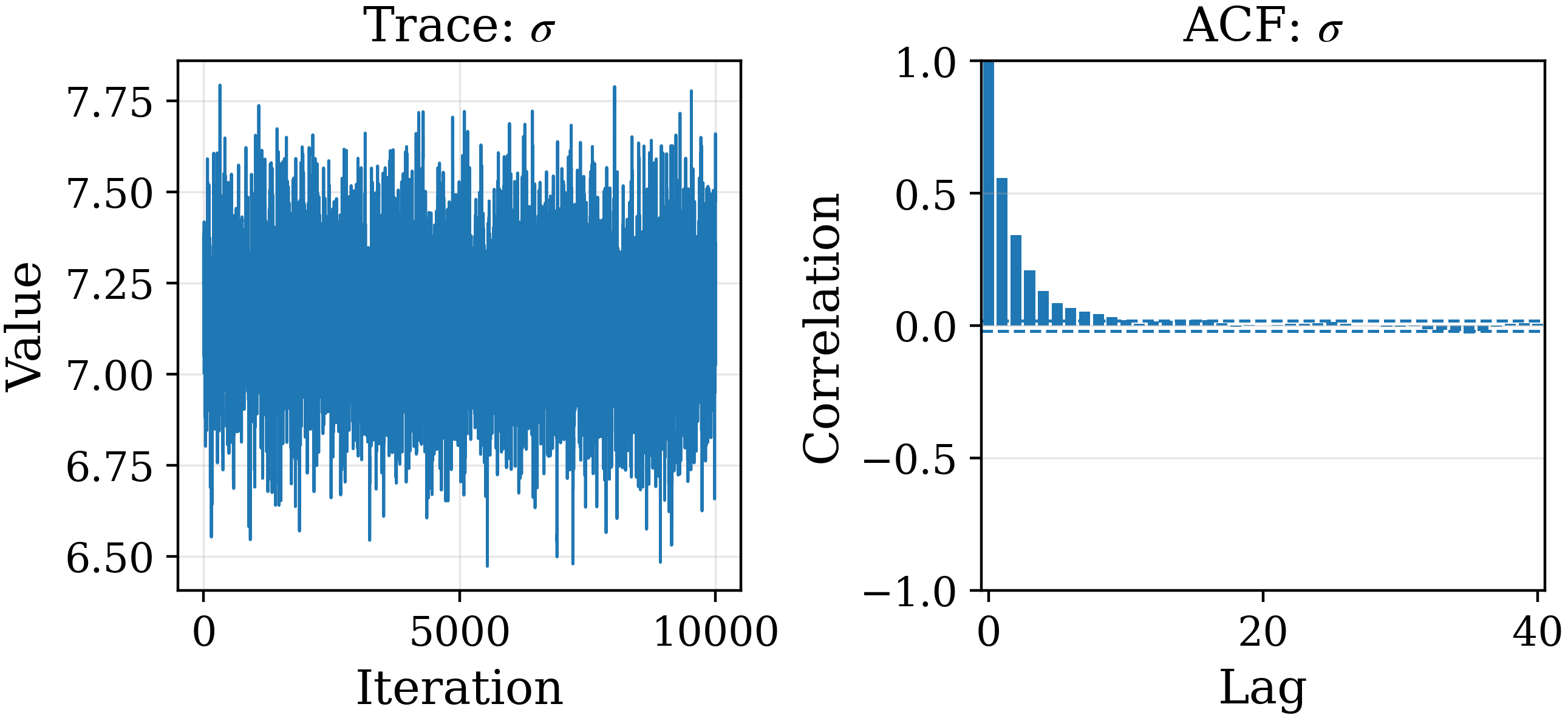}
      \caption{Distance scale $\sigma$.}
  \end{subfigure}
  \vskip\baselineskip
  \begin{subfigure}[t]{0.48\textwidth}
      \centering
      \includegraphics[width=\textwidth]{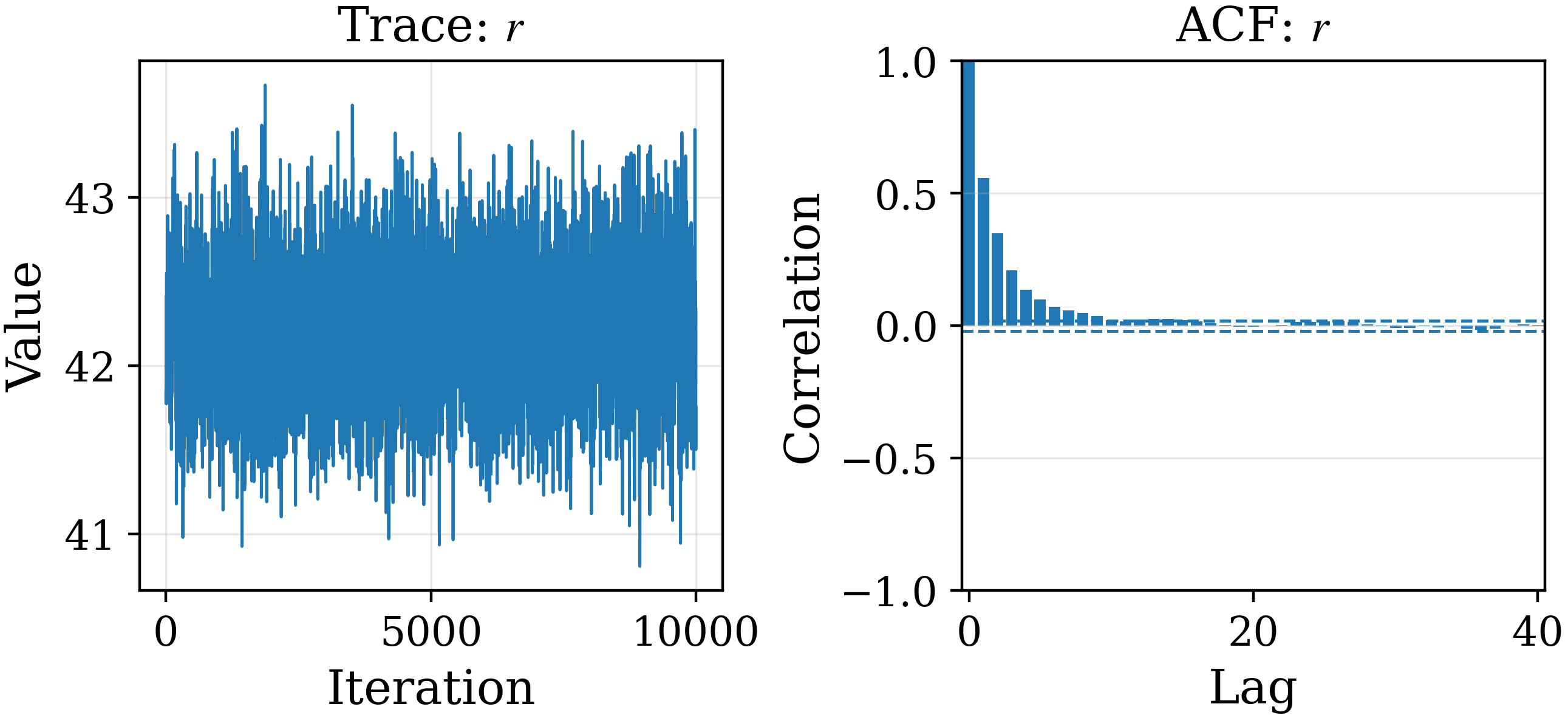}
      \caption{Set radius $r$.}
  \end{subfigure}
  \hfill
  \begin{subfigure}[t]{0.48\textwidth}
      \centering
      \includegraphics[width=\textwidth]{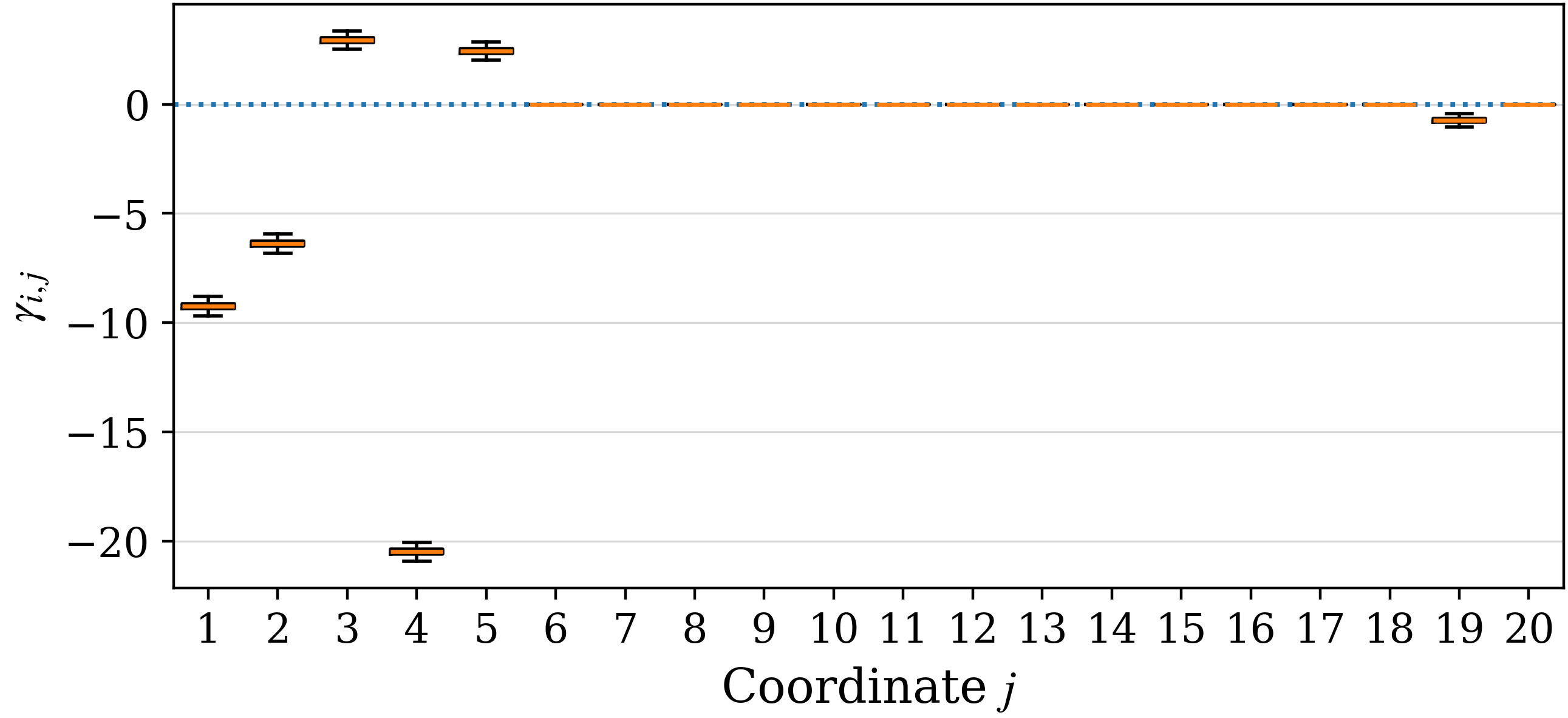}
      \caption{Posterior distribution of $\gamma_{i}$ for one subject.}
  \end{subfigure}
  \caption{Multivariate mixed-effects model with sparse random effect. \label{fig:multivariate_linear_regression}}
\end{figure}

For comparison, we also fit the latent variable model
\(
y_i=\mu+\gamma_i+\epsilon_i,\qquad \epsilon_i\sim \text{N}(0,\sigma I_d), \quad \gamma_{ij}\sim \mathrm{GDP}(b,3/b),
\)
where GDP denotes the generalized Double Pareto distribution \citep{armagan2013generalized} with scale parameter $b$ and shape parameter $3/b$. GDP is a continuous shrinkage distribution often used for Bayesian variable selection, and is suitable for modeling sparse random effects. We assign prior $b\sim \text{InvGaussian}(0.1 \sqrt{\sigma}, 0.1 \sqrt{\sigma})$ as the radius parameter in the distance-to-set model.

We find that in the latent variable model, the same Barker-MH algorithm  mixes much more slowly
even after 100,000 iterations. For a conservative comparison, we then reinitialize the chain with $(\mu,\sigma,\gamma)$ starting at their ground-truth values and $b=0.1$, and run the algorithm for 100,000 iterations. Figure~\ref{fig:latent_variable_linear_regression_gdp} shows that mixing performance is still poor. Such slow convergence and mixing performance is likely due to the high dimensionality for the parameter space of $(\mu,\sigma,b,\gamma)$, among which $\gamma\in \mathbb R^{1,000\times 20}$. We further empirically support this explanation and observe much improved mixing performance of the latent variable model in low dimensions with $n=50$ and $d=3$ (Figure~\ref{fig:latent_variable_linear_regression_gdp_low_d}). In contrast, the distance-to-set model succeeds as $(\mu,\sigma,r,\gamma)$ has low \textit{intrinsic} dimension since $\gamma$ is  determined once given $(\mu,\sigma,r)$.

\begin{remark}
For the latent variable model with a simple shrinkage prior on $\gamma_i$, it is possible to achieve better mixing with carefully designed samplers. For instance, when $\gamma_i \sim \text{Laplace}(0, r)$, the combination of Bayesian lasso data augmentation \citep{park2008bayesian} and two-block collapsed Gibbs sampling \citep{rajaratnam2019uncertainty} can yield excellent mixing performance. This study emphasizes that the distance-to-set model enables more general algorithms and prior choices, such as the Barker-MH algorithm and generalized Double Pareto prior, to work effectively without the need for customized algorithms.
\end{remark}

\begin{figure}[H]
  \centering
  \begin{subfigure}[t]{0.3\textwidth}
      \centering
      \includegraphics[width=\textwidth]{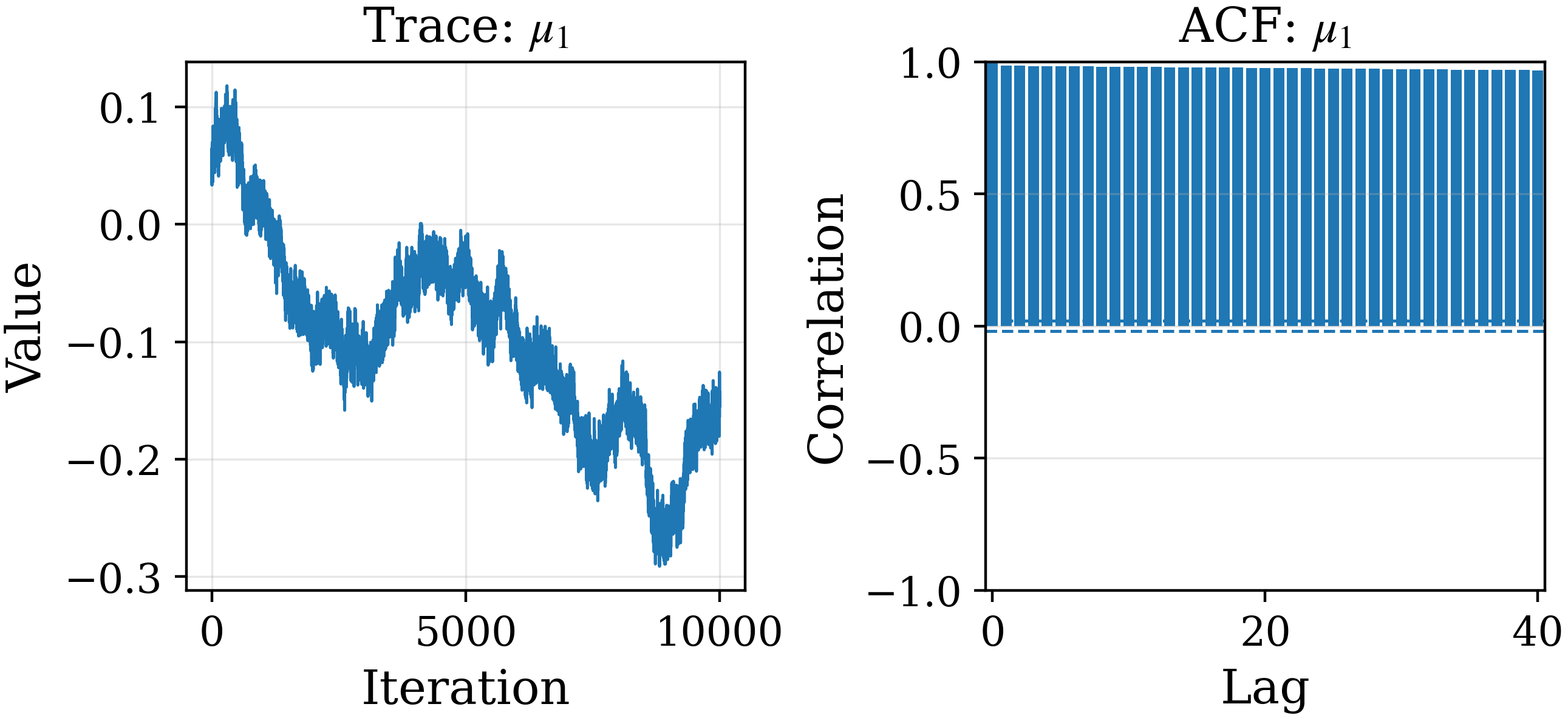}
      \caption{Fixed effect $\mu_1$.}
  \end{subfigure}
  \hfill
  \begin{subfigure}[t]{0.3\textwidth}
      \centering
      \includegraphics[width=\textwidth]{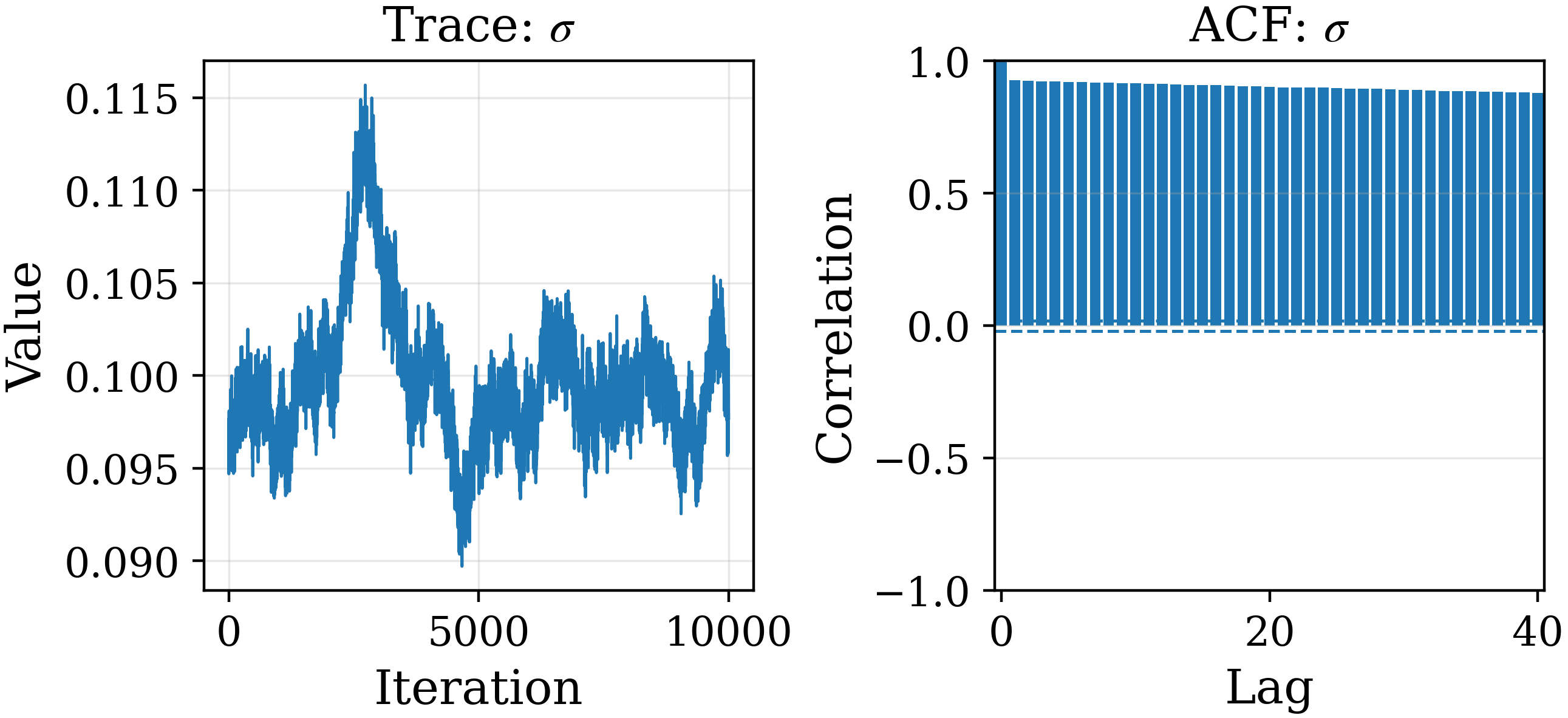}
      \caption{Distance scale $\sigma$.}
  \end{subfigure}
  \hfill
  \begin{subfigure}[t]{0.3\textwidth}
      \centering
      \includegraphics[width=\textwidth]{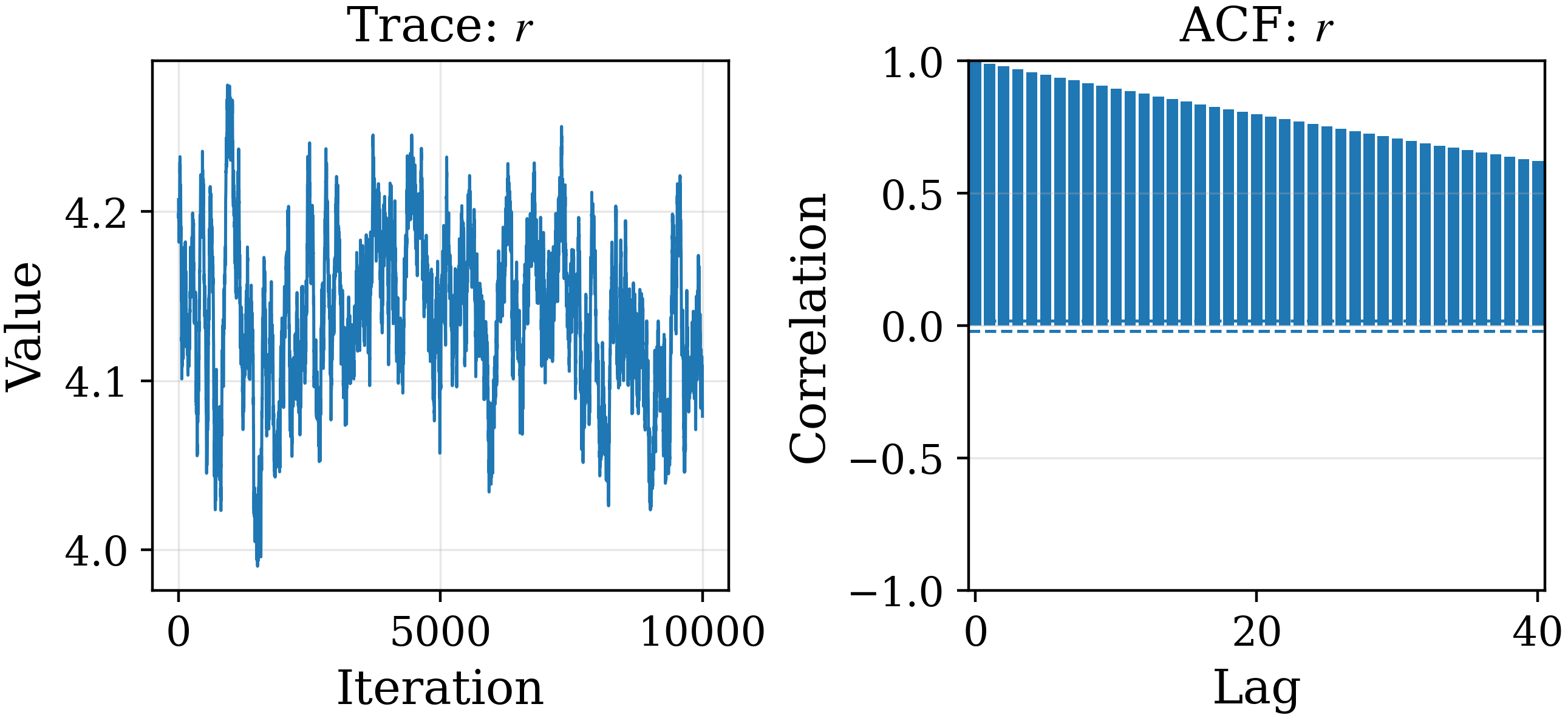}
      \caption{Set radius $r$.}
  \end{subfigure}
  \hfill
  \caption{Latent variable model with a continuous shrinkage prior on the random effects shows much worse mixing of Markov chains in the posterior estimation than the distance-to-set model. \label{fig:latent_variable_linear_regression_gdp}}
\end{figure}

  \begin{figure}[H]
    \centering
    \begin{subfigure}[t]{0.48\textwidth}
        \centering
        \includegraphics[width=\textwidth]{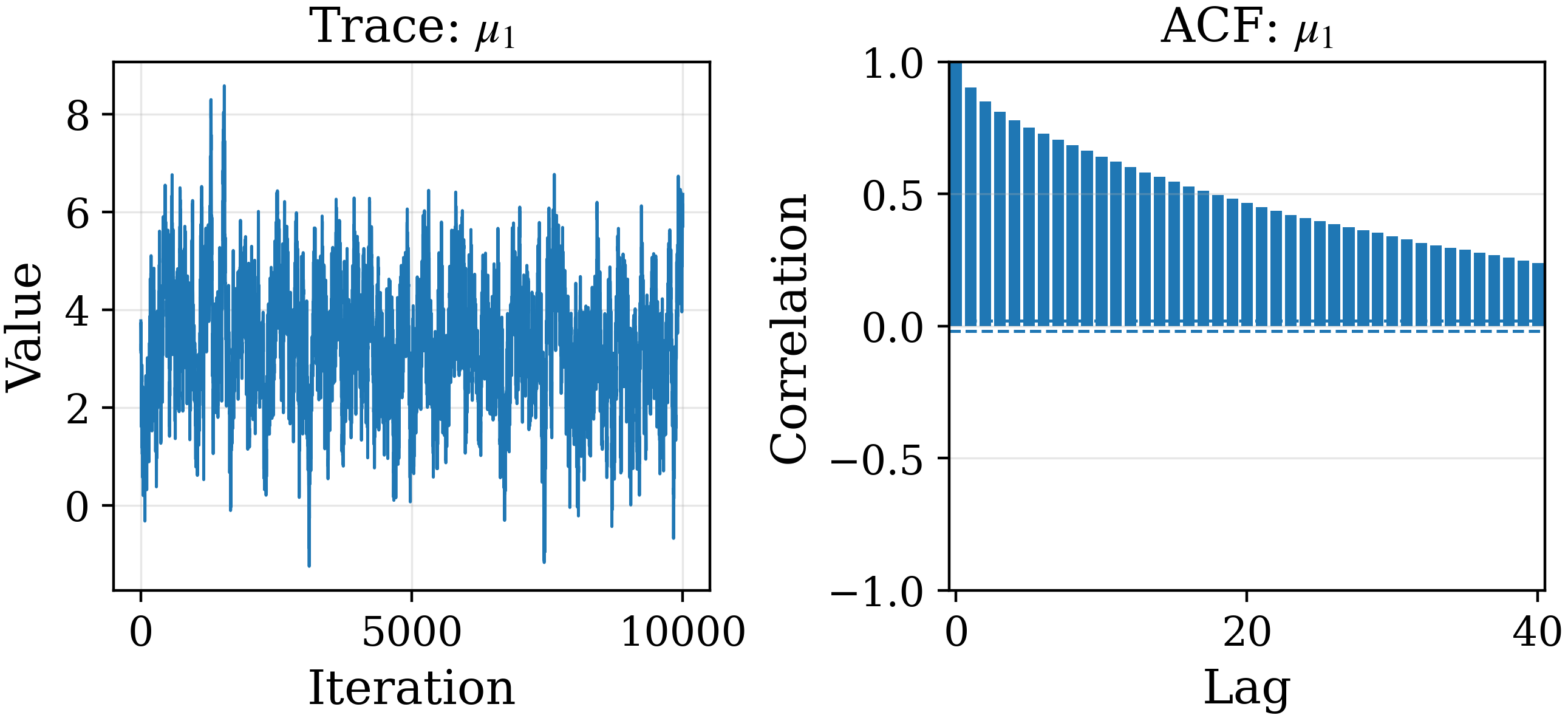}
        \caption{Fixed effect $\mu_1$.}
    \end{subfigure}
    \hfill
    \begin{subfigure}[t]{0.48\textwidth}
        \centering
        \includegraphics[width=\textwidth]{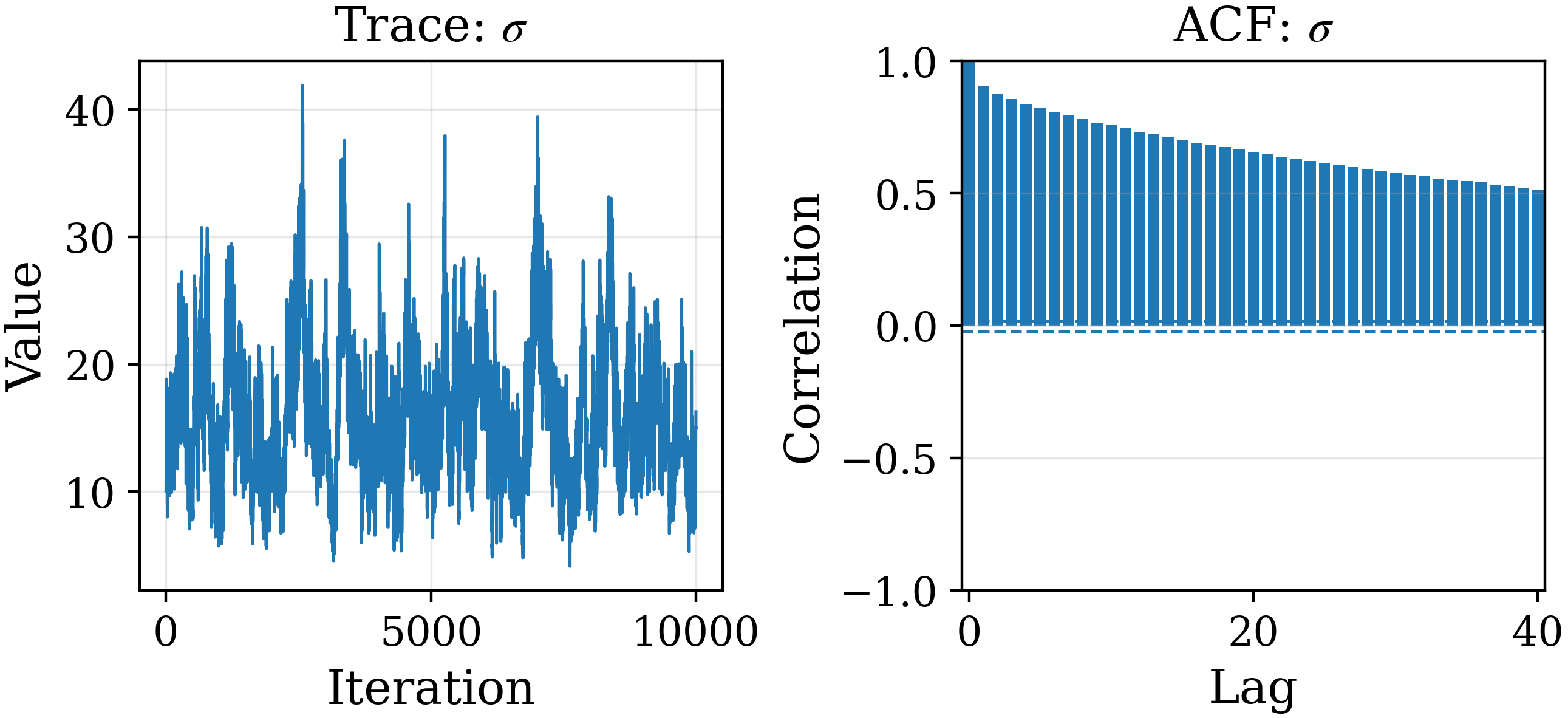}
        \caption{Distance scale $\sigma$.}
    \end{subfigure}
    \vskip\baselineskip
    \begin{subfigure}[t]{0.48\textwidth}
        \centering
        \includegraphics[width=\textwidth]{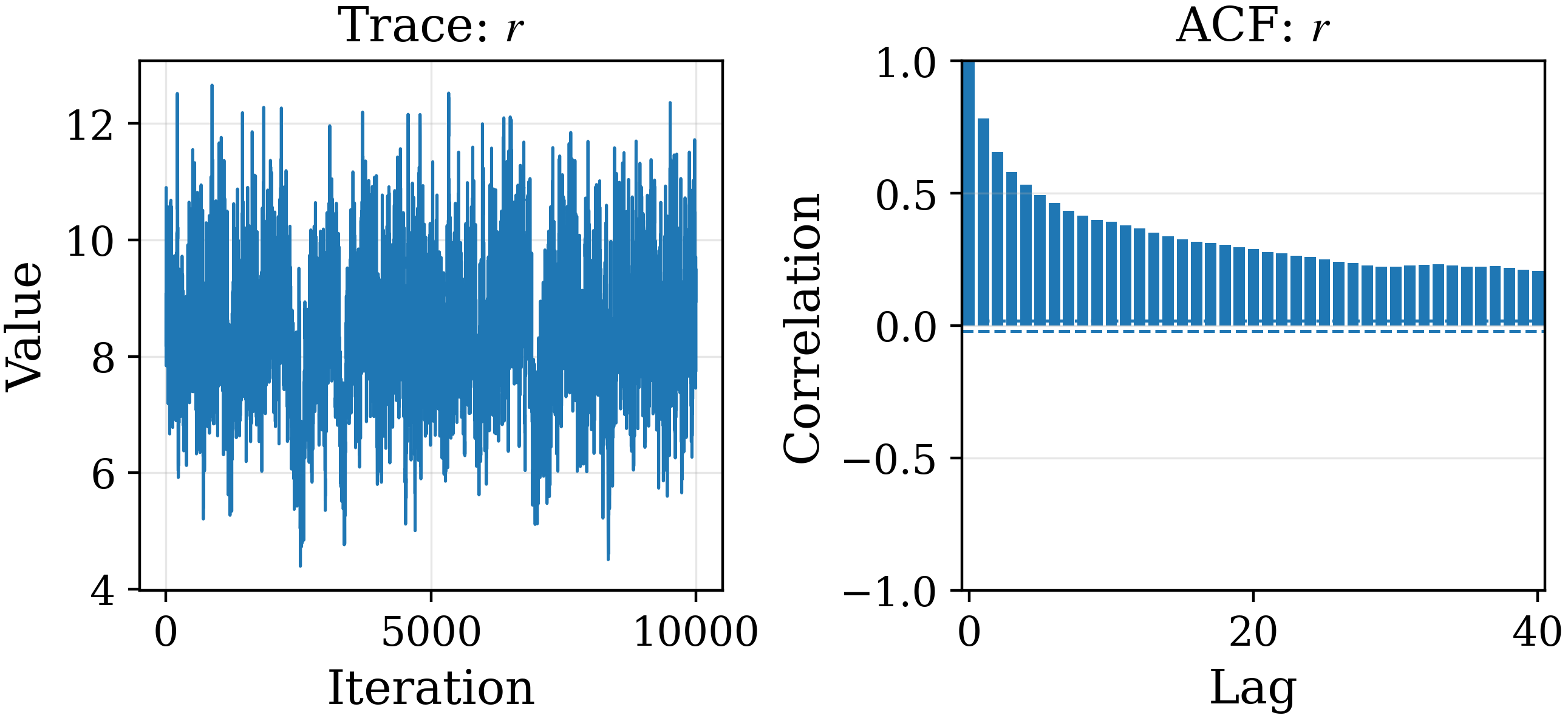}
        \caption{Set radius $r$.}
    \end{subfigure}
    \hfill
    \begin{subfigure}[t]{0.48\textwidth}
        \centering
        \includegraphics[width=\textwidth]{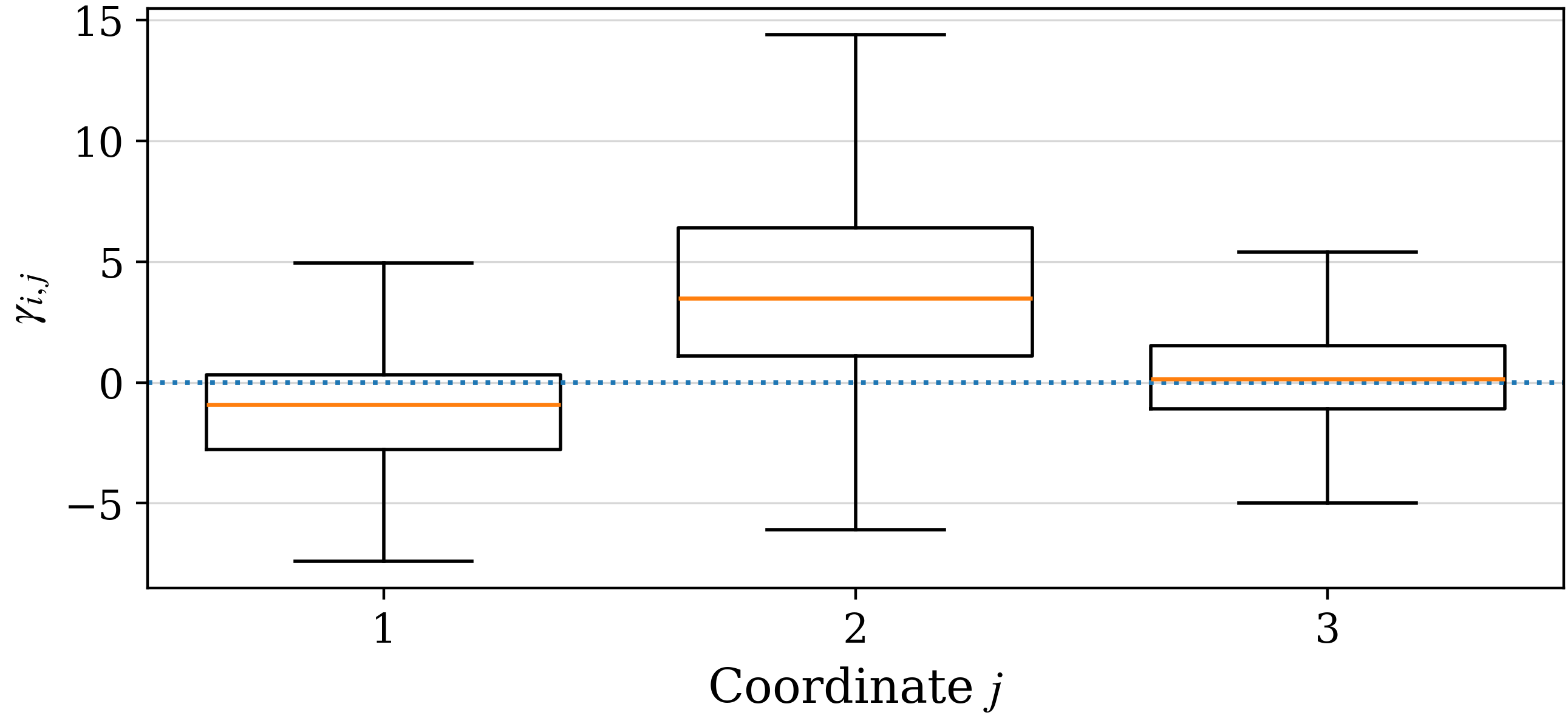}
        \caption{Posterior distribution of $\gamma_{i}$ for one subject.}
    \end{subfigure}
    \caption{Latent variable model with a continuous shrinkage prior on the random effects shows improved mixing of Markov chains in the posterior estimation when the dimension is low. \label{fig:latent_variable_linear_regression_gdp_low_d}}
  \end{figure}

\medskip
\noindent\textbf{Ellipsoid smoothing for monotone curve fitting}

We next consider monotone curve fitting using the ellipsoid smoothing model in \Cref{ex:ellipsoid}, with the affine constraint specified by $A$ as the $d \times (d-1)$ difference matrix and $b = 0$.  We randomly generate a monotone response curve from $d$ regularly spaced points $x_j$ in $[0,100]$, standardize the response, and add independent noise from $\text{N}(0,0.25)$. For simplicity, we use only one curve, and omit the index $i$.

\begin{figure}[H]
  \centering
  \begin{subfigure}[b]{0.5\textwidth}
    \centering
    \includegraphics[clip,width=\textwidth]{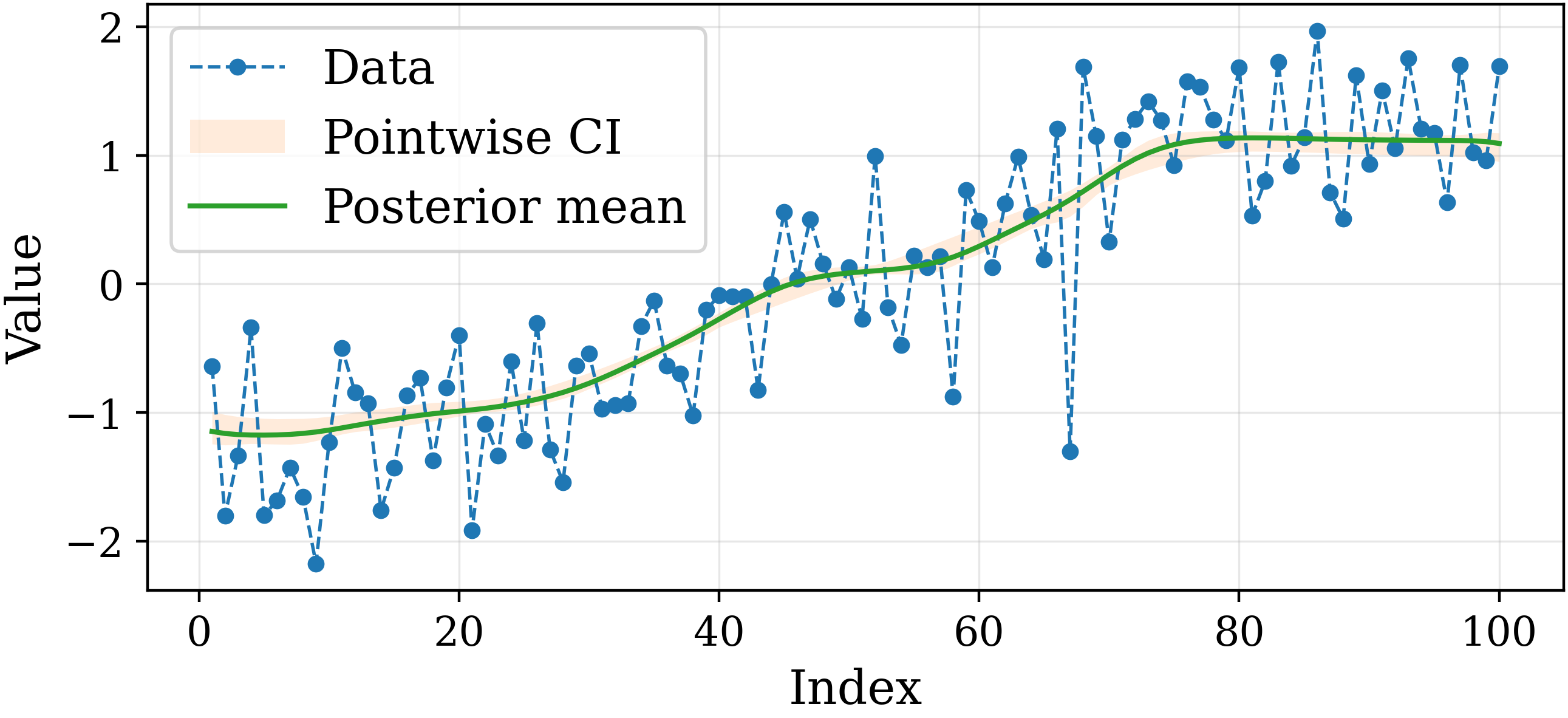}
    \caption{Estimated monotone curve using ellipsoid smoothing.}
\end{subfigure}\\
  \begin{subfigure}[b]{0.32\textwidth}
      \centering
      \includegraphics[trim=0 0 0 15, clip,width=\textwidth]{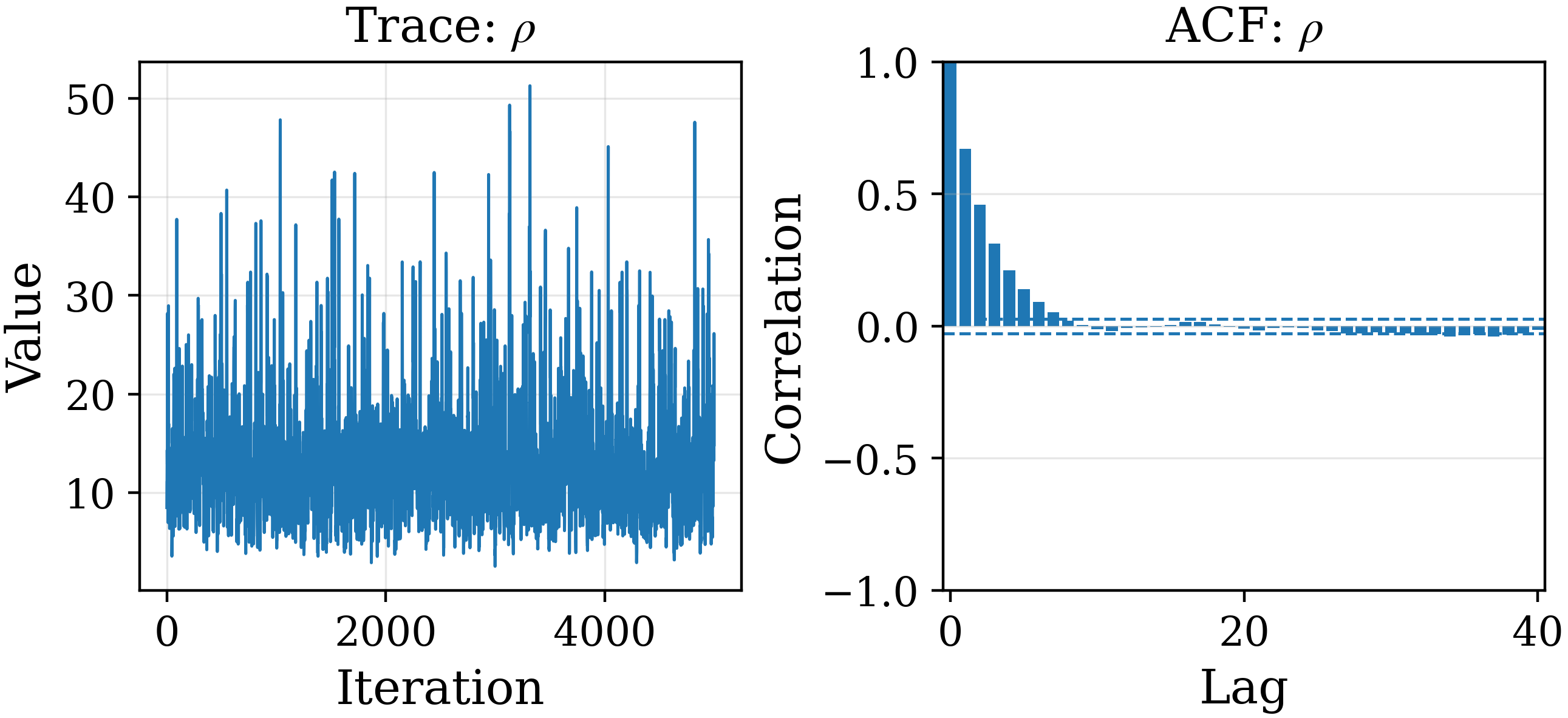}
      \caption{Bandwidth $\rho$ from ellipsoid smoothing.}
  \end{subfigure}
  \hfill
  \begin{subfigure}[b]{0.32\textwidth}
      \centering
      \includegraphics[trim=0 0 0 15, clip,width=\textwidth]{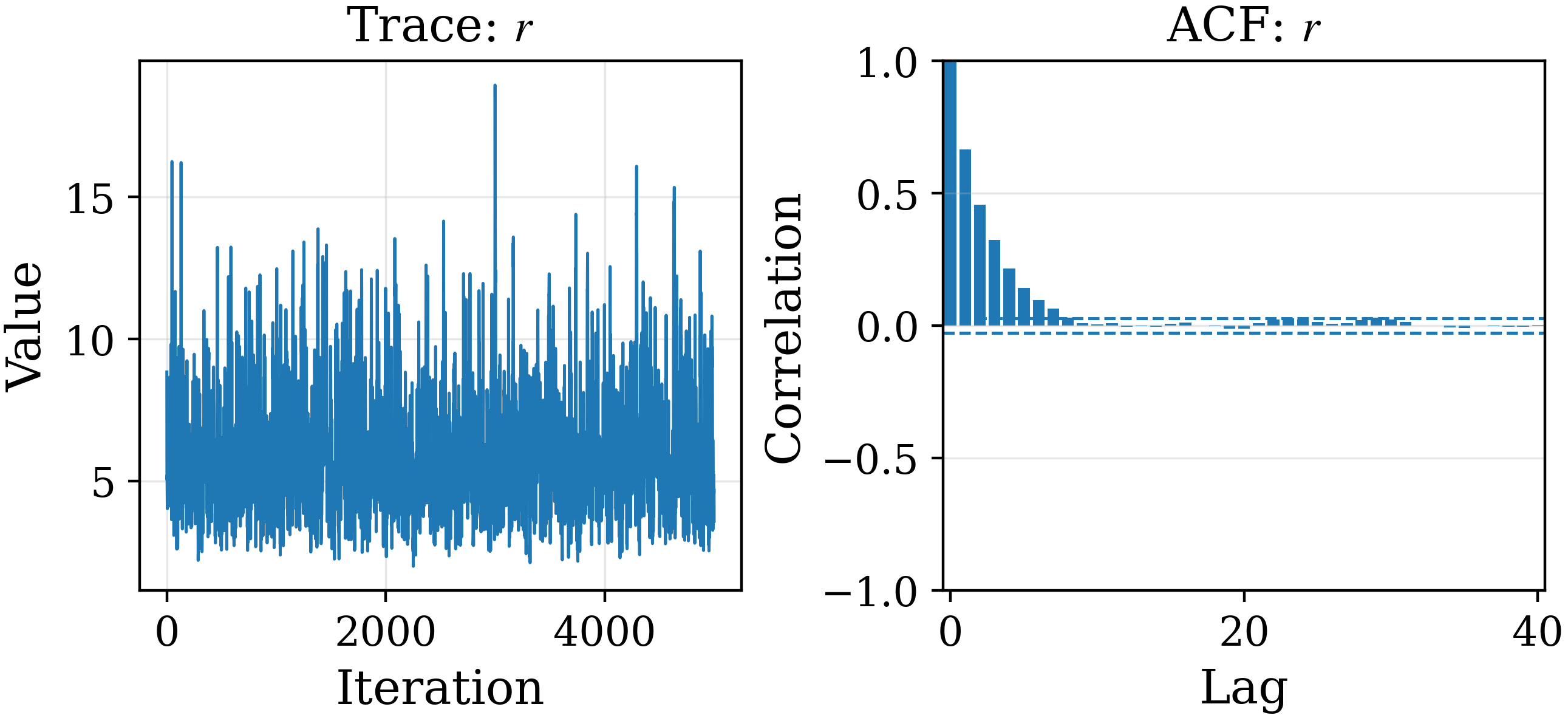}
      \caption{Ellipsoid radius $r$ from ellipsoid smoothing.}
  \end{subfigure}
  \hfill
  \begin{subfigure}[b]{0.32\textwidth}
      \centering
      \includegraphics[trim=0 0 0 15, clip,width=\textwidth]{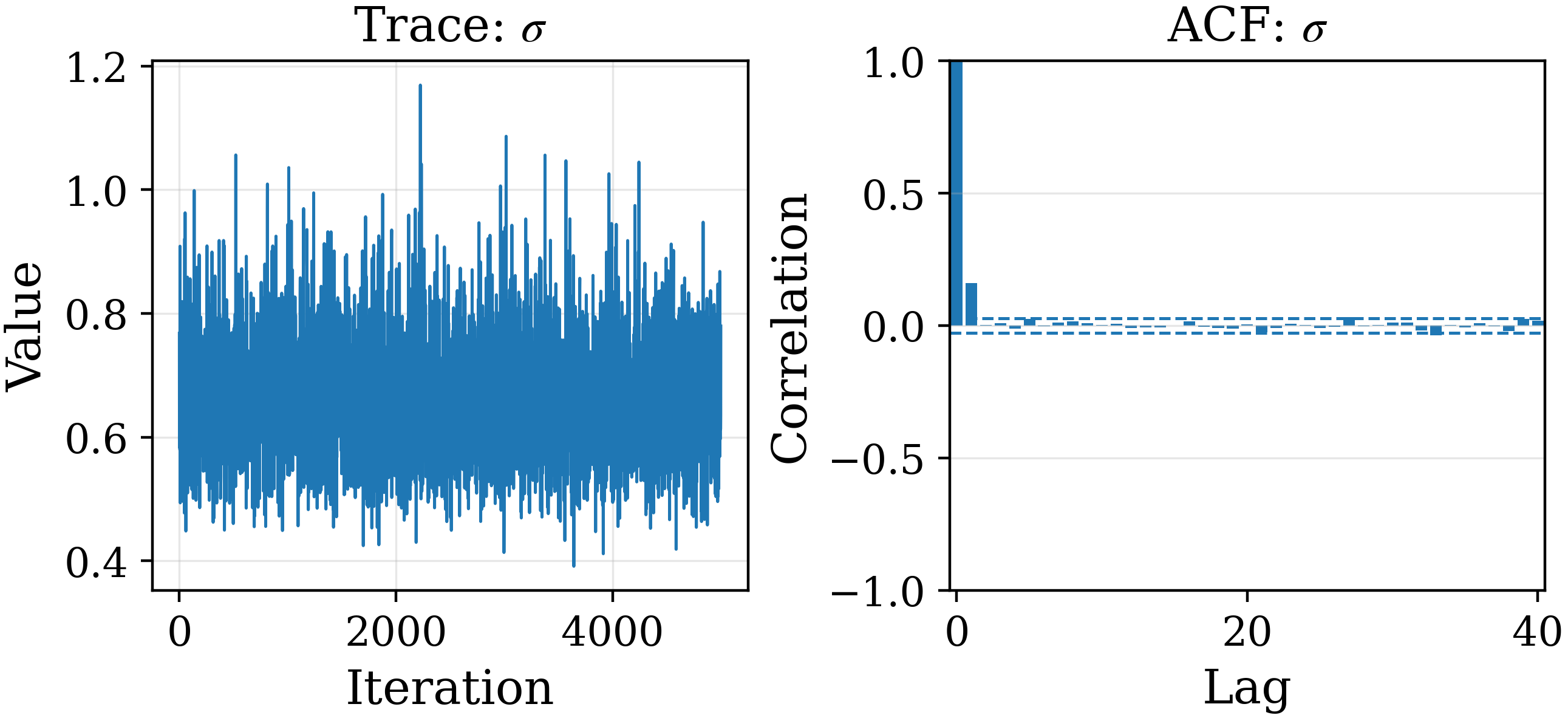}
      \caption{Distance scale $\sigma$ from ellipsoid smoothing.}
  \end{subfigure}\\
  \begin{subfigure}[b]{0.32\textwidth}
    \centering
    \includegraphics[trim=0 0 0 15, clip,width=\textwidth]{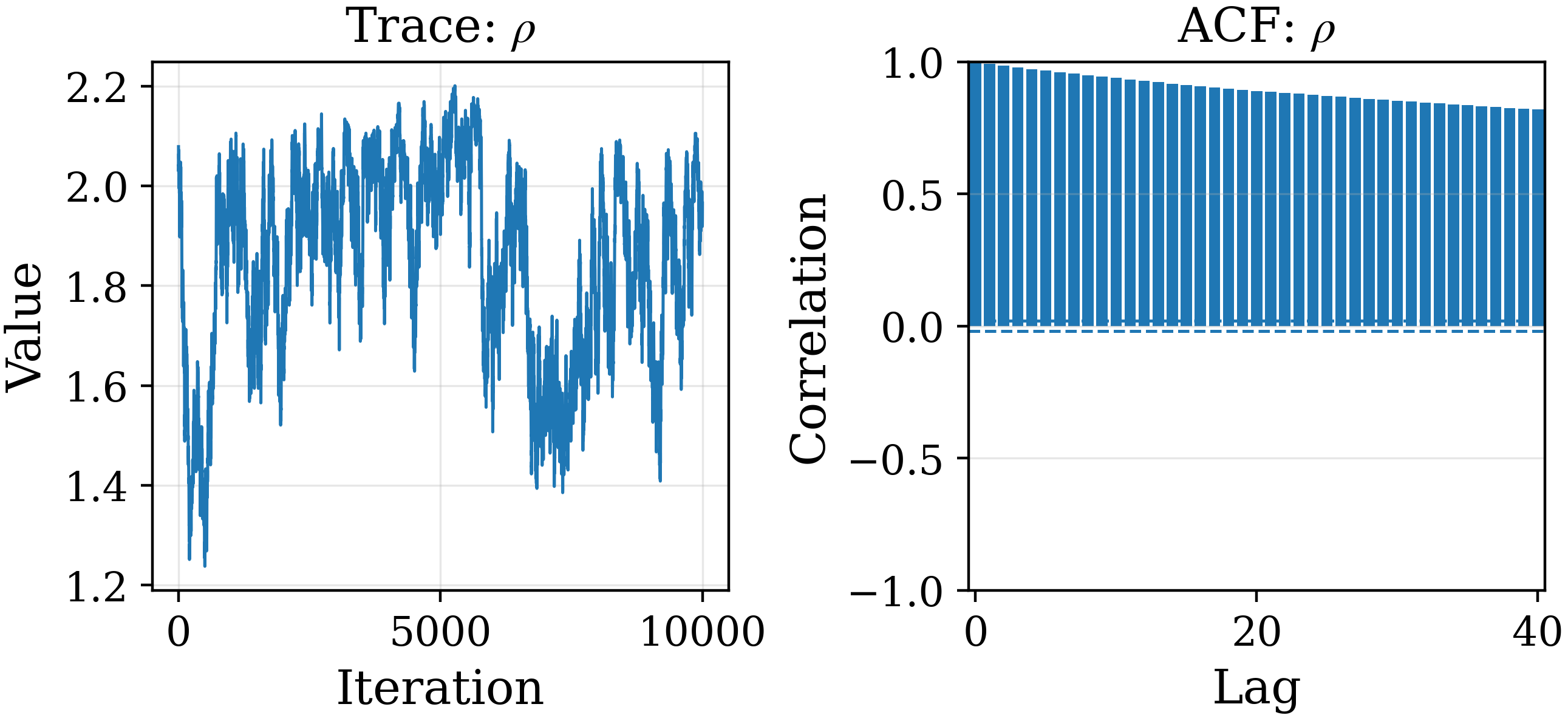}
    \caption{Bandwidth $\rho$ from monotone Gaussian process.}
\end{subfigure}
\hfill
\begin{subfigure}[b]{0.32\textwidth}
    \centering
    \includegraphics[trim=0 0 0 15, clip,width=\textwidth]{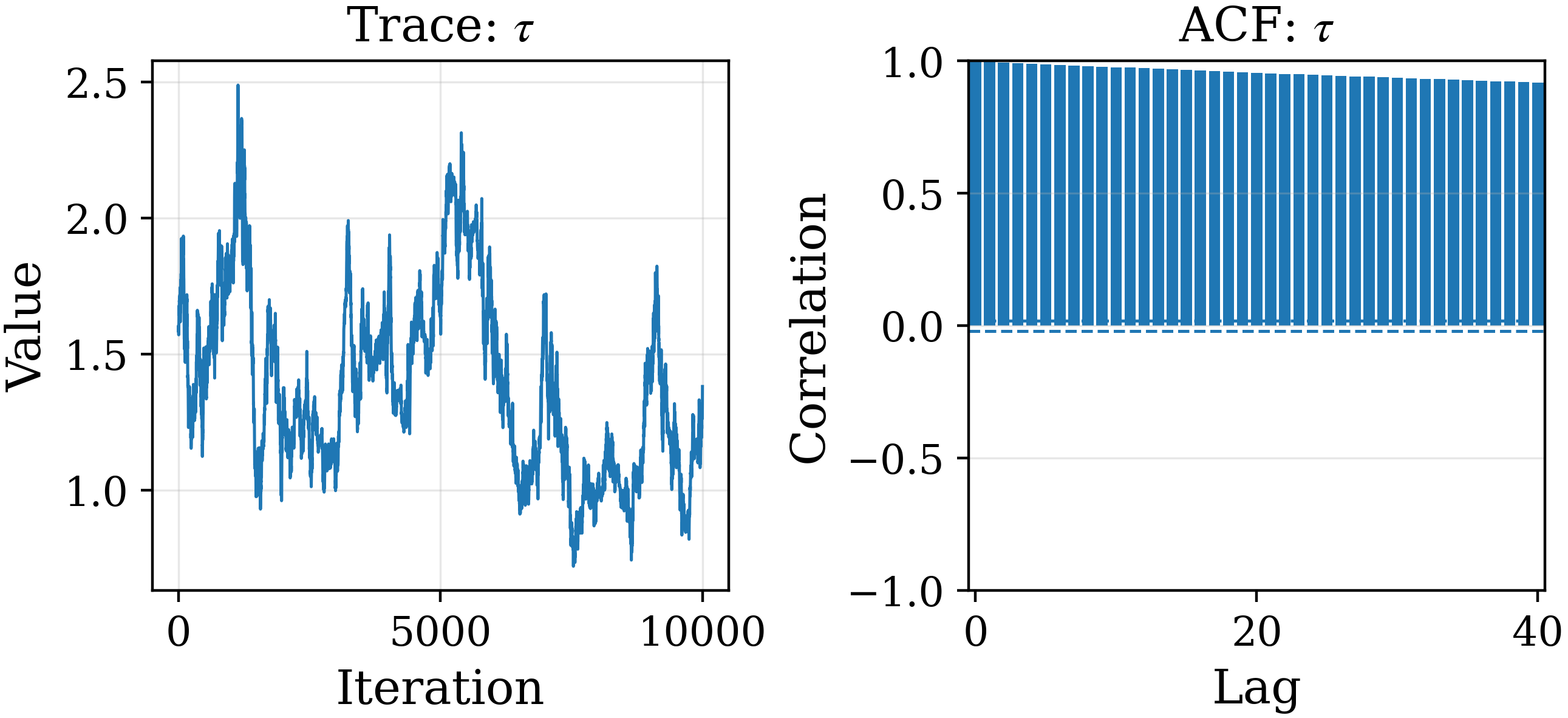}
    \caption{Covariance scale $\tau$ from monotone Gaussian process.}
\end{subfigure}
\hfill
\begin{subfigure}[b]{0.32\textwidth}
    \centering
    \includegraphics[trim=0 0 0 15, clip,width=\textwidth]{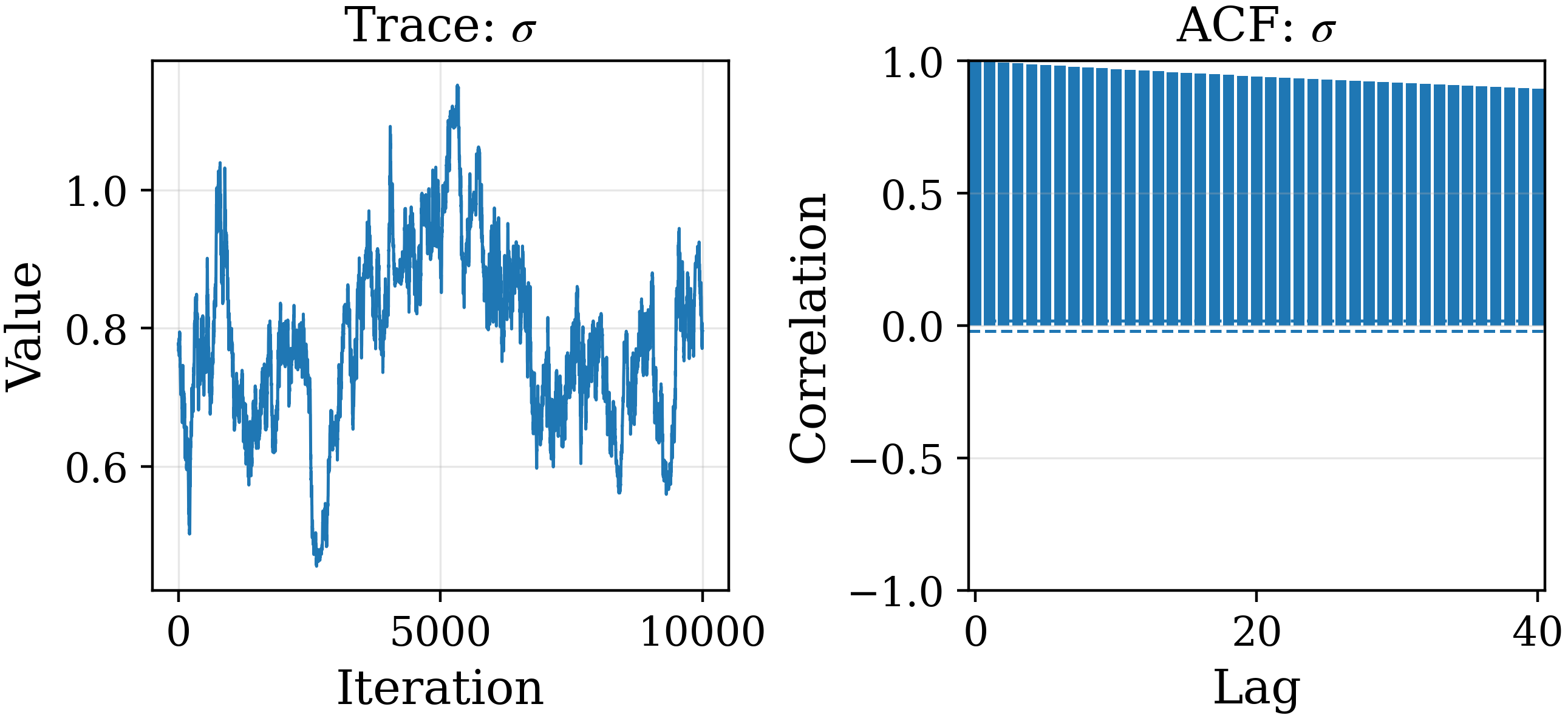}
    \caption{Noise scale $\sigma$ from monotone Gaussian process.}
\end{subfigure}
  \caption{Comparison on the computational performance of the distance-to-set model and the monotone Gaussian process model. \label{fig:ellipsoid_smoothing_comparison}}
\end{figure}

We fit the ellipsoid smoothing model  parameterized by a Gaussian kernel function $Q_{j,k}= \exp\{-(x_j-x_k)^2/\rho^2\}$, with prior $\rho\sim \text{InvGaussian}(1,1)$, with the same prior for $\sigma$ and $r$ as in \Cref{sec:prior_specification}. For comparison, we also fit a monotone Gaussian process model \citep{agrell2019gaussian}, which uses a projective Gaussian process prior underlying $y_i = f(x_i) + \epsilon_i$. Specifically, the monotonicity of $f(x_i)$ is enforced by  applying a monotone transformation to a zero-mean Gaussian process prior with covariance $\tau Q_{j,k}$, and $\epsilon_j \sim \text{N}(0, \sigma)$.
We parameterize $Q$ and assign priors for $\sigma$ and $\rho$ in the same way as in the ellipsoid smoothing model, and assign  $\tau\sim \text{InvGaussian}(1,1)$ for the covariance scale. Note that this approach differs from
the {\em projected posterior} method  \citep{lin2014bayesian}, which involves first sampling from the unconstrained Gaussian process posterior and then projecting the samples onto the monotone space. The projective prior model is more relevant
because it also requires constrained latent-function update during posterior sampling, hence a direct comparison with the distance-to-set model.

We estimate both posteriors using Barker-MH algorithm with 100,000 iterations, discard the first 50,000 iterations as burn-in, and thin the chain by 10. Both models took about 5 minutes to run on a laptop. As shown in Figure \ref{fig:ellipsoid_smoothing_comparison}, the distance-to-set model shows excellent mixing performance for all three parameters, while the monotone Gaussian process model
exhibits substantially slower mixing
due to the need for updating the relatively high-dimensional latent variable.